\newtheorem{theorem}{Theorem}
\newtheorem{corollary}{Corollary}
\newtheorem{proposition}{Proposition}
\newtheorem{lemma}{Lemma}
\newtheorem{definition}{Definition}
\newcommand{\diag}[1]{\mathop{\mathrm{diag}}\left(#1\right)}
\newcommand{\1}{\mathds{1}}
\newcommand{\E}{\mathbb{E}}
\newcommand{\R}{\mathbb{R}}
\newcommand{\bP}{\mathbb{P}}
\newcommand{\T}{^\top}
\newcommand{\p}{\mathcal{p}}
\newcommand{\cd}{\mathcal{d}}
\newcommand{\cl}{\mathcal{l}}
\newcommand{\cP}{\mathcal{P}}
\newcommand{\Q}{\mathcal{Q}}
\newcommand{\cO}{\mathcal{O}}
\newcommand{\cR}{\mathcal{R}}
\newcommand{\D}{\mathcal{D}}
\newcommand{\A}{\mathcal{A}}
\newcommand{\cE}{\mathcal{E}}
\newcommand{\B}{\normalfont\textbf{B}}
\newcommand{\W}{\normalfont\textbf{W}}
\newcommand{\tp}{\normalfont\textbf{p}}
\newcommand{\tP}{\normalfont\textbf{P}}
\newcommand{\td}{\normalfont\textbf{d}}
\newcommand{\tD}{\normalfont\textbf{D}}
\newcommand{\Din}{D^\text{\normalfont in}}
\newcommand{\Dout}{D^\text{\normalfont out}}
\newcommand{\din}{d^\text{\normalfont in}}
\newcommand{\dout}{d^\text{\normalfont out}}
\newcommand{\thi}{\theta^\text{\normalfont in}}
\newcommand{\tho}{\theta^\text{\normalfont out}}
\newcommand{\Ti}{\Theta^\text{\normalfont in}}
\newcommand{\To}{\Theta^\text{\normalfont out}}
\newcommand{\Di}{\mathcal{D}^\text{\normalfont in}}
\newcommand{\Do}{\mathcal{D}^\text{\normalfont out}}
\newcommand{\di}{\mathcal{d}^\text{\normalfont in}}
\newcommand{\dou}{\mathcal{d}^\text{\normalfont out}}
\newcommand{\tDi}{\normalfont\textbf{D}^\text{\normalfont in}}
\newcommand{\tDo}{\normalfont\textbf{D}^\text{\normalfont out}}
\newcommand{\tdi}{\normalfont\textbf{d}^\text{\normalfont in}}
\newcommand{\tdo}{\normalfont\textbf{d}^\text{\normalfont out}}
\newcommand{\Qw}{Q^{(w)}}
\newcommand{\Pw}{P^{(w)}}
\newcommand{\pw}{p^{(w)}}
\newcommand{\QwG}{Q^{(w,G)}}
\newcommand{\PwG}{P^{(w,G)}}
\newcommand{\logN}{\log{N}}
\newcolumntype{H}{>{\setbox0=\hbox\bgroup}c<{\egroup}@{}}
\newcommand{\underdcsbm}{Under the population DC-SBM with $K$ blocks and parameters $\left\{\B, Z, \Theta\right\}$, }
\newcommand{\underddcsbm}{Under the population directed DC-SBM with $K$ blocks and parameters $\left\{\B, Z, \Ti, \To\right\}$, }
\newcommand{\dcsbmset}{Let $G=(V,E)$ be a graph of $N$ vertices generated from the DC-SBM with $K$ blocks and parameters $\left\{\B, Z, \Theta\right\}$. }
\newcommand{\beginsupplement}{%
	\setcounter{section}{0}
	\renewcommand{\thesection}{S\arabic{section}}
	\setcounter{table}{0}
	\renewcommand{\thetable}{S\arabic{table}}
	\setcounter{figure}{0}
	\renewcommand{\thefigure}{S\arabic{figure}}
	\setcounter{lemma}{0}
	\renewcommand{\thelemma}{S\arabic{lemma}}
	\setcounter{theorem}{0}
	\renewcommand{\thetheorem}{S\arabic{theorem}}
	\setcounter{proposition}{0}
	\renewcommand{\theproposition}{S\arabic{proposition}}
	
	\renewcommand{\figurename}{Supplementary Figure}
	\renewcommand{\tablename}{Supplementary Table}
}
\title{Targeted sampling from massive block model graphs with personalized PageRank\thanks{This research is supported in part by NSF Grants DMS-1612456 and DMS-1916378 and ARO Grant W911NF-15-1-0423.}}
\author[1]{Fan Chen}
\author[2]{Yini Zhang}
\author[1]{Karl Rohe}
\affil[1]{Department of Statistics}
\affil[2]{School of Journalism and Mass Communication}
\affil[ ]{University of Wisconsin, Madison, WI 53706, USA}
\date{}
\begin{document}

\maketitle

\begin{abstract}
		
This paper provides statistical theory and intuition for Personalized PageRank (PPR), a popular technique that samples a small community from a massive network. 
We study a setting where the entire network is expensive to thoroughly obtain or maintain, but we can start from a seed node of interest and ``crawl'' the network to find other nodes through their connections.
By crawling the graph in a designed way, the PPR vector can be approximated without querying the entire massive graph, making it an alternative to snowball sampling.
Using the degree-corrected stochastic block model, we study whether the PPR vector can select nodes that belong to the same block as the seed node.
We provide a simple and interpretable form for the PPR vector, highlighting its biases towards high degree nodes outside of the target block. We examine a simple adjustment based on node degrees and establish consistency results for PPR clustering that allows for directed graphs. 
These results are enabled by recent technical advances showing the element-wise convergence of eigenvectors. 
We illustrate the method with the massive Twitter friendship graph, which we crawl using the Twitter API. We find that (i) the adjusted and unadjusted PPR techniques are complementary approaches, where the adjustment makes the results particularly localized around the seed node and (ii) the bias adjustment greatly benefits from degree regularization.
\end{abstract}

\textbf{\textit{Keywords}} {Community detection; Degree-corrected stochastic block model; Local clustering; Network sampling; Personalized PageRank}
	
\section{Introduction}

Much of the literature on graph sampling has treated the entire graph, or all of the people in it, as the target population. However, in many settings, the target population is a small community in the massive graph. 
For example, a key difficulty in studying social media is to gather data that is sufficiently relevant for the scientific objective.
A motivating example for this paper is to sample the Twitter friendship graph for accounts that report and discuss current political events.\footnote{See our website \url{http://murmuration.wisc.edu} which does this.} This corresponds to sampling and identifying multiple different communities, each a potentially small part of the massive network.
In such an application, the graph is useful for two primary reasons. First, via link tracing, we can find potential members of the target population. Second, the graph connections are informative for identifying community membership. Throughout, we presume that the sampling is initiated around a ``seed node'' that belongs to the target community of interest.  
%Randomized link-tracing from the seed node includes highly connected nodes outside the target community before moderately connected nodes within the target community. 

Personalized PageRank (PPR) can be thought of as an alternative to snowball sampling, a popular technique for gathering individuals close to the seed node.  For some $d\ge 0$, snowball sampling gathers all individuals who are $d$ friends away from the seed.  This process has two competing flaws for our application which are addressed by PPR.  First, snowball sampling fails to account for the density of common friendships. For example, perhaps $i$ and $j$ are both one friend removed from the seed, but $i$ has 10 friends in common with the seed, while $j$ only has 1 friend in common.  It seems natural to suppose that $i$ is closer than $j$ to the seed.  Hence, the metric for snowball sampling can be misleading. Second, the snowball sample size grows very quickly with $d$. For example, under the ``six degrees of separation'' phenomenon \citep{watts1998collective, mark2006structure}, snowballing gathers the entire graph if $d\ge 6$.  

PPR gives a sample that is more localized around the seed node. The PPR vector is defined as the stationary distribution of what we call a \textit{personalized random walk} \citep{page1999pagerank}. 
%	PPR gives the inclusion probabilities for the following sampling procedure, what we call a \textit{personalized random walk} \citep{page1999pagerank}. 
At each step of the personalized random walk, the random walker returns to the seed node with probability $\alpha$, called the teleportation constant, and with probability $1-\alpha$, the random walker goes to  an adjacent node that is chosen uniformly at random. 
Consider the stationary distribution of this process as giving the inclusion probability for a sample of size 1. This is the PPR vector. 
PPR naturally leads to a clustering algorithm, where the cluster is made up of the nodes with a large inclusion probability. 
To quickly approximate the PPR vector, \citet{berkhin2006bookmark} proposed an algorithm that only examines nodes with large inclusion probabilities (i.e. nodes near the seed).
As such, PPR is particularly useful for its computational efficiency -- the running time and the amount of data it requires is nearly linear in the size of the output cluster, which is typically much smaller than that of the entire graph. 
Due to the local nature of the algorithm, it can be used to study large graphs such as Twitter where the entire graph is not available, but where one can query to find the connections to any small set of nodes.

One way to conduct local clustering is by exploring and ranking the nearby nodes of a seed node. \citep{andersen2006communities, andersen2009finding, alamgir2010multi, gharan2012approximating}. 
\citet{spielman2004nearly} pioneered local clustering  by defining nearness as the landing probability of a random walk starting from the seed node.
Their algorithm's guarantee was improved in follow-up work by \citet{andersen2006local} which proposed using an approximate PPR vector. 
Local algorithms can be applied recursively to solving more complicated problems such as graph partitions (k-way partitions) \citep{spielman1996spectral, karypis1998multilevelk}, and has many fruitful applications
%in social, information, and biochemical studies, among others 
\citep{jeh2003scaling, macropol2009rrw, liao2009isorankn, gupta2013wtf, gleich2015pagerank}, particularly when it comes to sampling and studying massive graphs. 
%\cite{andersen2006local} also proposed a way to simultaneously approximate the PPR vector and sample the network. \citep{kloster2014heat}

Along with the widespread use of PPR, there has been recent work to study its statistical estimation properties under a statistical model with latent community structure.
%To fully understand the behavior of PPR methods and to identify the conditions under which they reveal the latent community structure, this paper studies the statistical estimation properties of PPR under a statistical model with latent community structure. 
Beyond the scope of local clustering, \citet{kloumann2017block} showed that the PPR vector is
%	 itself was 
asymptotically equivalent to optimal linear discriminant analysis under the stochastic block model (SBM) \citep{holland1983stochastic}, assuming a symmetry condition on the block structure. 
We add to this statistical understanding of PPR by providing a simple and more general representation for PPR vectors that allows for different block sizes, more than two blocks, degree heterogeneity, and directed edges.  
In order to understand the effects of heterogeneous node degrees, this paper uses the degree-corrected stochastic block model (DC-SBM) \citep{karrer2011stochastic} and examines when the PPR clustering recovers nodes within the same block as the seed node (local cluster).
Breaking the symmetry that is imposed by  \citet{kloumann2017block} reveals additional insight. 
In particular, given a seed node in the first block, we show that PPR is likely to contain high degree nodes outside of that block.
We study an adjustment that was previously proposed in \cite{andersen2006local}.  We show how this adjustment can correct for the bias.  We illustrate these ideas with examples from the Twitter friendship graph.  
%	These results provide an alternative understanding for PPR

%Meanwhile, an active line of network sampling research dates back to \citet{leskovec2006sampling}, a thorough study on a number of sampling methods in empirical networks. 
%They found that the PPR based sampling (called Random Walk sampling in the paper) performed best overall for the Scale-down sampling goal, which is to find a sample graph that is most similar to the original network. 
%PPR based sampling also applies to streaming or initially unknown graphs, with applications in numerous areas \citep{sarma2011estimating, salehi2012sampling, zhang2016approximate, lofgren2016personalized, leskovec2016snap}.

\subsection{An illustrative example in social media}

Local clustering using PPR is particularly well suited to studying current political events on Twitter because (i) the accounts that discuss politics or current events are a small part of the entire Twitter graph, (ii) it is reasonable to believe that the accounts in our target population are well connected to one another in the Twitter friendship graph, and (iii) while the entire Twitter graph is not publicly available, the way that PPR (Algorithm \ref{alg:ppr} and \ref{alg:ppr_dir}) queries the graph matches the Twitter API protocol which is the primary mode of access for researchers.

While we do not suppose that the Twitter friendship graph is sampled from a DC-SBM, Twitter does have all of the heterogeneities that our results identify as important. 
The Twitter friendship graph is composed of users who can freely follow others but will not necessarily be followed back, or friended. Such asymmetry between following and friending forms a directed graph where follower count indicates status -- some popular/high-status nodes command millions of followers while the majority of nodes are followed by far fewer. 

\begin{table} 
	\caption{\label{tab:egppr} Top 15 handles by PPR clustering. Column names represent seed nodes, and the sampled nodes are ranked by PPR values, with teleportation constant $\alpha=0.15$ uniformly.} 
	\centering
	\footnotesize
	\fbox{%
		\begin{threeparttable}
			\begin{tabular}{llll}
				%				\hline
				\noalign{\vskip .3mm}  
				& \textbf{@CNN} & \textbf{@BreitbartNews} & \textbf{@dailykos} \\ 
				\hline
				\noalign{\vskip .7mm}  
				1	&	CNN Breaking News	&	Alex Marlow	&	Hillary Clinton	\\
				2	&	CNN International	&	AndrewBreitbart	&	Stephen Colbert	\\
				3	&	Wolf Blitzer	&	Big Hollywood	&	Rachel Maddow MSNBC	\\
				4	&	Anderson Cooper	&	Big Government	&	Jake Tapper	\\
				5	&	Christiane Amanpour	&	James O'Keefe	&	Joy Reid	\\
				6	&	Pope Francis	&	Sean Hannity	&	Chris Hayes	\\
				7	&	Dr. Sanjay Gupta   &	Raheem	&	Emma Gonzlez	\\
				8	&	CNNMoney	&	Joel B. Pollak	&	Markos Moulitsas	\\
				9	&	Jake Tapper	&	Ann Coulter	&	Maggie Haberman	\\
				10	&	Brian Stelter	&	Allum Bokhari	&	Sarah Silverman	\\
				11	&	CNN Newsroom	&	Ben Kew	&	Lin-Manuel Miranda	\\
				12	&	Dana Bash	&	Brandon Darby	&	Elizabeth Warren	\\
				13	&	CNN Politics	&	Noah Dulis	&	Jon Favreau	\\
				14	&	BBC Breaking News	&	Michelle Malkin	&	Michelle Obama	\\
				15	&	Brooke Baldwin	&	Nate Church	&	Bill Clinton	\\
				\hline
			\end{tabular}
			\begin{tablenotes}[flushleft]
				\item 
				Through the PPR vector, the top 15 handles returned to each of the three seed nodes fit well with the characteristics of the seed nodes. They are popular/high-status handles either directly related to the seed nodes or align with their political leanings. This shows the effectiveness of clustering via the PPR vector. It also shows the PPR vector's preference for highly connected nodes.
				%				\vspace{1mm}
			\end{tablenotes}
		\end{threeparttable}
	}
\end{table}

\begin{table} 
	\caption{\label{tab:egrppr} Top 15 handles by adjusted PPR (with regularization) sampling. Column names represent seed nodes, and the sampled nodes are ranked by adjusted PPR values, with teleportation constant $\alpha=0.15$ uniformly.} 
	\centering
	\footnotesize
	\fbox{%
		\begin{threeparttable}
			\begin{tabular}{llll}
				%				\hline
				\noalign{\vskip .3mm}  
				& \textbf{@CNN} & \textbf{@BreitbartNews} & \textbf{@dailykos} \\ 
				\hline
				\noalign{\vskip .7mm}  
				1	&	PowerZ	&	Robert	&	Two Thanks	\\
				2	&	Elissa Weldon	&	Lee Peace	&	Catherine Daligga	\\
				3	&	Tess Eastment	&	Wynn Marlow	&	exmearden	\\
				4	&	Chris\_Dawson	&	Logan Churchwell	&	Faith Gardner	\\
				5	&	carol kinstle	&	Peter Schweizer	&	Andrew Thornton	\\
				6	&	erinmclaughlin	&	Breitbart Sports	&	UnreasonableFridays	\\
				7	&	Taylor Ward	&	Jon Fleischman	&	DKos Top Comments	\\
				8	&	Jennifer Z. Deaton	&	Nate Church	&	2016 relitigator	\\
				9	&	Pam Benson	&	Daniel Nussbaum	&	Daily Kos	\\
				10	&	amy entelis	&	Noah Dulis	&	Walter Einenkel	\\
				11	&	Grace Bohnhoff	&	Jon David Kahn	&	Candelaria Vargas	\\
				12	&	kate lazarus	&	Breitbart California	&	Mara Schechter	\\
				13	&	Newstron	&	Ken Klukowski	&	Emi Feldman	\\
				14	&	Becky Brittain	&	pam key	&	The Soulful Negress	\\
				15	&	CNN Ballot Bowl	&	Auntie Hollywood	&	Kim Soffen	\\
				\hline
			\end{tabular}
			\begin{tablenotes}[flushleft]
				\item 
				%Through the PPR vector with adjustment, the top 15 handles returned to each of the three seed nodes range from the outlet's journalists, editors as well as related writers/campaigners/activists. This shows the effectiveness of adjustment.
				After adjustment, PPR returns a more localized cluster. Instead of the highly visible public faces of the three seed organizations, the individuals in this table serve a central role to the internal organization (e.g. editors and writers).  Depending on the application, one might prefer the results in Table \ref{tab:egppr} or Table \ref{tab:egrppr}.  
			\end{tablenotes}
%			\vspace{1mm}
		\end{threeparttable}
	}
\end{table}

The theoretical results in this paper suggest that such degree heterogeneities will make the PPR vector biased for detecting block memberships (Theorem \ref{explicit2}).  We propose a way to adjust for this bias (Algorithm \ref{alg:lc}) and show that it is a consistent estimator (Corollary  \ref{cor:main}).  Not surprisingly, this section demonstrates that PPR with and without the bias adjustment give fundamentally different results on the Twitter graph.  However, depending on the application, the biases in the PPR vector might be advantageous.  In this way, PPR with and without the bias adjustment are complementary, not competing, approaches. 

%	However, as this section shows, are complementary approaches.  They both obtain reasonable results that are 
%	not competing approaches in which one is uniformly better.  	
%	suggests the uniform advantages of the PPR adjustment, the application to Twitter shows that PPR with versus without adjustment are not competing approaches, but rather complementary approaches. 
To illustrate, Table \ref{tab:egppr} displays the top 15 handles ranked by the PPR vector (without adjustment) for three different seed nodes: @CNN, @BreitbartNews, and @dailykos, the Twitter accounts of three different types of media outlets that exhibit distinct political leanings (legacy broadcast news, online right-wing and online left-wing). 
For @CNN, all top 15 handles ranked by the PPR vector are its subsidiary accounts 
%	(like CNN Breaking News and CNN International) 
and its celebrity reporters and anchors (like Wolf Blitzer and Anderson Cooper), except for one account, Pope Francis, who enjoys an extremely larger following. The top 15 handles for @BreitbartNews are a mixed bag of influential conservatives (like Sean Hannity and Ann Coulter) and Breitbart's editors/writers. However, the top 15 handles returned to @dailykos by the PPR vector are all famous liberal personalities not directly affiliated with Daily Kos, but one, its founder Markos Moulitsas. Those people range from democratic politicians to liberal media personalities and journalists, such as Hillary Clinton, Stephen Colbert, and Rachel Maddow. All the handles align with the characteristics of their respective media outlets, attesting to the clustering effectiveness. However, it is worth noting that the top handles ranked by the PPR vector tend to be popular handles with millions of followers. This shows that the PPR vector's preference for high in-degree nodes. 

In contrast, for each of the three seeds, adjusted PPR finds accounts that are more central to the internal functioning of these organizations. 
Table \ref{tab:egrppr} lists those accounts. 
The bias adjustment also greatly benefits from a degree regularization \citep{qin2013regularized}.
For @CNN, those handles include primarily its own staff/producers/journalists (like Elissa Weldon, Chris\_Dawson, and Grace Bohnhoff), a freelance journalist (Tess Eastment). The pattern is similar for @BreitbartNews and @dailykos, their top 15 handles including their own journalists, editors as well as related writers/campaigners/activists. The general pattern is that the adjustment returns editors, journalists and staff working within each media outlet. As such, the adjustment is useful for identifying a more localized cluster.

%	a more local and exclusively relevant cluster, i.e. 

\subsection{Main contributions}
The main contributions of the paper are (a) a simple and interpretable form for the PPR vector and (b) a statistical guarantee for clustering with the adjusted PPR vector.

\begin{enumerate}[(a)]
	\item This paper reveals a simple two-stage form of the PPR vector under the population (expectation) DC-SBM. 
	Consider the $v$-th element of the PPR vector as the probability of sampling node $v$ in a sample of size 1 from the stationary distribution of the personalized random walk. This inclusion probability is akin to stratified sampling:
	\begin{quote}
		\em The inclusion probability for node $v$ is the product of two separate probabilities. First, the probability that the personalized random walk samples any node in $v$'s block.  Second, the probability that the personalized random walk selects node $v$, conditional on sampling that block.
	\end{quote}
	Both of these probabilities have simple expressions. If there are $K$ blocks in the graph, then the block-wise probability comes from the PPR vector of a graph with $K$ vertices, with edge weights specified by the ``block connection matrix'' in the DC-SBM. The second probability is proportional to the degree of node $v$.
	In addition to the population results, Theorem \ref{thm:concPPR} demonstrates that when the graph is random, the PPR vector concentrates around its population (expectation) under certain conditions.
	%	Based on these observations, the paper also extends the previous results of asymptotic equivalence between PPR and linear discriminant analysis under some particular DC-SBM. 

	\item This paper identifies two sources of bias of using a PPR vector for local clustering under the DC-SBM -- the ancillary effects of heterogeneous node degrees and block degrees.  
	With this finding, the paper examines a simple bias adjustment that remedies the two biases simultaneously and suggests conditions when the adjusted PPR can be used to return the correct local cluster. In other words,
	\begin{quote}
		\em PPR clustering with the adjustment achieves the precise identification of the local cluster, provided the graph is sufficiently dense.
	\end{quote}
	These results establish statistical performance (consistency) of PPR clustering under the DC-SBM, in the sparse regime where the minimum expected degree grows logarithmically with the number of nodes in the network.
	Our results provide an element-wise perturbation bound for PPR vectors, that allows the number of clusters to grow with the size of graphs, and generalize to a directed graph setting as PageRank does.
\end{enumerate}

The rest of the paper proceeds as follows. Section \ref{pre} formally introduces the PPR method and some of the known results.  Section \ref{pre} also introduces the degree-corrected stochastic block model.
Section \ref{sxn:pop} gives a population analysis of the PPR clustering under directed block model graphs. 
Section \ref{sxn:spl} provides concentration results for the PPR vector when the graph is random and provides a statistical guarantee on the PPR local clustering method. 
Section \ref{sxn:simu} presents several numerical results showing the effectiveness of the PPR clustering.
Section \ref{sxn:data} illustrates the PPR clustering through the massive Twitter friendship graph and demonstrates the benefits of a smoothing step in the PPR adjustment.

%\textsc{Notations.} For any vector $x\in\R^n$, define the vector infinity norm as $\|x\|_\infty=\max_i|x_i|$.

%	\section{Some \LaTeX{} Examples}
%	\label{sec:examples}
%	
%	\subsection{How to Leave Comments}
%	
%	Comments can be added to the margins of the document using the \todo{Here's a comment in the margin!} todo command, as shown in the example on the right. You can also add inline comments:
%	
%	\todo[inline, color=green!40]{This is an inline comment.}

\section{Preliminaries}\label{pre}

Throughout this paper, let $G=(V,E)$ denotes an unweighted and connected graph, where $E$ is the edge set and $V$ is the set of vertices indexed by $1, ..., N$. 
When $G$ is an undirected and unweighted graph, encode $E$ into a binary \textit{adjacency} matrix $A\in\{0,1\}^{N\times N}$ with $A_{uv}=A_{vu}=1$ if and only if edge $(u,v)$ appears in $E$. 
Define a diagonal matrix $D=\diag{d_1,...,d_N}$ and the \textit{graph transition} matrix $P$ as follows:
\begin{eqnarray*}
	d_u=\sum_{v\in V} A_{uv} &\text{and}& P=D^{-1}A.
\end{eqnarray*}
When $G$ is a directed graph, the adjacency matrix $A\in\{0,1\}^{N\times N}$ accordingly becomes asymmetric with $A_{uv}=1$ if and only if edge $(u,v)\in E$, and the graph transition matrix is defined as 
$$P=[\Dout]^{-1}A,$$ 
where $\Dout = \diag{\dout_1, ..., \dout_N}$ and $\dout_u = \sum_{v\in V}A_{uv}$ is the number of edges leaving from node $u$.
In addition, define $\Din=\diag{\din_1, ..., \din_N}$ where $\din_v=\sum_{u\in V}A_{uv}$ is the number of edges pointing to node $v$.

\subsection{Personalized PageRank and the local clustering algorithm} \label{prePPR}
The personalized PageRank (PPR) is an extension of Google's PageRank \citep{Brin1998The,haveliwala2003topic}. 
To illustrate, consider a personalized random walk (or originally called ``surfing'') on the graph $G=(V,E)$ with a \textit{seed node} $v_0\in V$. 
At each step, the random walker either restarts from the seed node $v_0$ with probability $\alpha$ (called the \textit{teleportation constant}) or continues the random walk from the current node to a neighbor uniformly at random. 
The \textit{personalized PageRank vector} $p\in [0,1]^N$ is the stationary distribution of this process, thus the solution to the equation
\begin{equation}\label{eqn:ppr_def}
p\T=\alpha \pi\T + (1-\alpha) p\T P,
\end{equation}
where $P$ is the graph transition matrix, and $\pi$ is the elementary unit vector in the direction of seed node $v_0$. 
Here $p$ is a column vector normalized by a positive scalar such that its elements sum to 1, and without loss of generality, we set $v_0=1$ and thus $\pi=(1,0,...,0)\T$.

In general, the \textit{preference vector} $\pi$ does not have to be an elementary unit vector, but any probability distribution on $V$.
For example, when $\pi=(1/N,...,1/N)\T$, PPR is equivalent to ordinary PageRank. 
Moreover, the PPR vector is a linear function of the preference vector. That is, let $p(\pi_1)$ and $p(\pi_2)$ be two PPR vectors corresponding to two preference vectors $\pi_1$ and $\pi_2$ respectively. Then, for a new preference vector that is a convex combination of $\pi_i$, the resulting PPR vector is constructive of $p(\pi_i)$,
$$p(w_1\pi_1+w_2\pi_2)=w_1p(\pi_1)+w_2p(\pi_2),$$
where $w_i\ge0$ and $w_1+w_2=1$.
Define $\Pi$ to be an $N \times N$  matrix with repeating rows $\pi\T$, and let $Q=\alpha \Pi+(1-\alpha)P$, then $Q$ is the Markov transition matrix for the stochastic process and Equation (\ref{eqn:ppr_def}) becomes $p\T=p\T Q$. 
Below are some useful properties of the PageRank vector (also see \citet{haveliwala2003topic, jeh2003scaling} and Appendix \ref{apd:proof}).

\begin{proposition}\label{ppr_sol}
	For any fixed $\alpha\in(0,1]$, the PPR vector $p$ is
	\begin{enumerate}[\normalfont(a)]
		\item the left leading eigenvector of $Q$, associated with the simple eigenvalue 1; and
		\item the infinite sum of landing probability $\{ \left(P^s\right)\T \pi \}_{s=0}^\infty$ with weights $\phi=\{\alpha(1-\alpha)^s\}_{s=0}^\infty$,
		\begin{equation}\label{eqn:ppr_sum}
		p\T=\alpha\sum_{s=0}^{\infty} (1-\alpha)^s \pi\T P^s.
		\end{equation}
	\end{enumerate}
\end{proposition}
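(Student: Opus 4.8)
The plan is to establish part (b) first via a resolvent expansion, since it simultaneously yields existence, uniqueness, and the normalization of $p$, and then to read off part (a) as a consequence together with a Perron--Frobenius argument.

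First I would rewrite the defining equation~(\ref{eqn:ppr_def}). Collecting the terms involving $p$ gives $p\T\bigl(I-(1-\alpha)P\bigr)=\alpha\pi\T$. Because $P=D^{-1}A$ is row-stochastic its spectral radius is $1$, so for $\alpha\in(0,1]$ the matrix $(1-\alpha)P$ has spectral radius $1-\alpha<1$ and $I-(1-\alpha)P$ is invertible. Expanding the inverse as a Neumann series, $\bigl(I-(1-\alpha)P\bigr)^{-1}=\sum_{s=0}^\infty (1-\alpha)^s P^s$, and multiplying by $\alpha\pi\T$ on the left yields $p\T=\alpha\sum_{s=0}^\infty(1-\alpha)^s\pi\T P^s$, which is exactly~(\ref{eqn:ppr_sum}). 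This argument also shows that the solution of~(\ref{eqn:ppr_def}) is unique. Summing the entries of this expression and using $P\1=\1$ gives $p\T\1=\alpha\sum_{s}(1-\alpha)^s=1$, so the solution automatically normalizes to a probability vector; this is where I would note that the weights $\phi_s=\alpha(1-\alpha)^s$ themselves form a probability distribution over $s\ge 0$.

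For part (a), I would first verify that $Q=\alpha\Pi+(1-\alpha)P$ is row-stochastic: each row of $\Pi$ equals $\pi\T$, which sums to one, and each row of $P$ sums to one, so $Q\1=\1$ and $1$ is an eigenvalue of $Q$ with right eigenvector $\1$. Next I would show that the fixed-point equation $p\T=p\T Q$ is equivalent to~(\ref{eqn:ppr_def}): since $\Pi=\1\pi\T$ has identical rows and $p\T\1=1$, one has $p\T\Pi=\pi\T$, whence $p\T Q=\alpha\pi\T+(1-\alpha)p\T P$. Thus the normalized $p$ constructed in part~(b) is a left eigenvector of $Q$ for the eigenvalue~$1$.

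The remaining and most delicate point is the simplicity of the eigenvalue $1$, for which I would invoke the Perron--Frobenius theorem. Here the teleportation is essential: since $\alpha>0$ every node transitions to the seed with positive probability, and because $G$ is connected the seed reaches every node through $P$, so $Q$ is irreducible; moreover $Q_{11}\ge\alpha>0$ gives a positive self-transition at the seed, so $Q$ is primitive and $1$ is the unique eigenvalue of modulus $1$. Perron--Frobenius then guarantees that the spectral radius $1$ is a simple eigenvalue with a unique (up to scaling) strictly positive left eigenvector, which must coincide with the $p$ above. I expect the main obstacle to be exactly this irreducibility/primitivity verification, particularly in the directed setting where one must argue reachability from the seed rather than rely on undirected connectivity; the Neumann-series step is routine once the spectral radius bound on $(1-\alpha)P$ is in hand.
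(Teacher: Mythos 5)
Your proposal is correct, and it reaches part (b) by a genuinely different route than the paper. The paper proves (a) first: it reinterprets $Q$ as the transition matrix of a modified, strongly connected graph, invokes Perron--Frobenius to get simplicity of the eigenvalue $1$ and uniqueness of the positive stationary vector, and only then establishes (b) by writing down $R_\alpha=\alpha\sum_{s\ge 0}(1-\alpha)^sP^s$ and verifying by direct substitution that $R_\alpha^\top\pi$ satisfies the defining equation, with uniqueness borrowed from (a). You instead derive (b) first from the resolvent identity $p^\top\left(I-(1-\alpha)P\right)=\alpha\pi^\top$ and the Neumann series, which delivers existence, uniqueness, and the normalization $p^\top\mathds{1}=1$ in one stroke without any spectral input beyond the row-stochasticity of $P$; you then use that normalization to show $p^\top\Pi=\pi^\top$ and hence $p^\top Q=p^\top$, and finish (a) with essentially the paper's Perron--Frobenius argument (your additional primitivity observation via $Q_{11}\ge\alpha$ is more than the stated claim requires, but harmless). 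The benefit of your ordering is that the explicit formula and uniqueness do not depend on irreducibility at all, which also makes the directed generalization cleaner; the cost is nil. One shared caveat: at the boundary $\alpha=1$ the irreducibility argument for $Q=\Pi$ degenerates (the seed no longer reaches the other nodes), so the Perron--Frobenius step in (a) really requires $\alpha<1$ — but the paper's construction of $G'$ has exactly the same issue, and the conclusion still holds there since $\Pi$ is rank one with simple eigenvalue $1$, so this is not a defect specific to your argument.
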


%	Proposition \ref{ppr_sol} provides two obvious ways to obtain a PPR vector: either solve the linear system in Equation (\ref{eqn:ppr_def}) or calculate the weighted sum of \textit{landing probability} $P^s\pi$ in Equation (\ref{eqn:ppr_sum}).
\citet{berkhin2006bookmark} gives an iterative algorithm based on Proposition \ref{ppr_sol} to approximate the PPR vector (that scales to large graphs); 
%	shows that updating on just one randomly chosen vertex at each step will give the same stationary distribution. 
each update requires only neighborhood information of one visited vertex.
A few lines of linear algebra show that the PPR vector is equivalent to the solution to the linear system
$$p\T=\alpha'\pi\T+(1-\alpha')p\T W,$$ 
where $W=(I+P)/2$ is the lazy graph transition matrix and $\alpha'= \alpha/(2-\alpha)$. 
Using this fact, Algorithm \ref{alg:ppr} approximates the PPR vector in running time of order $\cO\left(\frac{1}{\epsilon\alpha}\right)$, by reaching at most $\frac{2}{\epsilon(1-\alpha)}$ vertices. The following proposition gives a guarantee on the approximation error for this algorithm in terms of the \textit{tolerance} parameter and the degrees of visited nodes.
%Algorithm \ref{alg:ppr} approximates the PPR vector in a running time linear to the vertices visited $\cO\left((\epsilon\alpha)^{-1}\right)$, thus nearly linear to the size of the output, with a guarantee on approximation error \citep{andersen2006local}. The next proposition bounds the approximation errors in terms of the \textit{tolerance} parameter and the degrees of visited nodes.
\begin{proposition}[Entrywise approximation error \citep{andersen2006local}]\label{prop:approx}
	Let $p$ be a PPR vector, and let $p^\epsilon\in {[0,1]}^N$ be an approximate PPR vector computed by Algorithm \ref{alg:ppr} with
	%		 the same teleportation constant $\alpha$ and
	a tolerance $\epsilon>0$. For any vertex $u$ that is sampled in Algorithm \ref{alg:ppr},
	$$\left|p_u-p^\epsilon_u\right|\le \epsilon d_u.$$
\end{proposition}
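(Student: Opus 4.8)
The plan is to track the two objects that Algorithm \ref{alg:ppr} maintains throughout its execution: the running approximation $p^\epsilon$ and a nonnegative \emph{residual} vector $r$, and to show that the approximation error is \emph{exactly} the PPR vector generated by $r$ used as a preference vector. Writing $p(s)$ for the PPR vector associated with an arbitrary nonnegative preference vector $s$, i.e. the solution of $p(s)\T = \alpha' s\T + (1-\alpha')p(s)\T W$ with $W=(I+P)/2$ and $\alpha'=\alpha/(2-\alpha)$, the first step is to verify the loop invariant
\begin{equation*}
p(\pi) = p^\epsilon + p(r),
\end{equation*}
which holds at initialization (with $p^\epsilon=0$ and $r=\pi$) and is preserved by each ``push'' at a vertex $u$: that step transfers the mass $\alpha' r_u$ into $p^\epsilon_u$ and redistributes the remaining residual through $W$, and by the linearity of $p(\cdot)$ in its preference vector (Section \ref{prePPR}) together with the defining equation, the increment to $p^\epsilon$ exactly cancels the decrement to $p(r)$. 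Alongside the invariant one checks that $r$ stays entrywise nonnegative and that, at termination, $r_v \le \epsilon d_v$ for every $v$ (vertices never reached having $r_v=0$), since the algorithm pushes precisely at vertices violating this threshold.

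Granting the invariant, the error at any vertex is $p(\pi)_u - p^\epsilon_u = p(r)_u \ge 0$, so it suffices to prove the one-sided bound $p(r)_u \le \epsilon d_u$, after which the absolute value is immediate. I would expand $p(r)$ by the geometric-series representation of Proposition \ref{ppr_sol}(b), applied to the lazy system, $p(r)\T = \alpha' \sum_{s\ge 0}(1-\alpha')^s r\T W^s$, and then exploit reversibility of the lazy walk with respect to the degree measure: because $A$ is symmetric, $D^{1/2}WD^{-1/2}$ is symmetric, hence $d_v (W^s)_{vu} = d_u (W^s)_{uv}$ for every $s$. Substituting this reciprocity pulls a common factor $d_u$ out of every term,
\begin{equation*}
p(r)_u = \alpha' \sum_{s\ge0}(1-\alpha')^s \sum_v r_v (W^s)_{vu} = d_u \, \alpha'\sum_{s\ge0}(1-\alpha')^s \sum_v \frac{r_v}{d_v}(W^s)_{uv}.
\end{equation*}

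Now the termination bound $r_v/d_v \le \epsilon$ holds uniformly in $v$, so replacing each ratio by $\epsilon$ and using that $W$ is row-stochastic ($\sum_v (W^s)_{uv}=1$) together with $\alpha'\sum_{s\ge0}(1-\alpha')^s = 1$ collapses the double sum and leaves exactly $p(r)_u \le \epsilon d_u$; with $p(r)_u \ge 0$ this yields $|p_u - p^\epsilon_u|\le \epsilon d_u$. Note that the argument in fact bounds every coordinate, not only the sampled ones. The step I expect to be the genuine obstacle is the first: the clean identity ``error $=p(r)$'' hinges on the precise push rule of Algorithm \ref{alg:ppr} (its self-loop and neighbor-redistribution weights must match $\alpha'$ and $W$), and one must confirm that both nonnegativity of $r$ and the stated termination threshold are actually maintained. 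The subsequent degree-reciprocity manipulation that produces the factor $d_u$ is the conceptual heart of the bound but is routine once reversibility is invoked.
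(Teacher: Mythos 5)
Your proposal is correct, and its backbone coincides with the paper's proof: both arguments establish the loop invariant $p(\pi)=p^\epsilon+p(r)$ by linearity of the PPR map in the preference vector together with the defining recursion, exactly as in Appendix \ref{apd:proof}. Where you genuinely diverge is the concluding step. The paper finishes by asserting $[p(r)]_u\le r_u$ for each sampled $u$ and then invoking the termination criterion $r_u<\epsilon d_u$; you instead bound $p(r)_u\le \epsilon d_u$ directly, expanding $p(r)$ as a geometric series in the lazy walk and using reversibility (symmetry of $DW^s$, equivalently of $D^{1/2}WD^{-1/2}$) to pull out the factor $d_u$, after which the uniform termination bound $r_v/d_v\le\epsilon$ and row-stochasticity of $W^s$ collapse the double sum. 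Your route buys two things: it delivers the bound at \emph{every} vertex rather than only the sampled ones, and it avoids the pointwise comparison $[p(r)]_u\le r_u$, which is not self-evident — the recursion $p(r)_u=\alpha' r_u+(1-\alpha')[W\T p(r)]_u$ lets $u$ accumulate mass from its neighbors' residuals, so that inequality would itself require justification that the paper does not supply. The degree-reciprocity computation you use is the standard Andersen--Chung--Lang mechanism and is the right tool here; the one point to be pedantic about is that at termination the loop condition gives $r_v<\epsilon d_v$ for all visited $v$ (and $r_v=0$ for unvisited ones), which is precisely the uniform bound your argument needs.
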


Proposition \ref{prop:approx} ensures that for any fixed graph, the approximate PPR vector is arbitrarily close to the exact PPR vector, as long as the tolerance $\epsilon>0$ is sufficiently small. 
%	Hence for notation simplicity, when it is clear, we drop the $\epsilon$ superscript that denoting an approximate PPR returned by Algorithm \ref{alg:ppr}.
Appendix \ref{apd:proof} contains a proof of this proposition for completeness. 
Given a seed node in the graph, Algorithm \ref{alg:lc} uses the approximate PPR vector from Algorithm \ref{alg:ppr} and returns a set of nodes with the largest corresponding values in the \textit{adjusted personalized PageRank} (aPPR) vector, which is defined as
$$p^*_v=\frac{p_v}{d_v}, \text{ for } v=1,2,...,N.$$ 
The aPPR vector was previously proposed in \citet{andersen2006local}.
Algorithm \ref{alg:ppr} and \ref{alg:lc} operate on undirected graphs. We will generalize them to directed graphs in Section \ref{sxn:pop} thanks to a simplified and interpretable form for the PPR vector.

\begin{algorithm}%[t]
	\caption{Approximate PPR Vector (undirected) \citep{andersen2006local}}
	\label{alg:ppr}
	\begin{algorithmic}
		\REQUIRE Undirected graph $G$, preference vector $\pi$, teleportation constant $\alpha$, and tolerance $\epsilon$.
		\STATE \textbf{Initialize} $p\leftarrow0$, $r\leftarrow\pi$, $\alpha'\leftarrow \alpha/(2-\alpha)$.
		\WHILE {$\exists u\in V$ such that $r_u\ge \epsilon d_u$}
		\STATE Uniformly sample a vertex $u$ satisfying $r_u\ge \epsilon d_u$.
		\STATE $p_u\leftarrow p_u+\alpha' r_u$.
		\FOR {$v:(u,v)\in E$}
		\STATE $r_v\leftarrow r_v+(1-\alpha')r_u/(2d_u)$.
		\ENDFOR
		\STATE $r_u\leftarrow (1-\alpha')r_u/2$.
		\ENDWHILE
	\end{algorithmic}
	\textbf{Return:} $\epsilon$-approximate PPR vector $p$.
\end{algorithm}

\begin{algorithm} %[t]
	\caption{PPR Clustering (undirected)}
	\label{alg:lc}
	\begin{algorithmic} [1]
		\REQUIRE Undirected graph $G$, seed node $v_0$, and the desired size of local cluster $n$.
		\STATE Calculate the approximate PPR vector $p$ (Algorithm \ref{alg:ppr}). \label{alg_a}
		
		\STATE Adjust the PPR vector $p$ by node degrees, $p^*_v\leftarrow p_v/d_v$. \label{alg_b}
		
		\STATE Rank all vertices according to the adjusted PPR vector $p^*$. \label{alg_c}
	\end{algorithmic}
	\textbf{Return:} local cluster -- $n$ top-ranking nodes.
\end{algorithm}

%	Note that in order to find all clusters, use exact PPR vector in step \ref{alg_a}, and divide the rankings into $K$ clusters (e.g. by k-means) in \ref{alg_c} instead. 
%	Algorithm \ref{alg:lc} gives consistent clusters in this case (Section \ref{sxn:pop}). 

%\subsection{Degree-corrected stochastic block model (DC-SBM)}
\subsection{Stochastic block model}
%	It is natural to formulate the local clustering problem under the stochastic block model (SBM), where 
%	In particular, given a seed node $v_0$ in block 1, we wish to find a small cluster containing all nodes in block 1. 

%	Given a seed node $v_0$ in block 1, we wish to understand when Algorithm \ref{alg:lc} returns a cluster containing all nodes in block 1. 
In the stochastic block model (SBM), each node belongs to one of $K$ blocks. The presence of each edge corresponds to an independent Bernoulli random variable, where the probability of an edge between any two nodes depends only on the block memberships of two nodes \citep{holland1983stochastic}. 
The formal definition is as follows.
\begin{definition}
	For a vertex set $V=\{1,2, ..., N\}$, let $z:\{1,2,...,N\}\rightarrow\{1,2,...,K\}$ partition the $N$ nodes into $K$ blocks, so $z(v)$ is the block membership of vertex $v$. Let $\B$ be a $K \times K$ matrix with all entries range in $[0,1]$. Under the SBM, the probability of an edge between $u$ and $v$ is $\B_{z(u)z(v)}$. That is, $A_{uv} \mid z(u),z(v) \overset {\normalfont\text{ind.}}{\sim} \text{Bernoulli}\left(\B_{z(u)z(v)}\right)$, for any $u,v \in \{1,2,...,N\}$.
\end{definition}

Under the ordinary SBM, nodes in the same block have the same expected degree. 
One extension is the degree-corrected stochastic block model (DC-SBM), which adds a series of parameters ($\theta_v > 0$ for every vertex $v$) to create more heterogeneous node degrees \citep{karrer2011stochastic}. 
Let $\B$ be a $K \times K$ matrix with $\B_{ij} > 0$ for any $i$ and $j$. 
Then the probability of an edge between $u$ and $v$ is $\theta_u\theta_v\B_{z(u)z(v)}$. 
That is, 
$$A_{uv} \mid z(u),z(v) \overset {\text{ind.}}{\sim} \text{Bernoulli}\left(\theta_u\theta_v\B_{z(u)z(v)}\right),$$ 
for $u,v \in \{1,2,...,N\}$. Since $\theta_v$'s are arbitrary to a multiplicative constant which can be absorbed into $\B$, \citet{karrer2011stochastic} suggest imposing the constraint  that the $\theta_v$'s sum to 1 within each block. That is, $\sum_{v:z(v)=i} \theta_v=1$ for all $i=1,2,..,K$. With this constraint, 
%	matrix $\B$ has an explicit interpretation;
$\B_{ij}$ represents the expected number of edges between block $i$ and $j$ if $i\neq j$, and twice of that if $i=j$. 
Throughout this paper, we presume $\B$ is positive definite
\footnote{This prevents scenarios where edges are unlikely within blocks and more likely between blocks. In such scenarios, local clustering needs to be reimagined cautiously. See Supplementary Materials \ref{sxn:parameter} for additional details about generalizations.}
and all blocks are connected (we ignore any blocks that are isolated from the seed).
The DC-SBM can be generalized to directed graphs by giving each node two parameters, $\thi_v$ and $\tho_v$, controlling its in-degree and out-degree respectively \citep{zhu2013oriented}. 
Then, the presence of an directed edge from $u$ to $v$, given the block memberships, corresponds to an independent Bernoulli random variable,
$$A_{uv} \mid z(u),z(v) \overset {\text{ind.}}{\sim} \text{Bernoulli}\left(\tho_u\thi_v\B_{z(u)z(v)}\right).$$
In order to make the model identifiable, we need to impose a structural constraint on $\thi$'s and $\tho$'s, that both of them sum up to 1 within each block,
\begin{equation*} \label{eq:zTz_dir}
\sum_{v:z(v)=i}\thi_v=\sum_{v:z(v)=i}\tho_v=1, \text{ for any } i=1,2,...,K.
\end{equation*}
Because the off-diagonal elements of $\B$ can be interpreted as the expected number of edges between blocks, we define the block in-degree and block out-degree to be the total number of incoming edges and outgoing edges respectively, that is,
$\tdi_j=\sum_{i=1}^K\B_{ij}$, and $\tdo_i=\sum_{j=1}^K\B_{ij}$.

\section{Population Analysis of PageRank}
\label{sxn:pop}

In this section, we analyze the PPR vector of the expected adjacency matrix under the DC-SBM.
This provides a simple representation of the PPR vector that motivates  (1) the bias adjustment and (2) the generalization of Algorithm \ref{alg:ppr} and \ref{alg:lc} to directed graphs.  

We use three distinct typefaces to denote three classes of objects. Calligraphic typeface is given to the population version of any observable quantities in random graphs, such as graph adjacency matrix and node degrees (e.g. Equation \eqref{eq:popA}). 
Normal typeface is given to unobserved model parameters, such as block membership and degree parameters $\theta_i$. 
Bold face is given to all block-level quantities and parameters like $\B$ and $\tdo_i$.

%	Then, Equation (\ref{eq:zTz_dir}) reads compactly in matrix form,
%	\begin{equation}
%	Z\T \Ti Z=Z\T \To Z=I_K. \label{eq:ZTZ_dir}
%	\end{equation}

Define the population graph adjacency matrix, 
\begin{equation}\label{eq:popA}
\A=\E\left(A\mid z(1), z(2), ..., z(N)\right),
\end{equation} 
to be the expectation of random adjacency matrix $A$. Let $Z\in\{0,1\}^{N\times K}$ be the block membership matrix with $Z_{vi}=1$ if and only if vertex $v$ belongs to block $i$, and define diagonal matrices $\Ti$ and $\To$ with entries $\thi$'s and $\tho$'s respectively. Then, under the directed DC-SBM with $K$ blocks and parameters $\{\B, Z, \Theta^\text{in}, \Theta^\text{out}\}$, $\A\in\R^{K\times K}$ can be compactly expressed as
$$\A=\To Z\B Z\T\Ti.$$

Accordingly, we define the population node degrees and the population transition matrix, $\di_u=\sum_{v\in V}\A_{uv}$, $\dou_v=\sum_{u\in V}\A_{uv}$, and $\cP=\left[\Do\right]^{-1}\A$,
where $\Di$ and $\Do$ are the diagonal matrices of the population node in-degrees $\di_u$'s and out-degrees $\dou_v$'s respectively.
Let $\p$ be the population PPR vector (i.e., the solution to equation $\p\T=\alpha\pi\T+(1-\alpha)\p\T\cP$) and let $\p^*=\left[\Di\right]^{-1}\p$ be the population aPPR vector.

In addition, define the \textit{block transition matrix} $\tP \in \R^{K\times K}$ as
\begin{equation}\label{eqn:blcoktrans}
\tP=\left[\tDo\right]^{-1}\B,
%		\tdi_i=\sum_{v:z(v)=i}\di_v,&\text{and}&\tdo_i=\sum_{v:z(v)=i}\dou_v.
\end{equation}
where $\tDi\in\R^{K\times K}$ and $\tDo\in\R^{K\times K}$ are diagonal matrices of the block in-degrees $\tdi_i$'s and out-degrees $\tdo_i$'s.

\subsection{A representation of PPR vectors} 

This section provides a simple and interpretable form for PPR vectors under the population DC-SBM.
To this end, we define the \textit{``block-wise'' PPR vector} $\tp\in\R^K$ to be the unique solution to linear system
\begin{equation} \label{ppr_block}
\tp\T=\alpha\boldsymbol{\pi}\T+(1-\alpha)\tp\T\tP,
\end{equation} 
where $\boldsymbol{\pi}=Z\T \pi\in\R^K$ is the block-wise preference vector and $\tP$ is the block transition matrix in Equation (\ref{eqn:blcoktrans}).
This treats the block connectivity matrix $\B$ as a weighted adjacency matrix of blocks and the block of seed node as a seed block.
To build up the relationship between PPR and  the block-wise PPR, the next theorem gives an explicit form for PPR vectors which also reveals the sources of bias for local clustering.

%Go get coffee, buy subway and cookies. -Michael

\begin{theorem} [Explicit form of PPR vectors] \label{explicit2}
	\underddcsbm 
	\begin{enumerate}[{\normalfont (a)}]
		\item the population PPR vector $\p \in \R^N$ has elements 
		$$\p_u = \thi_u \tp_{z(u)}$$
		%			$$\p= \Ti Z \tp ,$$
		where $\tp$ is the block-wise PPR vector in Equation (\ref{ppr_block}), 
		%			 Element-wise, $\p_u = \thi_u \tp_{z(u)}$.
		\item and the population aPPR vector $\p^* \in \R^N$ has elements
		\begin{equation}\label{eqn:appr}
		\p^*_u = \tp^*_{z(u)}
		%			\p^*=Z\tp^*,
		\end{equation}
		where $\tp^*=\left[\tDi\right]^{-1}\tp$.  
		%			Element-wise, $\p^*_u = \tp^*_{z(u)}$.
	\end{enumerate} 
\end{theorem}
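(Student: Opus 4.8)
The plan is to reduce the $N$-dimensional fixed-point equation for $\p$ to the $K$-dimensional one for $\tp$ via an exact ``lumpability'' identity between $\cP$ and $\tP$, and then to pin down $\p$ using the uniqueness in Proposition \ref{ppr_sol}(a). First I would write everything entrywise from $\A=\To Z\B Z\T\Ti$. The identifiability constraints $\sum_{v:z(v)=i}\thi_v=\sum_{v:z(v)=i}\tho_v=1$ make the degrees collapse blockwise, so that the population in-degree entering the adjustment is $\di_u=\thi_u\,\tdi_{z(u)}$, and the transition matrix has entries $\cP_{uv}=\thi_v\B_{z(u)z(v)}/\tdo_{z(u)}$; one checks directly that $\cP$ is row-stochastic, as is $\tP_{ij}=\B_{ij}/\tdo_i$. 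The engine of the proof is then the identity that, for any row vector $x\T\in\R^N$, summing $x\T\cP$ within blocks reproduces the block-level walk: $(x\T\cP)Z=(x\T Z)\tP$. This follows by grouping the column sum according to block membership and invoking $\sum_{u:z(u)=j}\thi_u=1$, and it is the statement that the random walk is exactly lumpable with respect to the block partition.

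For part (a) I would argue by verification and uniqueness. Define the candidate $\widehat{\p}_u:=\thi_u\,\tp_{z(u)}$. Applying the lumpability identity with $x=\widehat{\p}$ together with $\sum_{v:z(v)=i}\thi_v=1$ gives $(\widehat{\p}\T\cP)_u=\thi_u(\tp\T\tP)_{z(u)}$, so that the post-multiplied vector is again proportional to $\thi_u$ inside each block. Substituting the block-wise fixed-point relation \eqref{ppr_block} in the form $(1-\alpha)(\tp\T\tP)_{z(u)}=\tp_{z(u)}-\alpha\,\boldsymbol{\pi}_{z(u)}$ then reduces the claim $\widehat{\p}\T=\alpha\pi\T+(1-\alpha)\widehat{\p}\T\cP$ to a condition on the preference vector alone (see the last paragraph). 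Granting that condition, $\widehat{\p}$ solves the defining PPR equation; since $\widehat{\p}$ is nonnegative and $\sum_u\widehat{\p}_u=\sum_i\tp_i\sum_{u:z(u)=i}\thi_u=\sum_i\tp_i=1$, it is the normalized stationary vector, which is unique by Proposition \ref{ppr_sol}(a). Hence $\p=\widehat{\p}$ and $\p_u=\thi_u\tp_{z(u)}$.

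Part (b) is then immediate from part (a) and the in-degree formula: dividing by $\di_u=\thi_u\,\tdi_{z(u)}$ gives $\p^*_u=\p_u/\di_u=\thi_u\tp_{z(u)}/(\thi_u\tdi_{z(u)})=\tp_{z(u)}/\tdi_{z(u)}=\tp^*_{z(u)}$, where $\tp^*=[\tDi]^{-1}\tp$. The node-specific factor $\thi_u$ cancels exactly, leaving a quantity that is constant on each block; this cancellation is precisely the sense in which the degree adjustment removes the within-block degree bias.

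The step I expect to demand the most care is the treatment of the preference term $\alpha\pi\T$ in the verification. Carrying the substitution through, the residual is $\alpha(\pi_u-\thi_u\,\boldsymbol{\pi}_{z(u)})$, so closing the identity forces $\pi_u=\thi_u\,\boldsymbol{\pi}_{z(u)}$, i.e. the preference mass inside the seed block must be allocated proportionally to $\thi$ (consistently with $\boldsymbol{\pi}=Z\T\pi$). I would therefore make this structural assumption on $\pi$ explicit, stating the population result for a $\thi$-weighted seed-block preference; for a bare single-node seed the representation continues to hold at every coordinate except the seed itself, where it is off by $\alpha(1-\thi_{v_0})$, and I would flag this $O(\alpha)$ discrepancy rather than hide it. Everything else—the blockwise degree computations, the lumpability identity, and the in-degree cancellation in part (b)—is routine once $\cP$ is written entrywise.
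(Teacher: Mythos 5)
Your argument is correct and lands on the same algebra as the paper, but by a genuinely different route: the paper first proves Lemma \ref{explicit1}, $\cP^s=Z\tP^sZ\T\Ti$, and then sums the Neumann series $\p=\alpha\sum_{s\ge0}(1-\alpha)^s(\cP^s)\T\pi$ term by term, whereas you verify that the candidate $\Ti Z\tp$ satisfies the defining fixed-point equation and conclude by the uniqueness in Proposition \ref{ppr_sol}(a). The two hinge on the same identity ($Z\T\Ti Z=I_K$, which is exactly your lumpability relation $(x\T\cP)Z=(x\T Z)\tP$), and your part (b) cancellation $\p^*_u=\p_u/\di_u=\tp_{z(u)}/\tdi_{z(u)}$ is the paper's $[\Di]^{-1}\Ti Z=Z[\tDi]^{-1}$ written entrywise. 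What your route buys is that it forces the preference term into the open, and the caveat you raise is genuine and applies equally to the paper's own proof: the power formula fails at $s=0$ (where $\cP^0=I_N\neq Z\tP^0Z\T\Ti$), so the paper's term-by-term substitution silently replaces $\alpha\pi$ by $\alpha\,\Ti ZZ\T\pi$ in the leading term; for an elementary seed vector the stated identity $\p_u=\thi_u\tp_{z(u)}$ is therefore exact off the seed block but carries an order-$\alpha$ correction inside it, vanishing only when $\pi$ is $\thi$-proportional within the seed block, exactly the structural assumption you propose to make explicit. One small correction to your bookkeeping: the residual $\alpha(\pi_u-\thi_u\boldsymbol{\pi}_{z(u)})$ is nonzero at \emph{every} node of the seed block, not only at the seed itself --- it equals $\alpha(1-\thi_{v_0})$ at $v_0$ and $-\alpha\thi_u$ at the other nodes of block $z(v_0)$, with the residuals summing to zero.
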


Theorem \ref{explicit2} demonstrates that the PPR vector $\p$ decomposes into block-related information ($\tp$) and node specific information ($\Theta$).
%	, connected by block membership information ($Z$). 
%	A PPR vector is the expansion of the block-wise PPR vector with node degree parameters. 
%	In particular, the PPR values of all nodes in block $i$, $z(v)=i$, have form $\theta_v\tp_i$. 
Within each block, the PPR values are proportional to the node degree parameters $\theta_v$'s and sum up to the block-wise PPR value of the block.
The proof of Theorem \ref{explicit2} (Appendix \ref{apd:proof}) relies on a key observation (Appendix \ref{apd:lemmas}) that the powers of population transition matrix, $\cP^s$ for $s = 1, 2, \dots$, have a similarly simple form and the node specific information components (i.e., $z(v)$ and $\theta_v$) are invariant in $s$. 

%The PPR value has a simple structure induced from the block-wise PPR vector $\tp$. Separated by blocks, all nodes of the same block share one entry of $\tp$ which corresponds to that block, in proportion to their node degree parameters $\Theta$. 

In order to justify the adjustment (Step \ref{alg_b}) in Algorithm \ref{alg:lc}, we observe that the seed always has the highest population aPPR score. This turns out to be a key feature that facilitates the aPPR vector to recover a local cluster correctly, so we state it in the following lemma.
\begin{lemma}[The largest entry of aPPR vector] \label{lem:top}
	%	\underddcsbm
	Under the population DC-SBM, assume that the minimum expected degree is positive, that is, $\min_{v\in V}\cd_v>0$. 
	Then, for any fixed $\alpha>0$, the population aPPR vector $\p^*$ has the strictly largest entry corresponding to the seed node,
	$$\p^*_{v_0}>\p^*_v, \text{ for any } v\neq v_0.$$
	On the other hand, this is not generally true for a PPR vector.
\end{lemma}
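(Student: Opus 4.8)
The plan is to prove the statement by a maximum-principle (extremal) argument applied to the degree-normalized fixed-point equation, isolating the teleportation term as the source of the strict inequality. Throughout I would lean on the series representation of Proposition \ref{ppr_sol}(b), $\p\T=\alpha\sum_{s=0}^\infty(1-\alpha)^s\pi\T\cP^s$, both to guarantee that every entry of $\p$ (hence of $\p^*$, since $\cd_v>0$) is strictly positive on the connected graph, and to expose how the $s=0$ teleportation term $\alpha\pi$ singles out the seed node from every other node.

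First, for the undirected case I would divide the $v$-th coordinate of $\p\T=\alpha\pi\T+(1-\alpha)\p\T\cP$ by the degree $\cd_v$. Because $\cP_{uv}=\A_{uv}/\cd_u$ and $\A$ is symmetric with $\sum_u\A_{vu}=\cd_v$, this yields $\p^*_v=\alpha\pi_v/\cd_v+(1-\alpha)\sum_u\p^*_u\,\A_{vu}/\cd_v$; that is, $\p^*_v$ is a nonnegative teleportation term plus $(1-\alpha)$ times a genuine weighted average of the $\p^*_u$ (weights $\A_{vu}/\cd_v$ summing to one over $u$). Taking $v^\star$ to maximize $\p^*$: if $v^\star\ne v_0$ then $\pi_{v^\star}=0$ and $\p^*_{v^\star}\le(1-\alpha)\p^*_{v^\star}$, which forces $\p^*_{v^\star}\le0$ and contradicts positivity, so the maximizer must be $v_0$. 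The same inequality gives $\p^*_v\le(1-\alpha)\p^*_{v_0}<\p^*_{v_0}$ for every $v\ne v_0$, which is exactly the claimed strict separation.

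For the general (directed) case the same division no longer produces a row-stochastic average, so I would instead pass through the block-wise representation. Using the series together with the fact that $\pi\T\cP^s$ depends on the seed only through its block for $s\ge1$ (the computation underlying Theorem \ref{explicit2}, but retaining the $s=0$ term so as to obtain strict rather than weak inequalities), I would derive the exact decomposition $\p^*_v=\alpha\pi_v/\di_v+\tp^*_{z(v)}-\alpha\,\1\{z(v)=z(v_0)\}/\tdi_{z(v)}$. This splits the comparison into two parts. For $v\ne v_0$ in the seed's own block the teleportation terms give the clean gap $\p^*_{v_0}-\p^*_v=\alpha/(\tdi_{z(v_0)}\thi_{v_0})>0$. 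For $v$ in another block it remains to show $\tp^*_{z(v_0)}+\frac{\alpha}{\tdi_{z(v_0)}}\bigl(1/\thi_{v_0}-1\bigr)>\tp^*_{z(v)}$, for which I would establish that the seed block attains the maximal block-wise aPPR by running the identical extremal argument on the $K$-vertex block recursion (\ref{ppr_block}), with the nonnegative boost $\frac{\alpha}{\tdi_{z(v_0)}}\bigl(1/\thi_{v_0}-1\bigr)$ breaking any residual tie.

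The hard part is precisely this last step in the directed setting: the block recursion normalized by $\tdi$ fails to be row-stochastic whenever the block in- and out-degrees $\tdi_i$ and $\tdo_i$ differ, so the averaging weights can sum to more than one and the maximum principle does not apply verbatim. I would address this either by specializing to balanced block degrees ($\tdi_i=\tdo_i$, where the recursion is again stochastic and the clean argument transfers), or by invoking positive definiteness of $\B$ to control the reweighting; this is the step I expect to require the most care. Finally, the closing sentence of the lemma is an impossibility claim, so I would settle it by counterexample rather than by a bound: from the form $\p_v=\thi_v\tp_{z(v)}$, a hub node with $\thi_v$ close to one in a well-connected block can satisfy $\p_v>\p_{v_0}$ even though $v\ne v_0$, showing that the strict-maximality property genuinely relies on the degree adjustment.
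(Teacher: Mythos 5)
Your core argument is correct and proves the lemma by a genuinely more direct route than the paper. The paper's proof works with the iterates $\p^s=(\Q^s)\T\p^0$ and the one-step contraction $\max_{u\neq v_0}\p^{s+1}_u/\cd_u\le(1-\alpha)\max_{v}\p^s_v/\cd_v$, then argues by contradiction: if the seed did not attain the maximum of $\p^*$, the whole normalized vector would be driven geometrically to zero as $s\to\infty$, contradicting that $\p$ is a probability distribution (this requires some $\varepsilon$-bookkeeping to pass between $\p^s$ and $\p$). You apply the same underlying inequality once, directly to the stationary equation: since in the undirected case $\A_{uv}/\cd_v$ are probability weights, any maximizer $v^\star\neq v_0$ of $\p^*$ would satisfy $\p^*_{v^\star}\le(1-\alpha)\p^*_{v^\star}$, contradicting strict positivity, and the same line gives the quantitative bound $\p^*_v\le(1-\alpha)\p^*_{v_0}$ for all $v\neq v_0$ without any limiting argument. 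This is shorter and cleaner, and your counterexample for the unadjusted PPR vector (a hub with large $\theta_v$ in a well-connected block) makes the same point as the paper's remark about the $\alpha=0$ stationary distribution. One clarification: the directed case that you treat as the ``hard part'' is not actually covered by the paper's proof either --- the appendix argument uses $\sum_v\A_{vu}=\cd_u$, i.e.\ symmetry of $\A$, and the directed extension is dispatched in a one-sentence remark --- so the obstruction you identify (the in-degree--normalized recursion is not row-stochastic when block in- and out-degrees differ) is a legitimate issue with the paper's claimed generalization, not a gap in your proof of the lemma as stated. Your block-level decomposition retaining the $s=0$ teleportation term is also a correct refinement of the form given in Theorem \ref{explicit2}, but none of it is needed once the undirected maximum-principle argument has closed the proof.
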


When $\alpha=0$ (i.e., no teleportation), the PPR vector becomes the limiting distribution of a standard random walk and all entries of the aPPR vector are equal (Appendix \ref{apd:proof}). 
Lemma \ref{lem:top} (applied to block-wise PPR vectors) and Theorem \ref{explicit2} together identify two sources of bias for PPR vectors and suggest a justification for the degree adjustment, which we discuss in order: 
\begin{enumerate}[(i)]
	\item \label{rmk2} Both node degree heterogeneity ($\Theta$) and block size imbalance ($\tD$) confound the identification of local cluster by the PPR vector. 
	In particular, suppose vertex $v$ belongs to a block $z(v)=i$ other than 1. PPR vector assigns it a score $\theta_v\tp_i$, where $\tp_i$ is the block-wise PPR of block $i$, and $\theta_v$ is the parameter specifically controlling the degree of $v$. 
	Then, node $v$ may rank at the top, if $\theta_{v}$ is large enough. 
	Furthermore, Lemma \ref{lem:top} implies that $\tp_1$ is not necessarily the largest due to block degree heterogeneity. Specifically, if block $i$ has an exceedingly high block degree, it is likely that $\p$ fails to down-rank node $v$ vis-a-vis those nodes of block 1. 
	
	\item \label{rmk3} Adjusted personalized PageRank removes the node and the block degree heterogeneity simultaneously, and perfectly recovers the local cluster. 
	To see this, note that $\tp^*$ is the adjusted version of block-wise PPR vector.
	%		and that the stationary distribution of a random walk in a directed graph is characterized by the in-degree of nodes \citep{ghoshal2011ranking, lu2013respondent}.
	From Lemma \ref{lem:top}, $\tp^*_1$ is the largest entry of $\tp^*$. 
	From Equation \ref{eqn:appr}, the aPPR vector assigns any vertex $v$ a score $\tp^*_{z(v)}$. 
	Hence, nodes with the highest value of $\p^*$ belong to block 1, which is precisely the desired local cluster. 
	
\end{enumerate}

Note that the PPR vector can still be biased for local clustering even under the classic SBM. 
%	; an adjustment step is still needed. 
To see this, set the matrix $\Theta$ to the identity matrix in Theorem \ref{explicit2}. In this case, the heterogeneous block degrees still confound the PPR vector (Section \ref{sxn:simu2}); there is generally no guarantee for $\tp_1$ to appear on the top (due to Lemma \ref{lem:top}), unless there are further symmetry conditions. 
\citet{kloumann2017block} uses such one scenario.
%	\citet{kloumann2017block} presents such a scenario by assuming a symmetric block distribution; the asymptotic results do not hold without this strong assumption. 
As a byproduct of our analysis, we extend their results under the DC-SBM with the symmetric conditions (see Supplementary Materials \ref{sxn:ld} to the paper).

%We also remark on \ref{rmk3} that if we assume in addition that $\p^*$ is entrywise distinguishable, then \ref{rmk3} implies that one can even obtain the global clustering by simply performing a classic K-means algorithm on the aPPR vector. 

%hi, my name is jake. i am helping fan write, he is a very good writer. i love him.

%\subsection{Directed block model graphs} \label{apd:directed}
%In this section, we generalize the Theorem \ref{explicit2} onto directed graphs, where the PageRank methods are typically invoked.

\subsection{Local clustering on  directed graphs}
In light of the clean form of PPR vectors under the DC-SBM, one can modify Algorithm \ref{alg:ppr} and \ref{alg:lc} to operate on a directed graph accordingly. To this end, note that the transition matrix of a directed graph requires node out-degrees, hence Algorithm \ref{alg:ppr} examines only the edges leaving visited nodes. Consequently it suffices to replace $d_u$'s in Algorithm \ref{alg:ppr} by $\dout_u$'s (Algorithm \ref{alg:ppr_dir}).
Proposition \ref{prop:approx} applies to Algorithm \ref{alg:ppr_dir} as well, and one can approximate the PPR vector provided the out-degrees of visited nodes can be observed and the tolerance parameter $\epsilon>0$ is sufficiently small.

To perform local clustering on a directed graph, Algorithm \ref{alg:lc_dir} adjusts the approximate PPR vectors from Algorithm \ref{alg:ppr_dir} by node in-degrees, that is,
$$p^*_v=\frac{p_v}{\din_v}, \text{ for } v=1,2,...,N.$$
Another option is regularized adjustment, which produces the \textit{regularized} PPR (rPPR) vector,
$$p^\tau_v=\frac{p_v}{\din_v+\tau}, \text{ for } v=1,2,...,N,$$
where $\tau>0$ is the regularization parameter.
The regularized adjustment greatly stabilize the PPR clustering in practice, by removing nodes with extremely low in-degrees (see Section \ref{sxn:data} for more details). 
Adjusted PPR for directed graphs is a local algorithm so long as $\din$ is available with a local query, for example, the Twitter friendship graph. 

\begin{algorithm}
	\caption{Approximate PPR Vector (directed)}
	\label{alg:ppr_dir}
	\begin{algorithmic}
		\REQUIRE Directed graph $G$, preference vector $\pi$, teleportation constant $\alpha$, and tolerance $\epsilon$.\\
		\STATE \textbf{Initialize} $p\leftarrow0$, $r\leftarrow\pi$, $\alpha'\leftarrow\alpha/(2-\alpha)$.
		\WHILE {$\exists u\in V$ such that $r_u\ge \epsilon \dout_u$}
		\STATE Sample a vertex $u$ uniformly at random, satisfying $r_u\ge \epsilon \dout_u$.
		\STATE $p_u\leftarrow p_u+\alpha' r_u$.
		\FOR {$v:(u,v)\in E$}
		\STATE $r_v\leftarrow r_v+(1-\alpha')r_u/(2\dout_u)$.
		\ENDFOR
		\STATE $r_u\leftarrow (1-\alpha')r_u/2$.
		\ENDWHILE
	\end{algorithmic}
	\textbf{Return:} $\epsilon$-approximate PPR vector $p$.
\end{algorithm}

\begin{algorithm}
	\caption{PPR Clustering (directed)}
	\label{alg:lc_dir}
	\begin{algorithmic} [1]
		\REQUIRE Directed graph $G$, seed node $v_0$, the desired size of local cluster $n$, and an optional regularization parameter $\tau$.
		\STATE Calculate the approximate PPR vector $p$ (Algorithm \ref{alg:ppr_dir}). 
		
		\STATE Adjust the PPR vector $p$ with: \\
			Option (a): node in-degrees, $p^*_v\leftarrow p_v/\din_v$,\\
			Option (b): regularized node in-degrees, $p^\tau_v\leftarrow p_v/(\din_v+\tau)$.
		
		\STATE Rank all vertices according to the aPPR vector $p^*$ or $p^\tau$. 
	\end{algorithmic}
	\textbf{Return:} local cluster -- $n$ top-ranking nodes.
\end{algorithm}

\section{Personalized PageRank in Random Graphs} \label{sxn:spl}

This section establishes several concentration results for the local clustering algorithm using the adjusted PPR vector (Algorithm \ref{alg:lc} and \ref{alg:lc_dir}) under the DC-SBM. 
The results show that if the graph is generated from the DC-SBM, then PPR clustering returns the desired local cluster with high probability. 
Since in Algorithm \ref{alg:lc_dir}, the calculation for PPR vectors only relies on node out-degrees and the adjustment step solely utilizes node in-degrees, it is not difficult to distinguish $\din$ and $\dout$. Thus, we state the results in undirected graphs for simplicity. One can draw the analogous conclusions for directed graphs by tracing the proof step by step.

We first present a useful tool that controls the entrywise errors of a PPR vector in random graphs. Recall that $\p$ is the stationary distribution of probability transition matrix $\Q=\alpha \Pi+(1-\alpha)\cP$.
%	Note that $\Pi$ adds at most 1 to the rank of $\Q$, and because we presume $B$ is positive definite $\cP$ has exactly $K$ positive eigenvalues among other zeros (see Supplementary Materials \ref{sxn:parameter}). 
For any vector $x\in\R^n$, define the vector infinity norm as $\|x\|_\infty=\max_i|x_i|$.
The following theorem bounds the entrywise error of the stationary distribution of $\Q$.

\begin{theorem} [Concentration of the PPR vectors] \label{thm:concPPR}
	\dcsbmset Let $p$ and $\p$ be the PPR vector corresponding to random transition matrix $P$ and its population version $\cP$ respectively, with the same teleportation constant $\alpha$. Let $p^*, \p^* \in [0,1]^N$ be the adjusted PPR vector of $p$ and $\p$. Let $\delta$ be the average expected node degrees, that is, $\delta=\frac{1}{N}\sum_{v\in V}\cd_{v}$. 
	Assume that  $\rho=\frac{\max_{v\in V}\cd_{v}}{\min_{v\in V}\cd_{v}}$ is bounded by some finite constant and that
	\begin{equation}\label{eq:assm1}
	\delta>c_0(1-\alpha)^2\log{N},
	\end{equation}
	for some sufficiently large constant $c_0>0$. Then, with probability at least $1-\cO(N^{-5})$,
	\begin{eqnarray*}
		\frac{\|p-\p\|_\infty}{\|\p\|_\infty} \le c_1(1-\alpha)\sqrt{\frac{\log{N}}{\delta}},
		&\text{and}&
		\frac{\|p^*-\p^*\|_\infty}{\|\p^*\|_\infty} \le c_2(1-\alpha)\sqrt{\frac{\log{N}}{\delta}},
	\end{eqnarray*}
	for some sufficiently large constant $c_1, c_2>0$.
\end{theorem}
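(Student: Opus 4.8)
The plan is to pass from the non-symmetric transition matrix to a symmetric resolvent, reducing the claim to an entrywise perturbation bound for a single column of that resolvent together with degree concentration. Write $\tilde A=D^{-1/2}AD^{-1/2}$ and $\tilde{\A}=\D^{-1/2}\A\D^{-1/2}$ for the symmetric normalized adjacency matrices and set $R=(I-(1-\alpha)\tilde A)^{-1}$, $\cR=(I-(1-\alpha)\tilde{\A})^{-1}$; both are well defined because $\tilde A,\tilde{\A}$ have spectrum in $[-1,1]$, and $\|R\|,\|\cR\|\le 1/\alpha$. Since $P=D^{-1/2}\tilde A D^{1/2}$, the series form in Proposition \ref{ppr_sol} gives $p=\alpha\, d_{v_0}^{-1/2}D^{1/2}R\pi$, i.e. $p_u=\alpha (d_u/d_{v_0})^{1/2}R_{uv_0}$ and $p^*_u=\alpha (d_u d_{v_0})^{-1/2}R_{uv_0}$, with the identical formulas for $\p,\p^*$ in terms of $\cd_v$ and $\cR$. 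It therefore suffices to bound $|R_{uv_0}-\cR_{uv_0}|$ uniformly in $u$ and to control the degree prefactors.

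Next I would fix the two high-probability events. (a) \emph{Degree concentration}: each $d_v$ is a sum of independent Bernoullis with mean $\cd_v$, so Bernstein's inequality and a union bound over the $N$ nodes give $|d_v-\cd_v|\le C\sqrt{\cd_v\logN}$ for all $v$ at once with probability $1-\cO(N^{-5})$; as $\rho$ is bounded, $\cd_v\asymp\delta$, and under \eqref{eq:assm1} this yields $|d_v/\cd_v-1|\lesssim\sqrt{\logN/\delta}$, which controls every degree prefactor above. (b) \emph{Spectral concentration}: decomposing $\tilde A-\tilde{\A}=\D^{-1/2}(A-\A)\D^{-1/2}+(\text{degree-fluctuation terms})$, the centered term is handled by matrix concentration for normalized adjacency matrices of the DC-SBM and the remaining terms by (a), yielding $\|\tilde A-\tilde{\A}\|\lesssim\sqrt{\logN/\delta}$ on the same event. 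The assumption $\delta>c_0(1-\alpha)^2\logN$ makes $(1-\alpha)\sqrt{\logN/\delta}$ smaller than any prescribed constant, which both drives the final bound and legitimizes the perturbation step below.

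The heart of the argument is the resolvent identity $R-\cR=(1-\alpha)\,\cR(\tilde A-\tilde{\A})R$, so that $(R-\cR)\pi=(1-\alpha)\,\cR(\tilde A-\tilde{\A})R\pi$ and the task reduces to an $\ell_\infty$ bound on the right-hand side. Operator-norm control only gives an $\ell_2$ bound, so here I exploit that $\tilde{\A}$ has rank at most $K$: writing $\tilde{\A}=U\Lambda U\T$ one obtains $\cR=I+L$ with $L=U\big[(I-(1-\alpha)\Lambda)^{-1}-I\big]U\T$ of rank $\le K$. Splitting $\cR=I+L$, the identity part reduces to bounding $\big((\tilde A-\tilde{\A})R\pi\big)_u$ entrywise, which I would handle by a leave-one-out decoupling: replacing $R$ by a copy $R^{(u)}$ formed from the graph with the $u$-th row and column of $A$ resampled renders row $u$ of the noise independent of the vector it multiplies, after which a scalar Bernstein bound controls that entry and the operator-norm bound from (b) absorbs the leave-one-out replacement error. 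The low-rank part $L(\tilde A-\tilde{\A})R\pi$ is bounded in $\ell_\infty$ through the $\ell_{2\to\infty}$ norm of $U$ combined with the $\ell_2$ bound on $(\tilde A-\tilde{\A})R\pi$, using the incoherence (delocalization) of the population eigenvectors $U$, which holds because $\rho$ is bounded. This entrywise step is the main obstacle and is exactly where the element-wise eigenvector machinery is needed.

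Finally I would reassemble. Combining the uniform bound on $|R_{uv_0}-\cR_{uv_0}|$ with the degree-prefactor control from (a) bounds $\|p-\p\|_\infty$ and $\|p^*-\p^*\|_\infty$. Dividing by $\|\p\|_\infty\ge\p_{v_0}=\alpha\cR_{v_0v_0}\ge\alpha$ (using that $\cR$ is a convergent nonnegative Neumann series, so $\cR_{v_0v_0}\ge1$) and by $\|\p^*\|_\infty=\p^*_{v_0}\ge\alpha/\cd_{v_0}$ (the maximality of the seed entry is Lemma \ref{lem:top}) produces the two stated relative bounds, the single factor $(1-\alpha)$ coming from the resolvent identity and $\sqrt{\logN/\delta}$ from events (a)--(b). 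I would track the $\alpha$-dependence carefully to confirm the $1/\alpha$ from $\|\cR\|$ cancels against these lower bounds, leaving no surplus $\alpha^{-1}$. The directed case follows verbatim, forming $P$ from $\Do$ and adjusting by $\Di$.
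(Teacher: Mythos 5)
Your proposal is correct in its architecture but follows a genuinely different route from the paper. The paper never symmetrizes: it works directly with the stationary equation $p\T=p\T Q$ for the non-symmetric $Q=\alpha\Pi+(1-\alpha)P$, decomposes $p_w-\p_w=(Q_{\cdot w}-\Q_{\cdot w})\T\p+Q\T_{\cdot w}(p-\p)$, controls the first term by Bernstein/Hoeffding, and controls the second by introducing leave-one-out transition matrices $\Qw$ and $\QwG$ together with a Davis--Kahan-type perturbation lemma for reversible Markov chains in the $\p$-weighted inner product (Lemma S3, after \citet{chen2017spectral}); this produces a self-consistent inequality $\|p-\p\|_\infty\le c_2(1-\alpha)\sqrt{\logN/\delta}\,\|\p\|_\infty+c_3\|p-\p\|_\infty$ that is then solved. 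You instead pass to the symmetric resolvent $R=(I-(1-\alpha)D^{-1/2}AD^{-1/2})^{-1}$, use the resolvent identity, and split the population resolvent as identity plus a rank-$K$ part, handling the identity part by leave-one-out plus Bernstein and the low-rank part by $\ell_{2\to\infty}$ delocalization of the population eigenvectors. Both routes hinge on the same two ingredients (degree concentration and leave-one-out decoupling) and yield the same rate and probability budget; what yours buys is avoiding the weighted-norm Markov-chain machinery, the spectral gap $\gamma$ and the ratio $\kappa=\p_{\max}/\p_{\min}$, and getting the normalization $\|\p\|_\infty\ge\alpha$ for free from the Neumann series, at the price of needing explicit incoherence of the population eigenvectors (which is where bounded $\rho$ and growing block sizes enter) and a careful treatment of the degree-fluctuation terms hidden in $D^{-1/2}AD^{-1/2}-\D^{-1/2}\A\D^{-1/2}$. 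Two points you should make explicit when writing it out: the Bernstein step applied to $\sum_v(\tilde A-\tilde\A)_{uv}(R^{(u)}\pi)_v$ involves the random quantity $\|R^{(u)}\pi\|_\infty$, so you still need the same self-bounding/bootstrap argument the paper uses to close the loop; and the multiplicative degree control $|d_v/\cd_v-1|\lesssim\sqrt{\logN/\delta}$ genuinely requires $\delta\gtrsim\logN$ rather than $(1-\alpha)^2\logN$ when $\alpha$ is near $1$, though this looseness is present in the paper's own Lemma S2 and is not specific to your argument.
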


The proof of Theorem \ref{thm:concPPR} invokes the elementary eigenvector perturbation bound for asymmetric matrices, an analog to the celebrated Davis-Kahan $\sin\Theta$ theorem \citep{davis1970rotation}, and the novel leave-one-out technique due to \citet{chen2017spectral}. The detailed proof is given in the Supplementary Materials \ref{sxn:rand} to the paper. 

Theorem \ref{thm:concPPR} demonstrates that if the expected average degree $\delta$ exceeds $(1-\alpha)^2\log{N}$ to some sufficiently large extent, then with high probability, the random aPPR vector concentrates around the population aPPR vector in terms of all entries. 
In fact, the concentration statement holds for any valid preference vector $\pi$. Hence, the classic PageRank vector and some other variants also enjoy the entrywise error bounds, so long as they can be written as the solution to the linear system (\ref{eqn:ppr_def}).

Next, we introduce a separation measure of the DC-SBM. 
Recall that one can conduct a local clustering task by selecting nodes ranked by the adjusted PPR vector $p^*$. 
In the population version, it is equivalent to distinguishing between $\tp^*_1$ and $\tp^*_k$, for all $k=2,3,...,K$, which also characterizes the distance from the desired local cluster (block 1) to its complement set (the other blocks). 
Only if they are sufficiently separated, can the local cluster be identifiable in the sample. 
Due to Lemma \ref{lem:top}, we assume without loss of generality that the second block has the second highest value in the ``block-wise'' aPPR vector, that is, $\tp^*_1>\tp^*_2\geq\tp^*_k$ for $k=3,4,...,K$.
%	\begin{equation*}
%	\tp^*_1>\tp^*_2\geq...\geq\tp^*_K.
%	\end{equation*}
%	Then, it is simply the gap between $\tp^*_1$ and $\tp^*_2$. With this thinking,
Then, we define the \textit{separation measure} $\Delta_\alpha\in(0,1]$, 
$$\Delta_\alpha=\frac{\tp^*_1-\tp^*_2}{\tp^*_1},$$ 
which turns out to be crucial in determining the sample complexity required to guarantee the exact recovery. We remark that $\Delta_\alpha$ is an increasing function of the teleportation constant, hence the subscript $\alpha$.

With Theorem \ref{thm:concPPR} and the separation measure, we then give following corollary that bounds the accuracy of Algorithm \ref{alg:lc}, in terms of graph edge density.

\begin{corollary} [Exact recovery by adjusted PPR vector] \label{cor:main}
	For any seed nodes, let $C \subset V$ be the local cluster of $n$ nodes returned by Algorithm \ref{alg:lc} with teleportation constant $\alpha$ and tolerance $\epsilon$, and $\mathcal{C}\subset V$ be the nodes in the seed node's block. Assume that $\rho<c_0$, $\epsilon\le c_1(1-\alpha)\tp^*_1\sqrt{\log{N}/\delta}$, 
	%		 $\delta>c_1(1-\alpha)^2\log{N}$, % don't need this because $\Delta_\alpha\le1$.
	and that
	\begin{equation}\label{eq:assm2}
	\delta > 16 c_2\left(\frac{1-\alpha}{\Delta_\alpha}\right)^2 \log{N},
	\end{equation}
	for some sufficiently large constants $c_0,c_1,c_2>0$. If the desired size of the local cluster $n=\left|\mathcal{C}\right|$, then with probability at least $1-\cO(N^{-5})$, we have $C=\mathcal{C}$.
	%		$$C=\mathcal{C}.$$
\end{corollary}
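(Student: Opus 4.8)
The plan is to show that, on a high-probability event, the adjusted vector actually computed by Algorithm \ref{alg:lc} ranks every node of the seed block $\mathcal{C}$ strictly above every node outside $\mathcal{C}$; since the algorithm returns the $n=|\mathcal{C}|$ top-ranked nodes, this forces $C=\mathcal{C}$. The population aPPR vector is the ideal target: by Theorem \ref{explicit2} it is block-constant, $\p^*_u=\tp^*_{z(u)}$, and by Lemma \ref{lem:top} applied to the block-wise chain the seed block attains the strictly largest value, so $\|\p^*\|_\infty=\tp^*_1$, and under the ordering $\tp^*_1>\tp^*_2\ge\tp^*_k$ the population gap separating $\mathcal{C}$ from its complement is exactly $\tp^*_1-\tp^*_2=\Delta_\alpha\tp^*_1$. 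It therefore suffices to prove that the total entrywise deviation between the computed aPPR vector (entries $p^\epsilon_v/d_v$, with $p^\epsilon$ the $\epsilon$-approximate vector of Algorithm \ref{alg:ppr}) and $\p^*$ is smaller than half of this gap.

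Next I would split that deviation by the triangle inequality, passing through the exact random aPPR vector $p^*$ with entries $p_v/d_v$. The algorithmic piece is controlled by Proposition \ref{prop:approx}: since the raw approximation error at a visited node $u$ is at most $\epsilon d_u$, dividing by $d_u$ shows the adjusted error there is at most $\epsilon$. The statistical piece is exactly Theorem \ref{thm:concPPR}, which under the standing assumptions ($\rho$ bounded and $\delta\gtrsim(1-\alpha)^2\log N$) gives $\|p^*-\p^*\|_\infty\le c'(1-\alpha)\tp^*_1\sqrt{\log N/\delta}$ on an event of probability $1-\cO(N^{-5})$, after substituting $\|\p^*\|_\infty=\tp^*_1$. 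Combining these with the hypothesis $\epsilon\le c_1(1-\alpha)\tp^*_1\sqrt{\log N/\delta}$ bounds the whole deviation by $c(1-\alpha)\tp^*_1\sqrt{\log N/\delta}$ for a combined constant $c=c_1+c'$.

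Finally I would verify that the density assumption \eqref{eq:assm2} makes this bound strictly below $\tfrac12\Delta_\alpha\tp^*_1=\tfrac12(\tp^*_1-\tp^*_2)$: taking $c_2$ in \eqref{eq:assm2} large enough that $16c_2\ge 4c^2$ yields $c(1-\alpha)\sqrt{\log N/\delta}<\tfrac12\Delta_\alpha$, the factor $16$ simply providing slack so the single ``sufficiently large'' constant absorbs $c^2$. On this event every $u\in\mathcal{C}$ has $p^\epsilon_u/d_u>\tp^*_1-\tfrac12\Delta_\alpha\tp^*_1$, while every $v\notin\mathcal{C}$ has $p^\epsilon_v/d_v<\tp^*_2+\tfrac12\Delta_\alpha\tp^*_1$; since $\tp^*_2+\tfrac12\Delta_\alpha\tp^*_1=\tp^*_1-\tfrac12\Delta_\alpha\tp^*_1$, the two groups are perfectly separated, so the top $n$ nodes are precisely $\mathcal{C}$ and $C=\mathcal{C}$.

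I expect the main obstacle to be the coupling of the two error sources rather than either one alone: the hypothesis on $\epsilon$ is calibrated exactly so the deterministic approximation error sits at the same order as the random fluctuation, and the constants must be tracked so that $c^2$ fits under the $16c_2$ budget in \eqref{eq:assm2}. A subtler technical point is the \emph{uniform} use of Proposition \ref{prop:approx}: one must confirm that every node of $\mathcal{C}$ is in fact visited by Algorithm \ref{alg:ppr}, so its approximate entry is not spuriously zero, whereas any unvisited node (automatically assigned value $0$) can only rank lower and so does no harm. This is precisely where the smallness of $\epsilon$ relative to $\tp^*_1$ is needed.
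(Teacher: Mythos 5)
Your proposal is correct and follows essentially the same route as the paper's proof: a triangle-inequality decomposition of ${p^\epsilon}^*-\p^*$ through the exact random aPPR vector $p^*$, with the statistical piece controlled by Theorem \ref{thm:concPPR}, the algorithmic piece by Proposition \ref{prop:approx}, and the population gap $\Delta_\alpha\tp^*_1$ supplied by Theorem \ref{explicit2} and Lemma \ref{lem:top}. Your closing remark about confirming that every node of $\mathcal{C}$ is actually visited by Algorithm \ref{alg:ppr} is a point the paper's proof passes over silently, but it does not change the argument's structure.
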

The proof of Corollary \ref{cor:main} is presented in Appendix \ref{apd:proof}. We make a few remarks:
\begin{enumerate}[(i)]
	\item Corollary \ref{cor:main} demonstrates that Algorithm \ref{alg:lc} works under a sparse scenario, where the number of edges is exceedingly small in proportion to the number of possible edges in the network. 
	To reach the entrywise control of the aPPR vector and the sufficient separation of local cluster from others, the theorem calls for the expected node degree $\delta$ to grow with only a fraction (for any fixed teleportation constant $\alpha$) of the logarithm of the size of the network, $\logN$. In other words, Algorithm \ref{alg:lc} requires a sample complexity (the number of edges) of order 
	$$\left(\frac{1-\alpha}{\Delta_\alpha}\right)^2N\logN.$$
	\item The results show that $\alpha$ leverages between the sampling complexity and statistical performance of PPR clustering.  
	%		Moving beyond the graph density issue, the theorem suggests the conditions on teleportation constant for PPR clustering. 
	To see this, rearrange condition (\ref{eq:assm2}),
	$$\left(\frac{1-\alpha}{\Delta_\alpha}\right)^2 < {\frac{c'\delta}{\logN}},$$
	for some small enough constant $c'>0$. As $\alpha$ increases, the left hand side is decreasing to zero thus making the condition more likely to hold.  
	On the other hand, as $\alpha$ increases, the tolerance $\epsilon$ must decrease at rate $\cO(1-\alpha)$ in order to guarantee an entrywise control of $p^\epsilon$ analogous to the form in Theorem \ref{thm:concPPR} (Appendix \ref{apd:proof}). 
	More intuitively, if $\epsilon$ does not decrease, then as $\alpha$ goes to one, Algorithm \ref{alg:ppr} may terminate early without reaching all vertices in the desired local cluster. 
	In sum, Algorithm \ref{alg:ppr} and \ref{alg:ppr_dir} need at least $\cO\left(\frac{1}{\alpha(1-\alpha)}\right)$ queries (see Supplementary Materials \ref{sxn:parameter} for an example). This implies that one can approach the conditions in Corollary \ref{cor:main} by setting the teleportation constant sufficiently large, while the computational burden can increase as $\alpha\rightarrow 1$. 
	
	%		\item The theorem facilitates one of the highlights of PPR clustering; allowing the number of clusters $K$ to grow with the size of network $N$. The derivation of accuracy bounds in Theorem \ref{thm:concPPR} does not rely on restricting the model parameter $K$ to be fixed, thus making the PPR methods effective when $K$ grows with $N$, provided that the model assumptions (e.g. $\B$ is positive definite) and the conditions in Corollary \ref{cor:main} are satisfied.
\end{enumerate}

\section{Simulation Studies}\label{sxn:simu}
This section compares the PPR vector and the aPPR vector. The results show the effectiveness and robustness of aPPR vector in detecting a local cluster. 
Experiment 1 utilizes the DC-SBM with a power-law degree distribution and investigates the effects of heterogeneous node degrees. 
Experiment 2 uses the SBM with unequal block sizes to study the influences of heterogeneous block degrees. 
Experiment 3 generates networks from the SBM with equal block sizes and varying edge density to examine the efficacy of PPR methods in sparse graphs. 

In all simulations, we employee the block connectivity matrix $\B$ with homogeneous diagonal elements, $\B_{ii}=b_1$, and homogeneous off-diagonal elements, $\B_{ij}=b_2$ for any $i\neq j$.
Define the signal-to-noise ratio (SNR) to be the expected number of in-block edges divided by the expected number of out-block edges, that is, $b_1/(b_2(K-1))$, where $K$ is the number of blocks.
In particular, we set the SNR to $1.5$ and choose the teleportation constant of $\alpha=0.15$ throughout the section. 
Additional simulation results (illustrating the Theorem \ref{thm:concPPR}) are available from Supplementary Materials \ref{sxn:parameter}.

\subsection{Experiment 1}

\begin{figure} [t]%[!htbp]
	\centering
	\includegraphics[width=.4\textwidth]{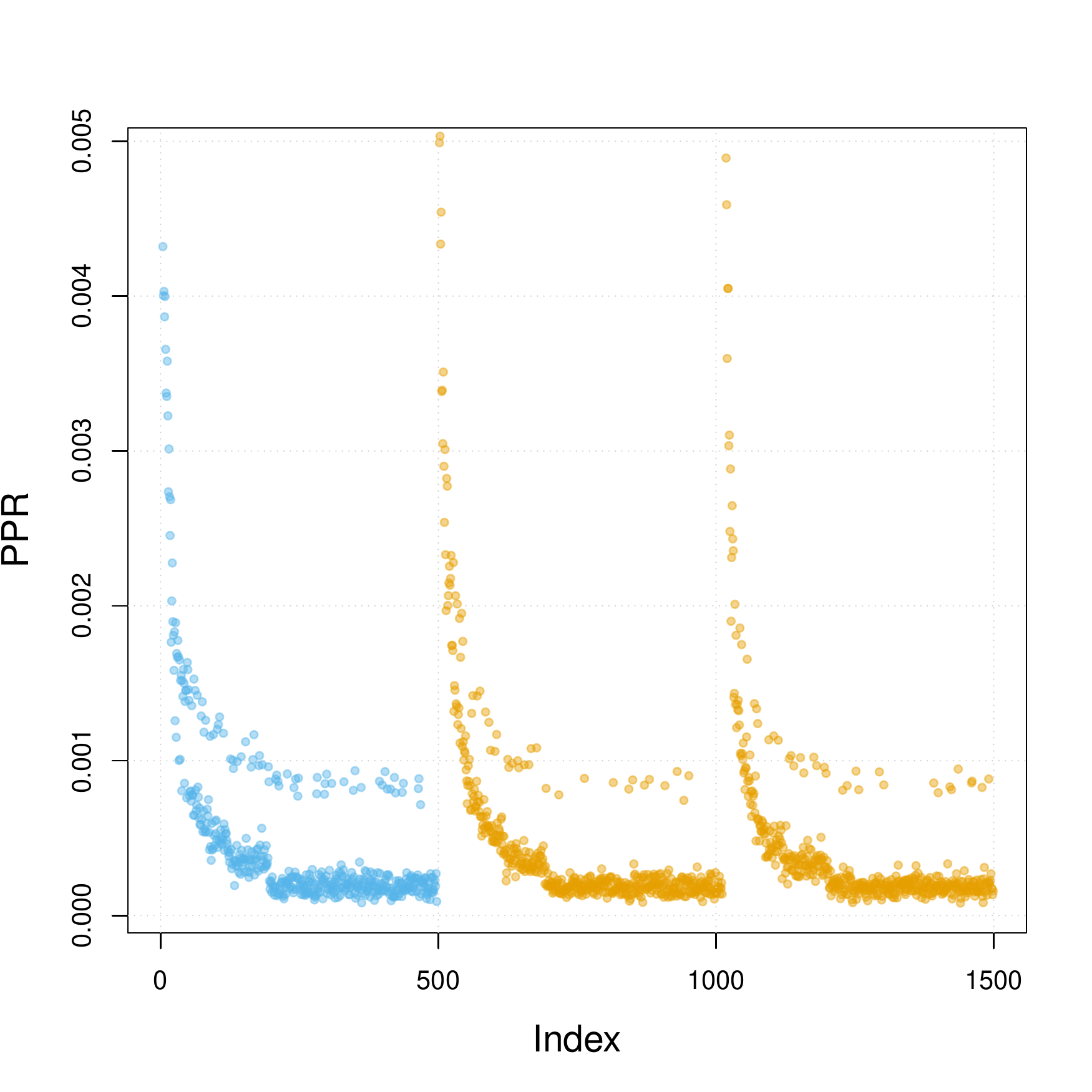} 
	\includegraphics[width=.4\textwidth]{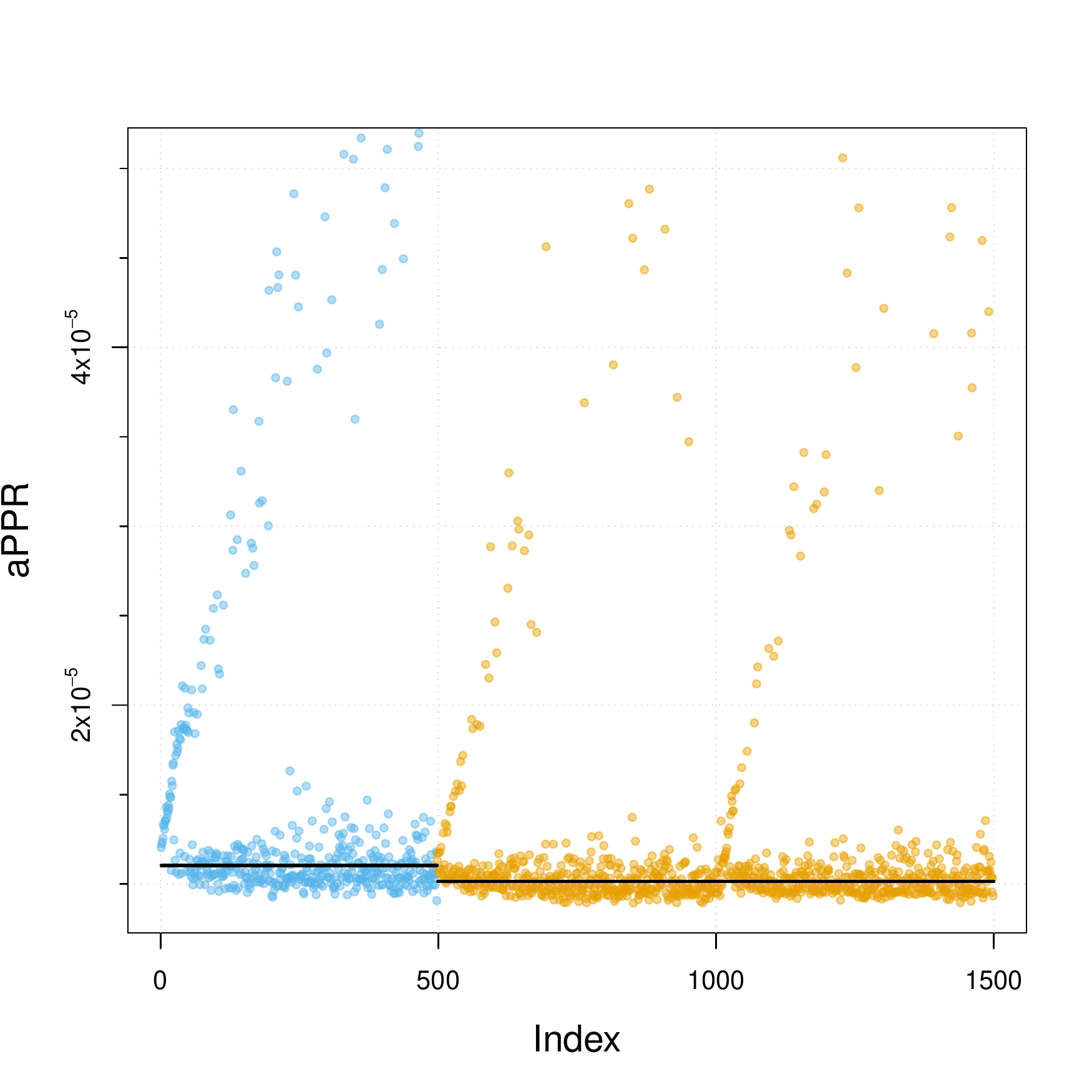} 
	
	\includegraphics[width=.4\textwidth]{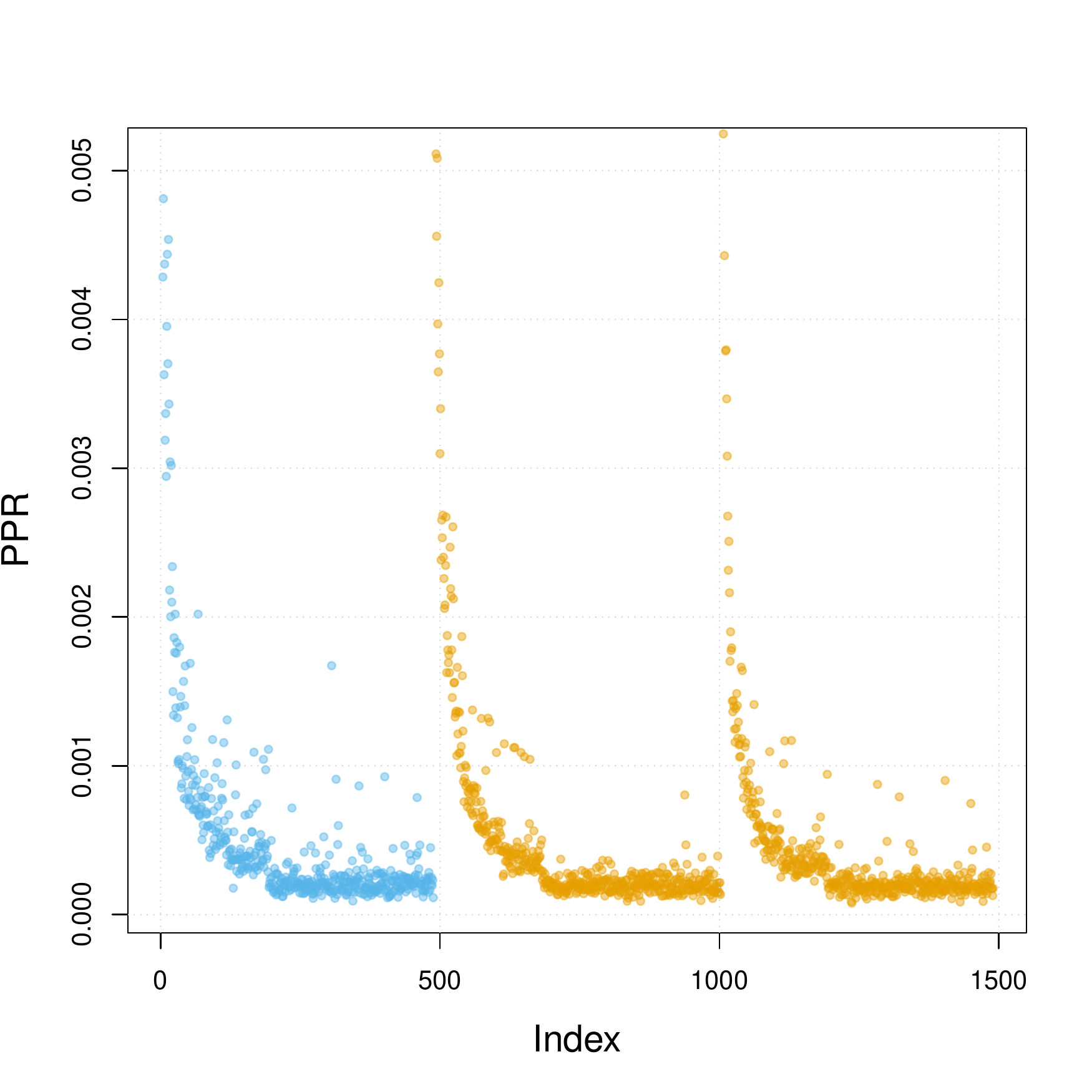} 
	\includegraphics[width=.4\textwidth]{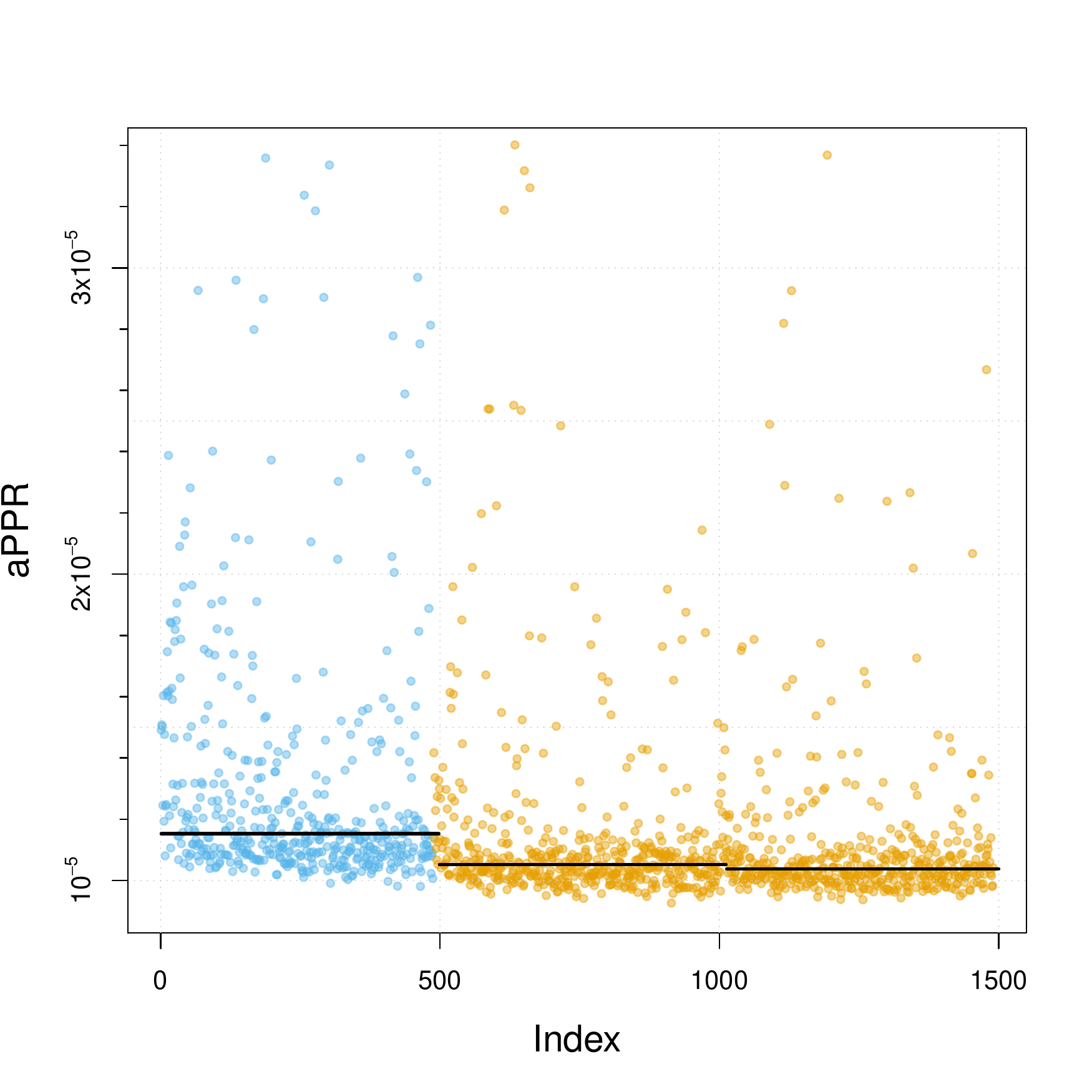} 
	\caption{Comparison of PPR (left two panels) and aPPR (right two panels) under the DC-SBM with one seed node (upper two panels) and ten seed nodes (bottom two panels). Local cluster is in blue and other clusters are colored in yellow. Solid horizontal lines on right panels indicate the median of aPPR values within each cluster.}
	\label{fig:simu1}
\end{figure}

This experiment illustrates how node degree heterogeneity affects the discriminant power in identifying local cluster using a PPR vector or an aPPR vector. The results also illustrate the advantages of having multiple seed nodes. The $\Theta$ parameters from the DC-SBM are drawn from the power law distribution with lower bound $x_{\min}=1$ and shape parameter 
%$\beta\in\left\{2, 2.25, 2.5, 2.75, 3, 3.25\right\}$
$\beta=2.5$. 
%The smaller $\beta$ indicates the greater node degree heterogeneity. 
A random networks were sampled from the DC-SBM with $K=3$, $N=1500$ and equal block sampling proportions, 
$$z(v)\overset{\text{i.i.d.}}{\sim}\text{Multinomial}\left(\frac{1}{3}, \frac{1}{3}, \frac{1}{3}\right),$$
for vertex $v=1,2,...,N$, whose expected average degree ($\delta$) is set to $105$. The PPR vector is calculated with one or ten seeds randomly chosen from block one. 

Figure \ref{fig:simu1} plots PPR values (left two panels) and aPPR values (right two panels) of a random graph generated from the DC-SBM, excluding seed node(s). 
The upper two panels in Figure \ref{fig:simu1} contrast PPR and aPPR when there is only one seed node and the bottom two panels compare two vectors when ten seed nodes are used.
The vertices from the local block in the SBM are colored in blue and the others are in yellow.  The nodes are ordered first by block, then by node degree parameters $\theta$ (left is larger). 
A horizontal line is drawn for each block indicating the median of the aPPR values within that block.

With one seed node (upper two panels), the scatter plots has two clouds within each block. The upper cloud contains the immediate neighbors of the seed node. This separation disappears when multiple seed nodes are used (bottom two panels).
To see the effect of node heterogeneity, the skewed distribution of PPR values in each block demonstrates its bias towards high degree node inside and outside of the seed nodes block in the SBM. In contrast, aPPR values are evenly distributed within blocks, verifying that aPPR vector removes the effects of node degree heterogeneity. 

\subsection{Experiment 2} \label{sxn:simu2}
This experiment compares PPR and aPPR under the SBM with block degree heterogeneity. 
A number of random networks were sampled from the SBM with $K=3$, $N=900$, and geometric block sampling proportions, 
\begin{equation} \label{eq:blocksize}
z(v)\overset{\text{i.i.d.}}{\sim}\text{Multinomial}\left(1, b, b^2\right),
\end{equation}
where $b\in\left\{1.0, 1.2, 1.4, 1.6, 1.8, 2.0\right\}$. When $b$ is larger, the population of nodes in each block becomes more unbalanced and thus inducing greater block degree heterogeneity. 
The block connectivity matrix $\B$ is configured as described in the beginning of this section.
The expected average degree ($\delta$) is set to 70. 
For each sampled network, the size of the first block is assumed known to Algorithm \ref{alg:lc}.
The PPR vector is calculated exactly in place of the approximation PPR vector (Step \ref{alg_a}), with one seed randomly chosen from the first block. 

The top panels of Figure \ref{fig:simu2} displays the PPR vector on an example network with $b=1.4$, demonstrating its preference toward the high degree block (the third block) over local cluster. 
Given the size of the first block, we measure the accuracy by the proportion of vertices belonging to the first block in the returned cluster.
The bottom left panel of Figure \ref{fig:simu2} shows the accuracy of PPR and aPPR for six different values of $b$ (i.e., the geometric ratio in distribution (\ref{eq:blocksize})) where each point is the average of 100 sampled network. The comparison demonstrates that the adjusted PPR vector corrects the bias of PPR caused by block heterogeneity.  
Moreover, block degree heterogeneity degrades the performance of both PPR and aPPR.  Note that aPPR outperforms PPR even when $b=1$; this is likely due to the fact that even when nodes have equal expected degrees in the SBM, the actual node degrees will be heterogeneous due to the randomness in the sampled graph.  In a finite graph, this variability is enough to give aPPR an advantage over PPR.  Asymptotically, this advantage should fade away \citep{kloumann2017block}. 
%	 and is more accurate when there is no block degree heterogeneity.

\begin{figure}[!t]
	\centering
	\includegraphics[width=0.4 \textwidth]{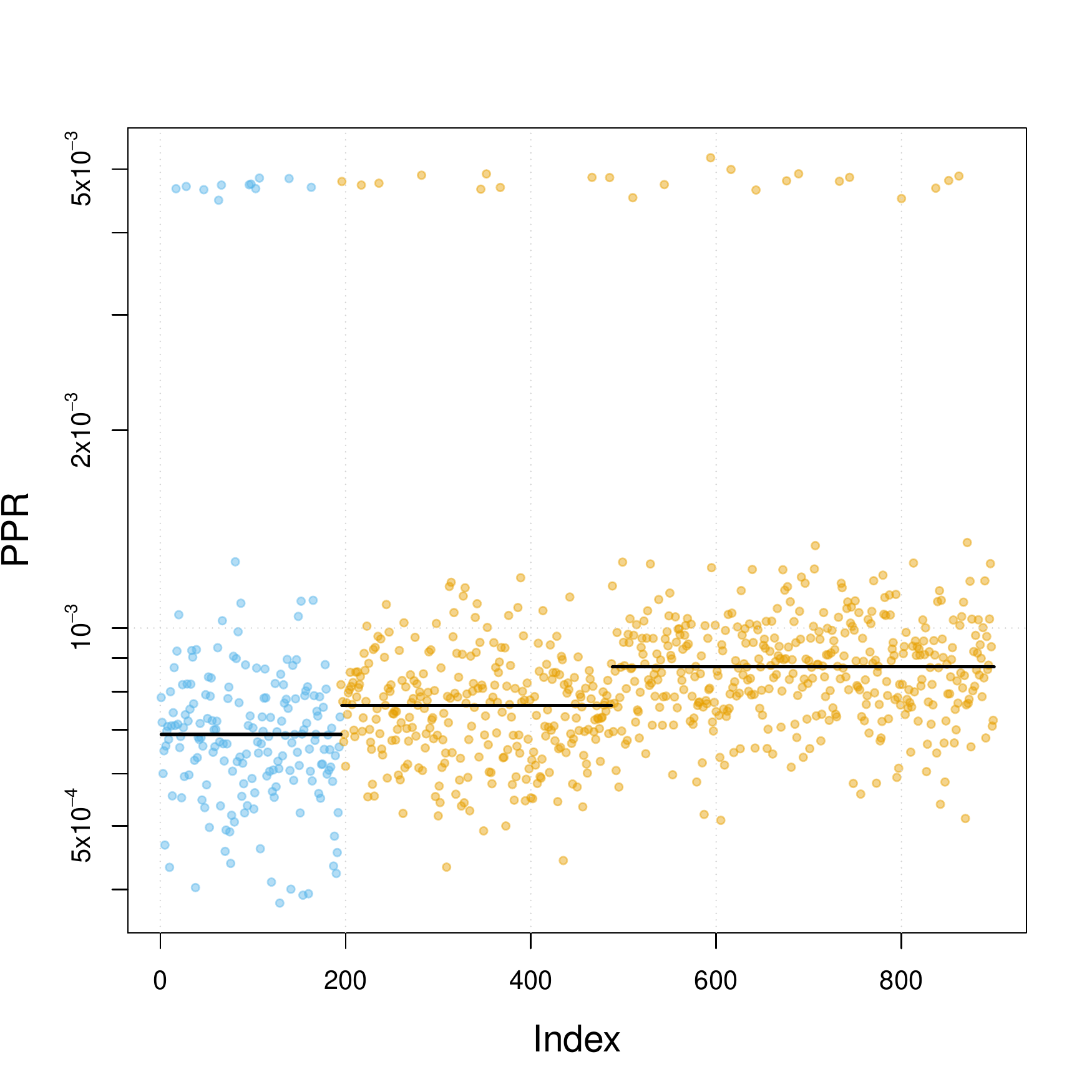} 
	\includegraphics[width=0.4 \textwidth]{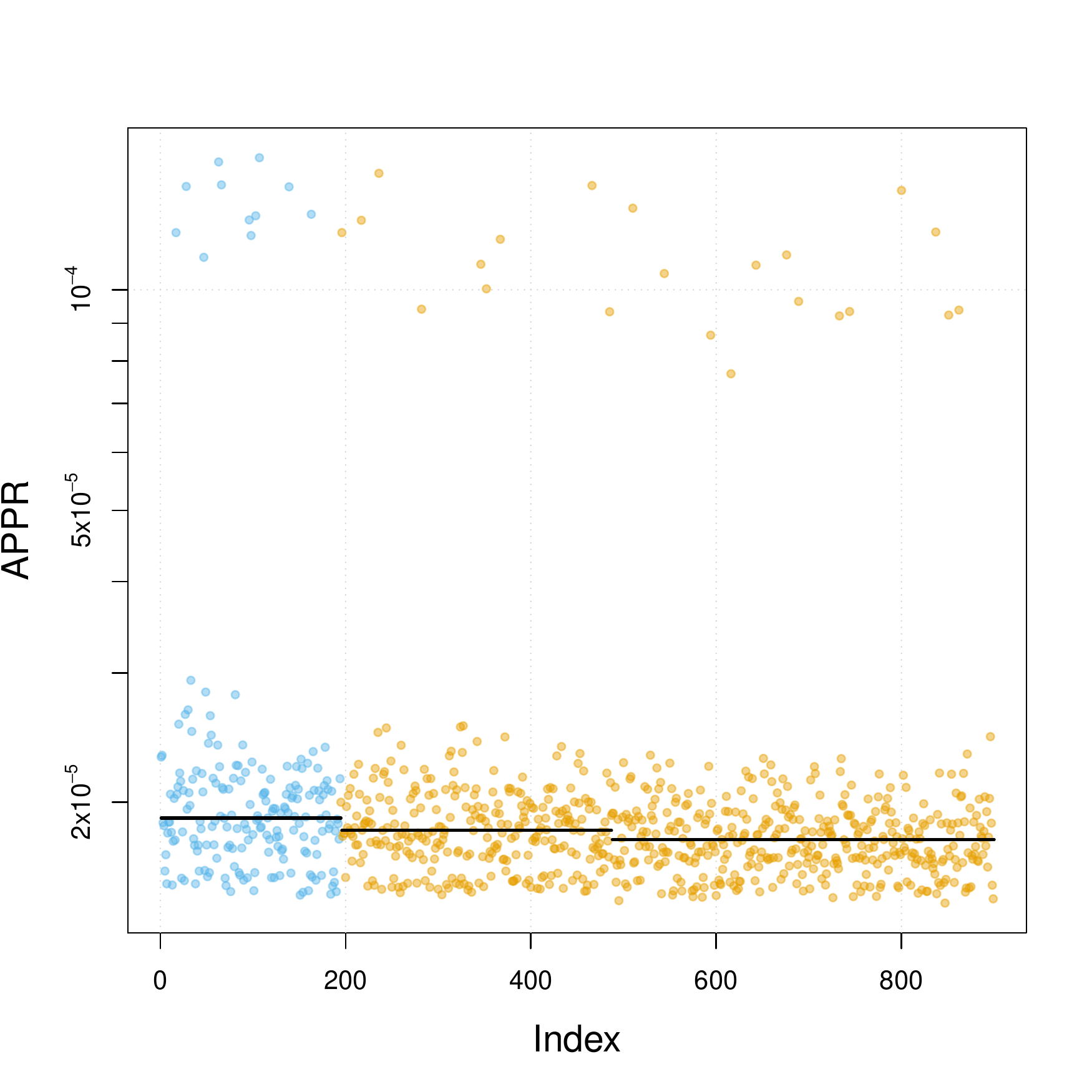} 
	\includegraphics[width=0.4 \textwidth]{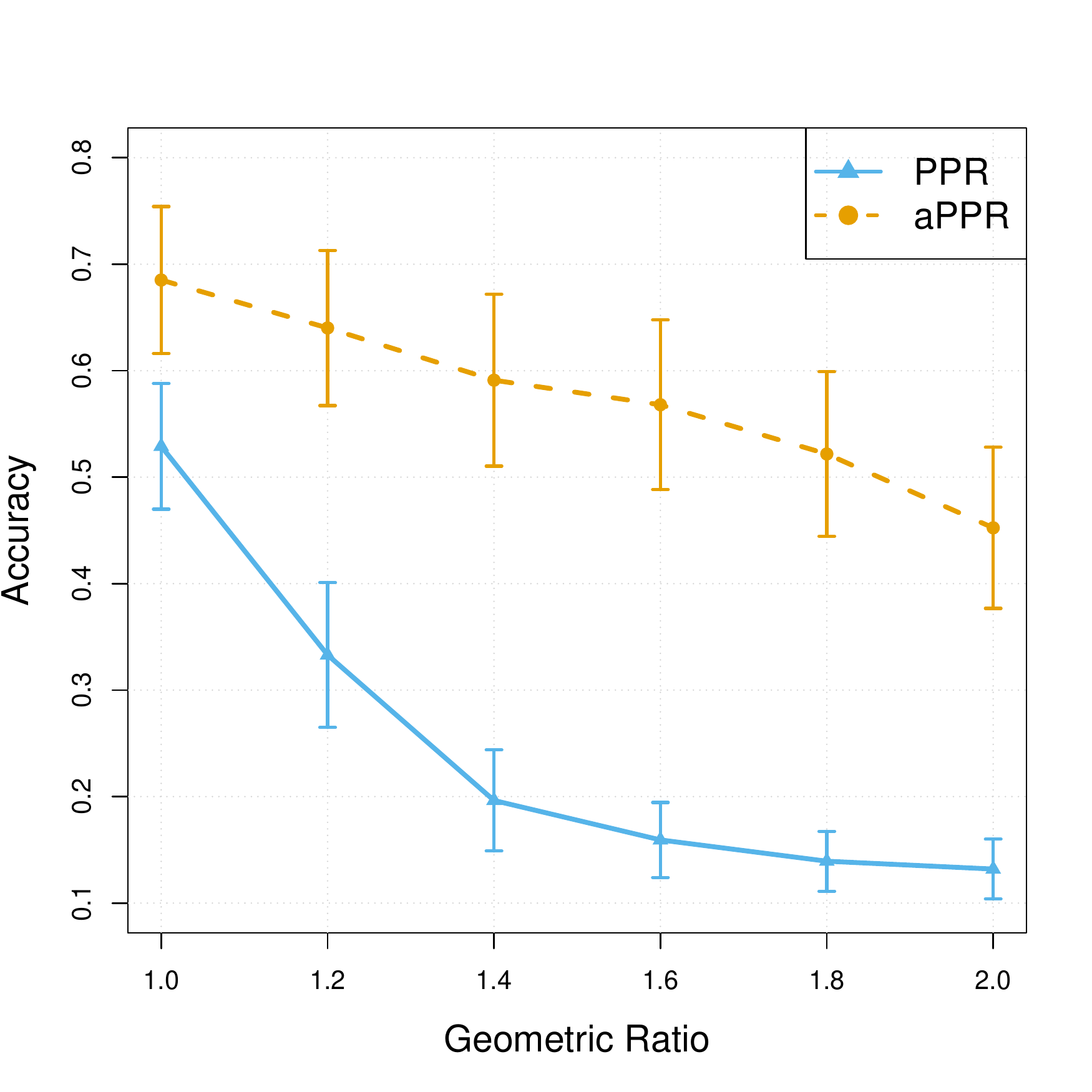}
	\includegraphics[width=0.4 \textwidth]{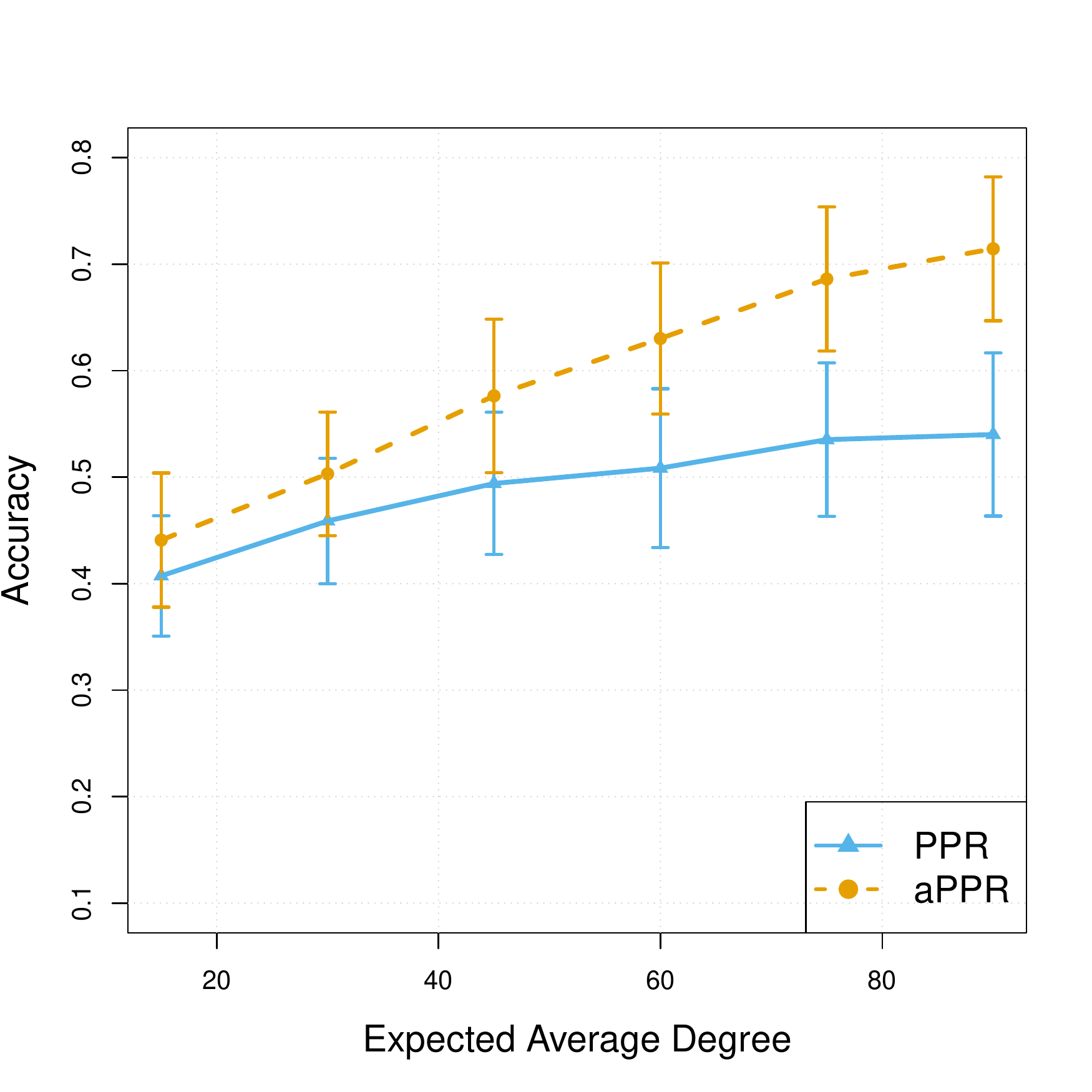}
	\caption{(Top) Simulated network generated from the classic SBM of 3 blocks with block degree heterogeneity. Three horizontal lines indicate the median of PPR and aPPR values within each cluster. 
		(Bottom Left) Comparison of performance for PPR (triangles with solid line) and aPPR (circles with dashed line) under the SBM with different levels of block degree heterogeneity. (Bottom Right) Comparison of performance for PPR and aPPR under the four-parameters SBM with different sparsity. Error bars are drawn using standard deviation.}
	\label{fig:simu2}
\end{figure}

\subsection{Experiment 3}
This experiment investigates the performance of PPR and aPPR under the SBM where there is no heterogeneity in the expected node degrees or block degrees. 
A number of random networks were sampled from the four-parameter stochastic block model, $\text{SBM}(K=3, N=900, b_1=0.6, b_2=0.2)$ \citep{rohe2011spectral}.
Under the four-parameter SBM, each of $K$ blocks has equal size in expectation, $N/K$, and the probability of a connection between two nodes is $b_2$ if they are in two separate blocks, or $b_1$ if in the same one.
%Here, the $\B$ matrix has the same structure as previously described, and the SNR is still three.
In addition, the expected average degree varies, $\delta \in \left\{15, 30, 45, 60, 75, 90\right\}$. 
For every setting, the results are averaged over 100 samples of the network. The PPR vector is calculated with one seeds randomly chosen from block one.
The bottom right panel of Figure \ref{fig:simu2} contrasts the accuracy of PPR and aPPR against six different values of expected average degree, showing that when the sampled graph has minimal degree heterogeneity, the adjusted PPR vector has only slightly higher accuracy than the PPR vector.

\section{A Sample of Twitter} \label{sxn:data}
In this section, we provide a more detailed case study to illustrate the properties of different PPR vectors. We obtain a local cluster of nodes around the seed node @NBCPolitics (NBC Politics) in the Twitter friendship graph. In the Twitter graph, the nodes are called handles or accounts (e.g. @NBCPolitics) and if Twitter handle $i$ follows Twitter handle $j$, then we define this as a directed edge $(i, j)$ pointing from $i$ to $j$. Affiliated with NBC news, NBC Politics specializes in political news coverage and has over 470k followers on Twitter (in-degree) and follows 145 handles (out-degree) as of December 2018. A brief look through @NBCPolitics' following list reveals that it follows a wide range of accounts, from TV programs, reporters and editors affiliated with NBC, to media accounts and journalists of other news outlets as well as politicians.

Data on following and handle profile information were collected through the Standard Twitter Search API. We queried the Twitter friendship graph starting from the seed node @NBCPolitics, using Algorithm \ref{alg:ppr_dir} with teleportation constant $\alpha=0.15$ and termination parameter $\epsilon=10^{-7}$, ending up with 5840 surrounding handles. Through this exercise, we intend to illustrate the properties and applications of local clustering using PPR, aPPR and rPPR vectors, where we set the regularization parameter $\tau$ to 100.

%	We observe a positive correlation between the PPR values and in-degrees (Figure \ref{fig:nbc} Left). In fact, the Kendall's rank correlation $\tau=0.365$ and the Spearman's rank correlation $\rho=0.536$, both differ significantly from zero (with p-values $<10^{-16}$ under the null), agreeing with the results in previous sections \citep{kendall1938new}.
%Eventually, the ranking of the handles is determined by PPR, regularized aPPR or aPPR values. 

\begin{table}[t]
	\caption{\label{tab:ppr} Top 30 handles of PPR with seed node @NBCPolitics and the teleportation constant $\alpha=0.15$ in December 2018.} 
	\centering
	\fbox{
		\begin{threeparttable}
			\scriptsize
			\begin{tabular}{rlHrl}
				%				\hline
				\noalign{\vskip .3mm}  
				& \textbf{Name} & \textbf{Friend} & \textbf{Followers} & \textbf{Description} \\ 
				\hline
				\noalign{\vskip .7mm}  
				1 & Melania Trump & TRUE & 11242283 & This account is run by the Office of First Lady Melania Trump... \\ 
				2 & The White House & TRUE & 17625630 & Welcome to @WhiteHouse! Follow for the latest from President... \\ 
				3 & Chuck Todd & TRUE & 2032038 & Moderator of @meetthepress and @nbcnews political director; ... \\ 
				4 & NBC News & TRUE & 6280551 & The leading source of global news and info for more than 75 ... \\ 
				5 & NBC Nightly News & TRUE & 962290 & Breaking news, in-depth reporting, context on news from ... \\ 
				6 & Andrea Mitchell & TRUE & 1737764 & NBC News Chief Foreign Affairs Correspondent/anchor, Andrea ... \\ 
				7 & Savannah Guthrie & TRUE & 881669 & Mom to Vale \& Charley, TODAY Co-Anchor, Georgetown Law. ... \\ 
				8 & Joe Scarborough & TRUE & 2521215 & With Malice Toward None \\ 
				9 & MSNBC & TRUE & 2261911 & The place for in-depth analysis, political commentary and ... \\ 
				10 & Rachel Maddow MSNBC & TRUE & 9498076 & I see political people... \\ 
				11 & Breaking News & TRUE & 9223158 &  \\ 
				12 & NBC News First Read & TRUE & 53847 & The first place for news and analysis from the @NBCNews Poli... \\ 
				13 & TODAY & TRUE & 4276453 & America's favorite morning show $|$ Snapchat: todayshow \\ 
				14 & Meet the Press & TRUE & 566713 & Meet the Press is the longest-running television show in history ... \\ 
				15 & The Wall Street Journal & TRUE & 16188842 & Breaking news and features from the WSJ. \\ 
				16 & Pete Williams & TRUE & 70062 & NBC News Justice Correspondent. Covers US Supreme Court, ... \\ 
				17 & Mark Murray & TRUE & 97571 & Mark Murray is the senior political editor for NBC News, ... \\ 
				18 & POLITICO & TRUE & 3695835 & Nobody knows politics like POLITICO. Got a news tip for us? ... \\ 
				19 & Katy Tur & TRUE & 587474 & MSNBC anchor @2pm, NBC News correspondent, author of NYT ... \\ 
				20 & Bill Clinton & TRUE & 10697521 & Founder, Clinton Foundation and 42nd President of the United... \\ 
				21 & Kasie Hunt & TRUE & 381704 & @NBCNews Capitol Hill Correspondent. Host, @KasieDC, Sundays... \\ 
				22 & TIME & TRUE & 15584815 & Breaking news and current events from around the globe. Host... \\ 
				23 & Kelly O'Donnell & TRUE & 195765 & White House Correspondent @NBCNews Veteran of Cap Hill \& ... \\ 
				24 & John McCain & TRUE & 3181773 & Memorial account for U.S. Senator John McCain, 1936-2018. To... \\ 
				25 & Peter Alexander & TRUE & 283522 & @NBCNews White House Correspondent / Weekend @TODAYshow ... \\ 
				26 & Hallie Jackson & TRUE & 359099 & Chief White House Correspondent / @NBCNews / @MSNBC Anchor ... \\ 
				27 & Kristen Welker & TRUE & 182244 & @NBCNews White House Correspondent. Links and retweets ... \\ 
				28 & Carrie Dann & TRUE & 37119 & .@NBCNews / @NBCPolitics. RTs not endorsements. \\ 
				29 & Willie Geist & TRUE & 807536 & Host @NBC \#SundayTODAY, Co-Host @Morning\_Joe, “Sunday ... \\ 
				30 & Morning Joe & TRUE & 563650 & Live tweet during the show! Links to must-read op-eds ... \\ 
				\hline
			\end{tabular}
			\begin{tablenotes}[flushleft]
				\footnotesize
				\item Through the PPR vector, the top 30 handles returned to @NBCPolitics include NBC's news related programs and celebrity reporters, comparable mainstream media outlets, as well as prominent political and public figures and institutions. Such results line up with its status as a mainstream political news source, demonstrating clustering effectiveness. Those Twitter handles tend to have millions of followers, showing the PPR vector's bias toward high in-degree.
%				\vspace{1mm}
			\end{tablenotes}
		\end{threeparttable}
	}
\end{table}

We first present the results of PPR. As Table \ref{tab:ppr} shows, the top 30 handles (except @NBCPolitics) with the highest PPR values are a combination of (i) NBC's news related programs such as NBC News, TODAY and Meet the Press; (ii) NBC's political reporters, anchors and editors, from well-known figures like Chuck Todd and Andrea Mitchell to less-known ones like Pete Williams (justice correspondent) and Mark Murray (senior political editor); (iii) other mainstream news outlets such as The Wall Street Journal, POLITICO, and TIME; and (iv) prominent public figures and politicians like Melania Trump, Bill Clinton and John McCain. In light of NBC's status as a mainstream news outlet and the political focus of @NBCPolitics, such results make sound sense. It must also be noted that all the top 30 handles are direct friends of @NBCPolitics's and have at least tens of thousands of followers. The median follower count is 1.4 million, suggesting high in-degrees. In fact, the pattern observed in the top 30 extends to the top 200 handles with the highest PPR values, which include NBC's own programs, journalists, editors and staff; fellow mainstream media outlets and their staff; and prominent public figures, politicians and government institutions (see Supplementary Materials \ref{sxn:top200}). The median in-degree of top 200 handles is around 184k, though there are four handles with less than one thousand followers. One important thing to notice is that among the top 200 handles, the first 139 are all directly followed by @NBCPolitics, with handles having high in-degrees generally ranked higher than those having low in-degrees (although @NBCPolitics follows 145 handles, 6 of them might have privacy protection that has prevented us from accessing their information). The remaining handles on the list, although not directly followed by @NBCPolitics, include five handles associated with NBC, from its news anchor Lester Holt to its News International President. However, the majority of those indirectly followed by @NBCPolitics are mainly high profile political and public figures (like President Trump, Vice President Pence, Hillary Clinton, and Stephen Colbert), government organizations (like WhiteHouse Office of Cabinet Affairs and National Security Council), and mainstream news outlets (like New York Times, CNN and AP) and well-known journalists (like John Dickerson and Anderson Cooper). We can thus conclude that the PPR vector is biased toward popular accounts followed directly by the seed node or indirectly by its friends, reflecting the popular Twitter handles followed by them. This property of the PPR vector can be harnessed by researchers interested in identifying the upstream of a handle, i.e., those Twitter elites who are followed by and might influence the seed node and by extension its followers.

\begin{table}[t]
	\caption{\label{tab:appr}Top 30 handles of aPPR with seed node @NBCPolitics and the teleportation constant $\alpha=0.15$ in December 2018.}
	\centering
	\fbox{
		\begin{threeparttable}
			\scriptsize
			\begin{tabular}{rlHrl}
				%					\hline
				\noalign{\vskip .3mm}  
				& \textbf{Name} & \textbf{Friend} & \textbf{Followers} & \textbf{Description} \\ 
				\hline
				\noalign{\vskip .7mm}  
				1 & Stephanie Palla & TRUE & 198 & Enroll America National Regional Director... \\ 
				2 & Jennifer Sizemore & TRUE & 386 &  \\ 
				3 & Alissa Swango & TRUE & 441 & Director of Digital Programming at @natgeo. All things ... \\ 
				4 & Making a Difference & TRUE & 670 & @NBCNightlyNews' popular feature profiles ordinary ... \\ 
				5 & Ron Whittemore & FALSE &   1 &  \\ 
				6 & Svante Stockselius & FALSE &   3 &  \\ 
				7 & Greg Martin & TRUE & 1161 & Political Booking Producer at @nbcnews @todayshow \\ 
				8 & Area Man & FALSE &   1 & I am Area Man. I pwn your news feed. \\ 
				9 & CELESTIA ROBINSON & FALSE &   2 &  \\ 
				10 & NBC Field Notes & TRUE & 1390 & NBC News correspondents and http://t.co/1eSopOQt8s ... \\ 
				11 & rob adams & FALSE &   2 &  \\ 
				12 & JL & FALSE &   2 &  \\ 
				13 & David Kelsey & FALSE &   1 &  \\ 
				14 & Hank Morris & FALSE &   1 &  \\ 
				15 & Jesse Marks & FALSE &   1 &  \\ 
				16 & Brayden Rainey & FALSE &   1 &  \\ 
				17 & child of the tiger & FALSE &   3 & yet another activist twitter, fighting all those fun... \\ 
				18 & Julie Swango & FALSE &   4 &  \\ 
				19 & Author Dianne Kube & FALSE &   7 & Dianne Kube is an Author with a passion, for family,... \\ 
				20 & Consider the Source & FALSE &   7 &  \\ 
				21 & Adam Edelman & TRUE & 2341 & Political reporter @nbcnews. Wisconsin native, ... \\ 
				22 & Phil McCausland & TRUE & 2519 & @NBCNews Digital reporter focused on the rural-urban... \\ 
				23 & Corky Siemaszko & TRUE & 2538 & Senior Writer at NBC News Digital (former NY Daily ... \\ 
				24 & Sam Petulla & TRUE & 2588 & Editor @cnnpolitics • Usually looking for datasets. ... \\ 
				25 & Ken Strickland & TRUE & 2693 & NBC News Washington Bureau Chief \\ 
				26 & Mike Mullen & FALSE &   7 &  \\ 
				27 & Elyse PG & TRUE & 2697 & White House producer @nbcnews $|$@USCAnnenberg alum ... \\ 
				28 & A. Johnson & FALSE &   2 & Change your thoughts \& you change your world. -Normal... \\ 
				29 & Steve Fenton & FALSE &   4 &  \\ 
				30 & Dobe Pitty Mami & FALSE &  13 &  \\ 
				\hline
			\end{tabular}
			\begin{tablenotes}[flushleft]
				\footnotesize
				\item Through the aPPR vector, the top 30 handles returned to @NBCPolitics include some relevant handles (NBC's news team and their counterparts in other mainstream news organizations) and many obscure ones (handles with few followers and no profile descriptions). This results from the aPPR vector's bias toward extreme low degree and introduces noise to the clustering results.
			\end{tablenotes}
%			\vspace{1mm}
		\end{threeparttable}
	}
\end{table}

In contrast, the aPPR vector up-weights handles that are much less popular (i.e., those with low in-degrees). As shown in Table 2, the 30 handles with the highest aPPR values include NBC's reporters, writers, editors, producers, and programs, all of whom have a few hundred to a few thousand followers. The 30 handles also include those unaffiliated with NBC, such as director of a non-profit (Enroll America), director of digital programming at National Geographic, and @CNNPolitics' editor. All of them are professionally related to the seed node. This testifies to the applicability of aPPR for locating an idiosyncratic local cluster around a seed node. However, more than half (17) of the 30 handles are obscure and not directly followed by @NBCPolitics. The reason they appear on the list is probably that they have just one and at most a dozen followers (recall that aPPR divides by in-degree). In fact, 160 of the top 200 handles are not direct friends of @NBCPolitics; the median in-degree of the top 200 handles is merely 8 (Supplementary Materials \ref{sxn:top200}). Those handles might have ended up on the list due to a combination of luck and, more importantly, their extremely low in-degrees. In this regard, “noise” can be introduced by the aPPR vector because it prioritizes handles with extremely low in-degrees that are possibly several degrees separated from the seed node. 

\begin{table}[t]
\caption{\label{tab:rppr}Top 30 handles of rPPR with seed node @NBCPolitics and the teleportation constant $\alpha=0.15$ in December 2018.} 
\centering
\fbox{
\begin{threeparttable}
\scriptsize		
\begin{tabular}{rlHrl}
	%					\hline
	\noalign{\vskip .3mm}  
	& \textbf{Name} & \textbf{Friend} & \textbf{Followers} & \textbf{Description} \\ 
	\hline
	\noalign{\vskip .7mm}  
	1 & Stephanie Palla & TRUE & 198 & Enroll America National Regional Director http://t.co/X6jJIE... \\ 
	2 & Jennifer Sizemore & TRUE & 386 &  \\ 
	3 & Alissa Swango & TRUE & 441 & Director of Digital Programming at @natgeo. All things food.... \\ 
	4 & Making a Difference & TRUE & 670 & @NBCNightlyNews' popular feature profiles ordinary people do... \\ 
	5 & Greg Martin & TRUE & 1161 & Political Booking Producer at @nbcnews @todayshow \\ 
	6 & NBC Field Notes & TRUE & 1390 & NBC News correspondents and http://t.co/1eSopOQt8s reporters... \\ 
	7 & Adam Edelman & TRUE & 2341 & Political reporter @nbcnews. Wisconsin native, Bestchester ... \\ 
	8 & Phil McCausland & TRUE & 2519 & @NBCNews Digital reporter focused on the rural-urban divide.... \\ 
	9 & Corky Siemaszko & TRUE & 2538 & Senior Writer at NBC News Digital (former NY Daily News ... \\ 
	10 & Sam Petulla & TRUE & 2588 & Editor @cnnpolitics • Usually looking for datasets. You can ... \\ 
	11 & Ken Strickland & TRUE & 2693 & NBC News Washington Bureau Chief \\ 
	12 & Elyse PG & TRUE & 2697 & White House producer @nbcnews $|$@USCAnnenberg alum $|$ LA kid ... \\ 
	13 & Hasani Gittens & TRUE & 3002 & Level 29 Mage. Senior News Ed. @NBCNews. Sheriff of Nattahna... \\ 
	14 & Scott Foster & TRUE & 3464 & Senior Producer, Washington @NBCNEWS @TODAYshow \\ 
	15 & Zach Haberman & TRUE & 3693 & Lead Breaking News Editor, @NBCNews. Previously had other jobs... \\ 
	16 & Emmanuelle Saliba & TRUE & 4004 & Head of Social Media Strategy @Euronews $|$ Launched \#THECUBE ... \\ 
	17 & Alex Johnson & TRUE & 4371 & News, data and analysis for @NBCNews; data geek; ... \\ 
	18 & Savannah Sellers & TRUE & 4637 & News junkie. Host of NBC's "Stay Tuned" on Snapchat. Storyte... \\ 
	19 & NYC Clothing Bank & FALSE & 154 & We distribute new, never-worn clothing and merchandise... \\ 
	20 & Shaquille Brewster & TRUE & 5362 & @NBCNews Producer/Politics $|$ @HowardU Alum$|$ Journalist $|$ Pol... \\ 
	21 & Joey Scarborough & TRUE & 6277 & NBC News Social Media Editor. New York Daily News Alum. RTs ... \\ 
	22 & Jane C. Timm & TRUE & 6478 & @nbcnews political reporter and fact checker. More fun than ... \\ 
	23 & Anthony Terrell & TRUE & 6827 & Emmy Award winning journalist. Political observer. Covered ... \\ 
	24 & NBC News Videos & TRUE & 7838 & The latest video from http://t.co/xPyvMOTEF6 \\ 
	25 & Libby Leist & TRUE & 7946 & Executive Producer @todayshow \\ 
	26 & Voices United & FALSE & 310 & Voices United is a non profit educational organization ... \\ 
	27 & Social Headlines & FALSE & 344 & Daily roundup of top social media and networking stories. \\ 
	28 & James Miklaszewski & FALSE & 337 & Writer, Photographer, Editor, Director, Producer, Newshound ... \\ 
	29 & Courtney Kube & TRUE & 9494 & NBC News National Security \& Military Reporter... \\ 
	30 & Bob Corker & TRUE & 10042 & Serving Tennesseans in the U.S. Senate \\ 
	\hline
\end{tabular}
\begin{tablenotes}[flushleft]
	\footnotesize
	\item Through the rPPR vector, the top 30 handles returned to @NBCPolitics include much fewer low in-degree and obscure ones and many more moderately connected nodes that are relevant to @NBCPolitics, including its reporters and editors and media professionals from other organizations.
%	\vspace{1mm}
\end{tablenotes}
\end{threeparttable}
}
\end{table}

To reduce noise, we applied a regularization step to the aPPR vector to remove those ``distant''  and small nodes while preserving the close and relevant ones. In Table 3, the majority of the top 30 handles with the highest regularized aPPR (i.e., rPPR) values have three- or four-digit numbers of followers. Similar to the aPPR results, they include NBC's news crew. But the difference is that the overwhelming majority (18) of the top 30 handles work at NBC. Some handles who work for other news organizations (e.g., Sam Petulla at @cnnpolitics and Emmanuelle Saliba at @Euronews) might have previously worked at NBC or have close connection with its news team. Even the four handles that are not directly followed by @NBCPolitics are interesting -- they are non-profit organizations (NYC Clothing Bank and Voices United) and news-related individual or organization (James Miklaszewski and Social Headlines). This pattern can also be observed in the top 200 handles, 72 of whom are directly followed by @NBCPolitics. The overwhelming majority of those directly followed by it are affiliated with NBC, comprising its day-to-day news team, who enjoy much less publicity than the celebrity reporters. The remaining 128 of them, who are not directly followed by @NBCPolitics, actually also include 20 NBC's journalists and staff, such as Ray Farmer (NBC News photographer) and Jim Miklaszewski (chief Pentagon correspondent for NBC News). Others are non-profits like Vets Helping Heroes and professionals from other news organizations or companies such as WSJ, NFL Network, and Microsoft, who might have worked for NBC or have close connection with it. Although there still appear to be obscure handles with few followers, they decrease significantly in number -- the median in-degree of the top 200 handles is 340 (Supplementary Materials \ref{sxn:top200}), a precipitous drop from that of the top PPR handles yet not too small as compared to that of the top aPPR handles. We thus conclude that the regularized aPPR vector returns a local cluster with little noise, reflecting a seed node's close circles, either directly or indirectly related. 

In order to evaluate the influence of the desired cluster size $n$ on the results based on different PPR vectors, we compare the local clusters of PPR, aPPR, and rPPR by varying sample size. 
Define the \textit{in-and-out ratio} of local cluster $C\subset V$ as the proportion of edges inside $C$ among all edges connected to $C$,
$$\frac{2\times\sum_{u,v\in C}A_{uv}}{\sum_{u\in C} \din_u+\dout_u}.$$
A higher in-and-out ratio indicates a more internally connected sample.
Figure \ref{fig:nbc} (Right) shows the effectiveness of aPPR and rPPR in producing a compact local cluster. 
When the sample size is bigger than 100, the connectedness of the local cluster produced by rPPR stabilizes; the greater the sample size, the more densely connected a cluster aPPR would produce. 
However, PPR is easily susceptible to the inclusion of popular nodes. 
In this case, a sharp drop of in-and-out ratio for PPR when the sample size reaches around 140 is caused by inclusions of highly popular accounts @POTUS (President Trump) and @realDonaldTrump (Donald J. Trump). 

The PPR clustering is fairly robust to the choice of teleportation constant, despite the size of local cluster. 
To illustrate this, we also performed the same pipeline of analysis with the seed @NBCPolitics while varying the value of $\alpha$ (e.g., $0.05$, $0.25$, and $1/3$) in parallel.
We observed that those local clusters returned by Algorithm \ref{alg:lc_dir} all share a great portion of members in common.
For example, there are 280 (93.3\%) overlapping members between two targeted samples of size $n=300$, using $\alpha=0.15$ and $0.25$ respectively. 
These suggest a low sensitivity to the teleportation constant (see Supplementary Materials \ref{sxn:parameter}).

The left panel of Figure \ref{fig:nbc} depicts the behaviors of PPR, aPPR and rPPR.  
%Each dot is a node that was sampled.  The axes are node degree and the node's value in the PPR vector.  The top 200 accounts from PPR, aPPR and rPPR are all represented. 
Each handle queried in this sampling is displayed as a dot, with y-axis representing the PPR value and x-axis the number of followers (i.e., in-degree). Top handles with the highest PPR values are above blue dashed line, which tend to concentrate on the right end of the x-axis and thus are biased toward high in-degrees. Top handles with highest aPPR values are dots to the left of the yellow dotdash line, which gather on the left end of the x-axis and thus in favor of low in-degrees. Regularized aPPR, by purple dots, excludes the very low degree nodes and very high degree nodes.
As the empirical results show, these three vectors can be thought of as lenses through which we view the local structure of a given Twitter handle with varying foci, rendering high, moderate, and low in-degree blocks and serving different needs and purposes.

\begin{figure}[!t]
	\centering
	\includegraphics[width= .49\textwidth]{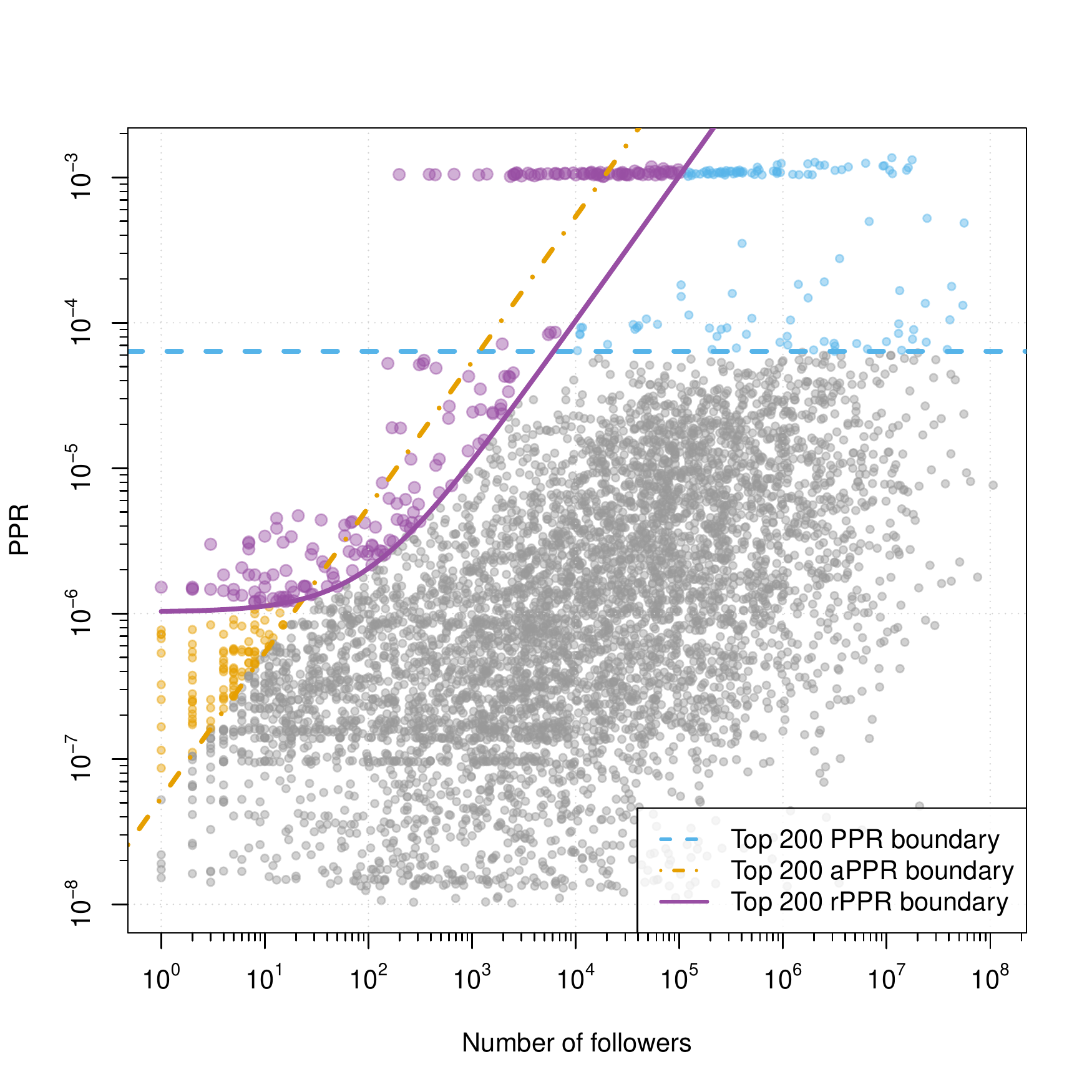}
	\includegraphics[width= .49\textwidth]{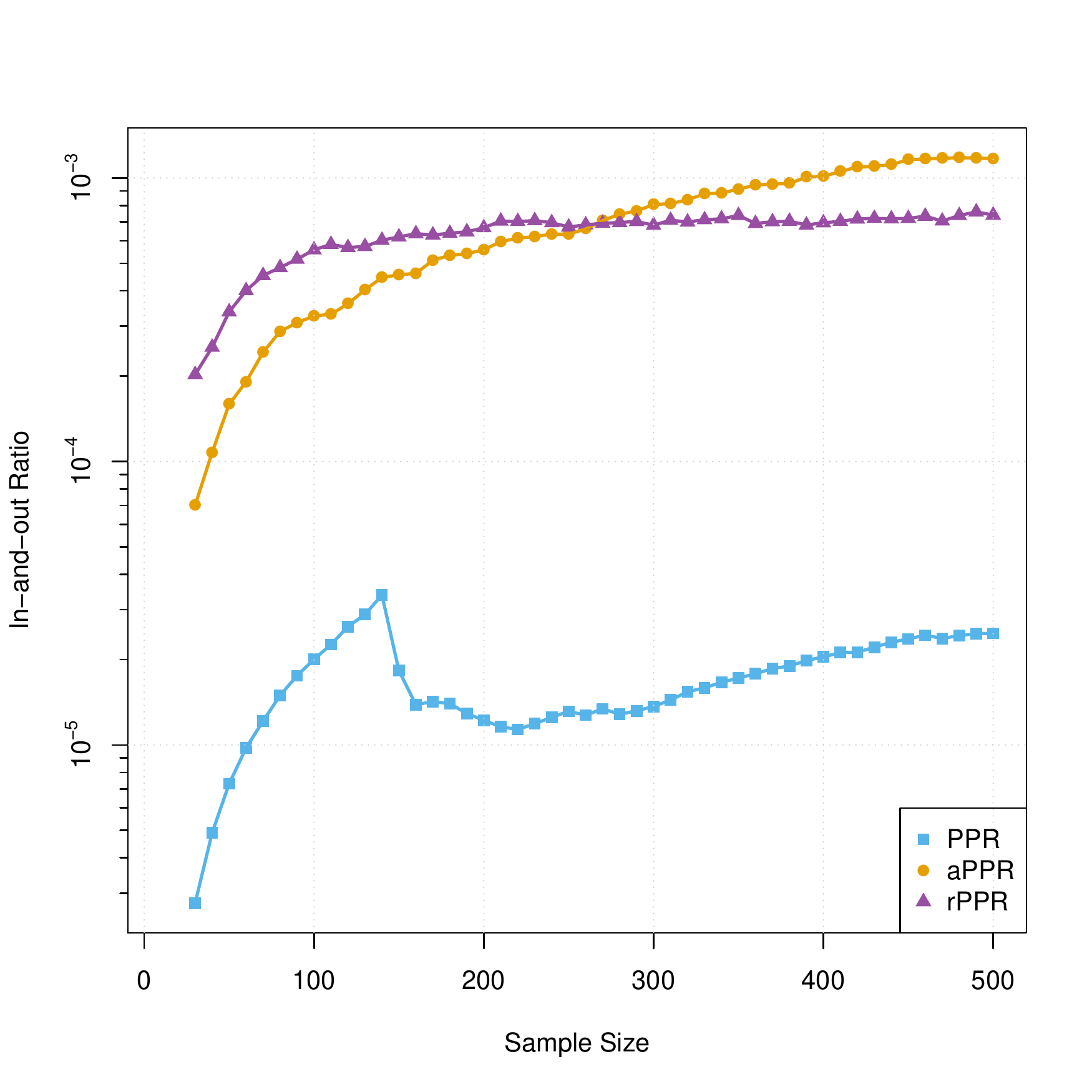}
	\caption{Left: an illustration of 5840 Twitter handles examined by Algorithm \ref{alg:ppr_dir} and three samples of size 200 by PPR, aPPR, and rPPR. Each dot represents a user in Twitter. The blue dashed line delimits the top 200 handles by PPR vector; vertices above the line are PPR's sample. Similarly, the yellow dotdash line determines the sample returned by Algorithm \ref{alg:lc_dir} given $n=200$; vertices above this boundary correspond to aPPR's sample. In particular, dots in purple stand for the sample of rPPR; the purple solid line shows the boundary of this sample. Right: The in-and-out ratio of local clusters identified by PPR, aPPR, and rPPR, as the sample sizes vary. A higher in-and-out ratio indicates a more internally connected cluster.}
	\label{fig:nbc}
\end{figure}

\section{Discussion}

This paper studies the PPR vector under the degree-corrected stochastic block model and PPR clustering in massive block model graphs. 
We establish some consistency results for this method, and examine its performance through analysis of Twitter friendship graph. 
As shown in the results, the PPR vectors with and without adjustment have distinct properties and can be used to effectively sample a massive graph for various purposes. However, there are limitations worthy of future investigations. 

In Section \ref{sxn:pop}, we provide a representation of the PPR vector under the DC-SBM and its extension into directed graphs. 
The result does not impose extra structural restrictions on the model parameters, except that $\B$ corresponds to a strongly connected ``block-wise'' graph. 
We consider a positive definite connectivity matrix particularly so that it is intuitive to conceive the notion of local cluster.
In practice (and many of our experiments, see Supplementary Materials \ref{sxn:parameter}), however, a PPR-type algorithm appears to continue working for a broader range of $\B$ (e.g., singular or indefinite), provided that the teleportation constant is sufficiently large (e.g. $\alpha>0.1$). 
It is unclear yet what is the minimum constraint needed on $\B$ in order for the PPR clustering to function. 
In addition, DC-SBM does have its limits. For example, the model fails to capture either mixed block membership or popularity features which are potentially informative in real world networks. 
The behavior of a PPR vector under other extensions of stochastic block model, such as mixed membership stochastic block model and popularity-adjusted block model, remains unknown \citep{airoldi2008mixed, sengupta2018block}.
Future studies on the PPR vector under these models could shed further light on the PPR clustering and offer more practical guidelines on their application.

%	In the discussion \ref{rmk3} in Section \ref{sxn:pop}, if we further assume that the ``block-wise'' PPR vector has several distinguishable entries in addition to local cluster, then one can obtain the identification of nearby local clusters surrounding the block to which the seed node belongs. 
%	Furthermore, if the ``block-wise'' PPR vector is entrywise distinguishable, a PageRank method is then ideal for an ordinary (global) clustering task. 
%	An investigation of when these conditions hold would be useful as it leads to a broader application of the PPR clustering. 
%	It is worth noting that for this purpose, the preference vector is less important while the teleportation constant is more critical, as it plays an essential role in modulating the adjusted PPR vector.

In Section \ref{sxn:spl}, we proved the consistency of the PPR clustering, requiring the average expected node degree to grow in order of $\logN$, which hits the boundary between the theoretical guarantees and the realistic observation. 
In contrast, scale-free networks such as the preferential attachment model \citep{barabasi1999emergence} have finite expected node degrees. 
Future investigations into variants of PPR that could possibly overcome this limitation yet ensure a fine local cluster discovery would be particularly interesting and useful.

In Section \ref{sxn:data}, we introduce the regularized version of adjusted PPR (rPPR) vector, with a series of empirical evidence showing its efficacy in targeted sampling. 
While the results appear promising, theoretical guarantees for this technique remain unexplored.
In order for some mathematical analyses, one may resort to the techniques used in \citet{le2016optimization}.
It is previously shown that the regularized graph Laplacian (or transition matrix) enjoys ``nice'' finite sample properties, which facilitate the consistency of many regularized spectral methods. 
It thus is reasonable conjecture that rPPR vectors are also suitable for local clustering.

An \textsf{R} implementation of the PPR clustering is available at author's GitHub (\url{https://github.com/RoheLab/aPPR}).

\section*{Acknowledgments}
This research is supported by NSF Grants DMS-1612456 and DMS-1916378 and ARO Grant W911NF-15-1-0423. 
Thank you to Yuling Yan and E. Auden Krauska for the helpful comments.
Thank you to Alex Hayes for kindly advising on the software development.

%\newpage
\appendix

\section{Technical Proofs}\label{apd:proof}

\subsection{Proof of Proposition \ref{ppr_sol}}

\begin{proof} %[of Proposition \ref{ppr_sol}]
We apply Perron-Frobenius theorem for the first part \citep{perron1907theorie, frobenius1912matrizen}, and complete the proof by construction.
\begin{enumerate}[(a)]
	\item First, notice that $Q$ is a Markov transition matrix by modifying $G=(V,E)$ a little. To this end, (i) shrink the weights of every existing edge by factor $1-\alpha$, and (ii) add an edge weighted $\alpha$ between seed node $v_0$ and all nodes in the graph. Then $Q$ represents the new graph $G'(V, E')$, which is strongly connected by construction. Hence $Q$ is irreducible.
	
	The PPR vector $p$ is all-positive. 
	To see this, note that the equation $p\T=p\T Q$ implies that $p$ is a stationary distribution for the standard random walk on $G'$.
	Since $G'$ is strongly connected, it follows that the stationary distribution must be all-positive.
	
	From the Perron-Frobenius theorem, the only all-positive eigenvector of a non-negative irreducible matrix is associated with the leading eigenvalue, which is 1 in our case.
	Since the leading eigenvalue of non-negative irreducible matrix is simple, we conclude that $p$ is unique.
	\item We finish the proof by constructing an explicit form of the PPR vector. Let 
	$R_\alpha=\alpha\sum_{s=0}^\infty(1-\alpha)^sP^s.$
	The infinite sum converges for $\alpha\in(0,1]$. Then, $p=R_\alpha\T \pi$ satisfies the definition of personalized PageRank vector,
	\begin{eqnarray*}
		\alpha \pi\T+(1-\alpha) \pi \T R_\alpha P &=&\alpha \pi\T+(1-\alpha)\pi\T\left(\alpha\sum_{s=0}^\infty(1-\alpha)^sP^s\right)P\\
		&=&\alpha \pi\T+\alpha\sum_{s=1}^\infty(1-\alpha)^s\pi\T P^s\\
		&=&\pi\T R_\alpha.
	\end{eqnarray*}
\end{enumerate}

Since the solution is unique, we have $p= R_\alpha\T\pi$.
\end{proof}

\subsection{Proof of Proposition \ref{prop:approx}}

\begin{proof}
	Algorithm \ref{alg:ppr} maintains two vectors, $p^\epsilon$ and $r$, by transporting probability mass from $r$ to $p^\epsilon$ at each updating step.
	Note that the termination criterion implies that $r_u<\epsilon d_{u}$ for any $u$ sampled, thus it suffices to prove that 
	$$\left|p_u - p^\epsilon_u\right| \le r_u.$$
	
	For a fixed $\alpha$, let $p(x)$ be the PPR vector with preference vector $x\in\R^N$ satisfying $x_i\geq0$ and $\|x\|_1\le1$. Then $p(\pi)$ is the exact PPR vector as in Equation (\ref{eqn:ppr_sum}). 
	Since $p(x)\T P=p(x\T P)$, we have \citep{jeh2003scaling} 
	\begin{equation}\label{eqn:ppr_gen}
	p(x)=\alpha x+(1-\alpha)p(P\T x).
	\end{equation}
	%$p^\epsilon(\pi)$ (or simply $p^\epsilon$ when it is clear) as the approximate PPR vector in Algorithm \ref{alg:ppr}. 
	
	We argue that $p^\epsilon + p(r)$ is invariant in updating steps.
	%		\begin{equation} \label{eqn:invariant}
	%		\end{equation}
	To see this, suppose $(p^\epsilon)'$ and $r'$ are the results of performing one update on $p^\epsilon$ and $r$ after sampling node $u$.
	We have 
	\begin{eqnarray*}
		(p^\epsilon)'&=&p^\epsilon +\alpha r_ue_u,\\
		r'&=&r-r_ue_u+(1-\alpha)r_uP\T e_u.
	\end{eqnarray*}
	where $e_u$ is the unit vector on the direction of $u$.
	Then, 
	\begin{eqnarray*}
		p(r)&=&p(r-r_ue_u)+p(r_ue_u)\\
		&\overset{\text{(i)}}{=}&p(r-r_ue_u) + \alpha r_ue_u+(1-\alpha)p\left(r_u P\T e_u\right)\\
		&\overset{\text{(ii)}}{=}&p\left(r-r_ue_u + (1-\alpha)r_uP\T e_u\right)+ \alpha r_ue_u\\
		&=&p(r')+\left(p^\epsilon\right)'-p^\epsilon,
	\end{eqnarray*}
	where (i) is applying Equation (\ref{eqn:ppr_gen}) at $x=r_ue_u$ and (ii) comes from the linearity of PPR vector in the preference vector.
	
	The desired result follows from recognizing that $p^\epsilon + p(r)$ is initially 
	$\vec{0} + p(\pi)$
	and that when the algorithm terminates, $\left[p(r)\right]_u\leq r_u$ for any sampled $u$.
\end{proof}

\textsc{Remark.}
If $\epsilon d_1>1$, Algorithm \ref{alg:ppr} terminates after the first round and simply output $p=\vec{0}$. 
Under this circumstance, Proposition \ref{prop:approx} still holds, because $\left|p_u - p^\epsilon_u\right|\le \left|p_u\right|+ \left|p^\epsilon_u\right|\le 1$.

%\textsc{Remark.} \citet{andersen2006local} shows that using random walk transition $P$ with teleportation constant $\alpha$ is equivalent to using lazy random walk transition $W=\left(I+P\right)/2$ with teleportation constant $\alpha'$, provided $$\alpha'=\frac{\alpha}{2-\alpha}.$$

\subsection{Lemmas for the DC-SBM} \label{apd:lemmas}
\begin{lemma}[Properties of the DC-SBM] \label{lem:basic}
	\underddcsbm 
	\begin{enumerate}[\normalfont(a)]
		%		\item $Z\T \Theta Z=I_K$, \label{ZThetaZ}
		%		\item  $\tD=Z\T \D Z$, \label{DB}
		%		\item $\cd_{v}=\theta_v\td_{z(v)}$. \label{D_DB}
		\item $\tDi=Z\T \Di Z$, and $\tDo=Z\T \Do Z$ \label{DB}, and
		\item $\di_v=\thi_v\tdi_{z(v)}$, and $\dou_v=\tho_v\tdo_{z(v)}$. \label{D_DB}
	\end{enumerate}
\end{lemma}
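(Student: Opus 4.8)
The plan is to reduce both identities to direct computations with the entrywise form of the population adjacency matrix. First I would expand the compact expression $\A=\To Z\B Z\T\Ti$ entry by entry. Since $\To$ and $\Ti$ are diagonal and $Z$ is the $0$-$1$ membership matrix, this yields $\A_{uv}=\tho_u\thi_v\B_{z(u)z(v)}$, which is just the edge-probability specification of the directed DC-SBM. Everything afterwards follows by grouping sums over vertices according to their blocks and invoking the two normalization constraints $\sum_{v:z(v)=i}\thi_v=\sum_{v:z(v)=i}\tho_v=1$.

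For part (b), I would compute the population in-degree $\di_v=\sum_u\A_{uv}$ directly. Plugging in the entrywise form, I would factor out the node-specific term $\thi_v$, partition the remaining sum over source vertices by block, and use $\sum_{u:z(u)=i}\tho_u=1$ within each block to collapse $\sum_u\tho_u\B_{z(u)z(v)}$ into $\sum_{i}\B_{i,z(v)}=\tdi_{z(v)}$. This gives $\di_v=\thi_v\tdi_{z(v)}$. The out-degree identity $\dou_v=\tho_v\tdo_{z(v)}$ follows symmetrically, now factoring out $\tho_v$ and applying the constraint on the $\thi$'s.

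For part (a), I would read off the $(i,j)$ entry of $Z\T\Di Z$ as $\sum_u Z_{ui}\di_u Z_{uj}$. Because each vertex lies in exactly one block, $Z_{ui}Z_{uj}=0$ whenever $i\neq j$, so the off-diagonal entries vanish and the matrix is diagonal; its $i$-th diagonal entry is the block-$i$ in-degree sum $\sum_{u:z(u)=i}\di_u$. Substituting part (b), namely $\di_u=\thi_u\tdi_i$ for $u$ in block $i$, and applying $\sum_{u:z(u)=i}\thi_u=1$ collapses this to $\tdi_i$, establishing $Z\T\Di Z=\tDi$; the identity $Z\T\Do Z=\tDo$ is identical with in and out interchanged.

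No single step is a genuine obstacle; the work is careful bookkeeping. The one point requiring attention is consistency of the in-/out-degree conventions, namely which index of $\A_{uv}$ is being summed, so that the correct normalization constraint (on $\thi$ versus $\tho$) is applied and the block sum collapses to $\tdi$ rather than $\tdo$ (and vice versa). Proving part (b) before part (a) is convenient, since part (a) is then an immediate consequence.
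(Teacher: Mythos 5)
Your proof is correct and follows essentially the same route as the paper: expand $\A_{uv}=\tho_u\thi_v\B_{z(u)z(v)}$, sum over the appropriate index, partition by block, and invoke the normalization constraints $\sum_{v:z(v)=i}\thi_v=\sum_{v:z(v)=i}\tho_v=1$. The only (harmless) difference is that you derive part (a) as a corollary of part (b), whereas the paper treats (a) as an immediate restatement of the definitions and only writes out the computation for (b).
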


\begin{proof}
	\ref{DB} is an alternative way of writing the definition. For \ref{D_DB}, we prove the first equation. Recall that for any $i$, $\sum_{u:z(u)=i}\tho_u=1$, then by definition,
	$$\di_{v}=\sum_u \tho_u\thi_v B_{z(u)z(v)}=\thi_v\sum_{j=1}^K\left(\B_{jz(v)}\sum_{u:z(u)=j} \tho_u\right)=\thi_v\tdi_{z(v)}.$$
\end{proof}

\textsc{Remark.}
%\begin{remark}
Since $Z\T \Ti Z=I_K$, \ref{DB} implies $\left[\Di\right]^{-1}\Ti Z=Z\left[\tDi\right]^{-1}$.
%\end{remark}

%	\subsection{Proof of Lemma \ref{explicit1}} 

\begin{lemma} [Explicit form of $\cP$ and its powers] \label{explicit1}
	\underddcsbm the population graph transition is the product
	$$\cP= Z\tP Z\T\Ti.$$ 
	and its matrix powers are
	$$\cP^k= Z \tP^k Z\T\Ti.$$
\end{lemma}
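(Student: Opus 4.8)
The plan is to prove the product form for $\cP$ itself by direct substitution, and then obtain the formula for every power $\cP^k$ by induction on $k$. The single fact that drives the whole argument is the identifiability constraint $\sum_{v:z(v)=i}\thi_v=1$, which in matrix form reads $Z\T\Ti Z=I_K$; this is exactly what will let the interior factors collapse when powers are multiplied.

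For the base case I would start from $\A=\To Z\B Z\T\Ti$ and substitute into $\cP=[\Do]^{-1}\A$, so that $\cP=[\Do]^{-1}\To Z\B Z\T\Ti$. The goal is then to absorb the out-degree diagonal on the left. By Lemma \ref{lem:basic}(b) we have $\dou_v=\tho_v\tdo_{z(v)}$, i.e. $\Do=\To\,\diag{\tdo_{z(v)}}$, so that $[\Do]^{-1}\To=\diag{1/\tdo_{z(v)}}$ and hence $[\Do]^{-1}\To Z=Z[\tDo]^{-1}$ (the out-degree analogue of the Remark following Lemma \ref{lem:basic}, which records $[\Di]^{-1}\Ti Z=Z[\tDi]^{-1}$). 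Plugging this in and recalling $\tP=[\tDo]^{-1}\B$ gives $\cP=Z[\tDo]^{-1}\B Z\T\Ti=Z\tP Z\T\Ti$, as claimed.

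For the powers I would argue by induction. The case $k=1$ is the base case above. Assuming $\cP^k=Z\tP^k Z\T\Ti$, I write $\cP^{k+1}=\cP^k\,\cP=Z\tP^k\bigl(Z\T\Ti Z\bigr)\tP Z\T\Ti$, and then invoke $Z\T\Ti Z=I_K$ to collapse the bracketed interior factor, which yields $\cP^{k+1}=Z\tP^{k+1}Z\T\Ti$ and closes the induction.

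The step I expect to be the crux is recognizing $Z\T\Ti Z=I_K$ and that it alone guarantees the two ``node-level'' factors, namely $Z$ on the left and $Z\T\Ti$ on the right, are invariant across all powers, while only the $K\times K$ block transition $\tP$ gets raised to the $k$-th power. Everything else is bookkeeping with diagonal matrices; the only mild subtlety is keeping the in- and out-degree roles straight ($\Ti$ versus $\To$, and $\tDi$ versus $\tDo$), since the left cancellation in the base case uses the out-degree identity while the telescoping in the inductive step uses the in-degree constraint.
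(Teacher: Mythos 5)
Your proof is correct and follows essentially the same route as the paper: the paper verifies the base case entrywise via Lemma \ref{lem:basic}(b) (the identity $\dou_u=\tho_u\tdo_{z(u)}$, which is exactly your cancellation $[\Do]^{-1}\To Z=Z[\tDo]^{-1}$ written in coordinates), and then handles the powers by the identical induction using $Z\T\Ti Z=I_K$. The only difference is cosmetic (matrix-level versus entrywise bookkeeping in the base case), so there is nothing to add.
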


\begin{proof}% [Proof of Lemma \ref{explicit1}] 
	By definition and Lemma \ref{lem:basic}\ref{D_DB}, for any $u, v\in V$, 
	\begin{equation*}
	\cP_{uv}=\left(\tho_u\tdo_{z(u)}\right)^{-1}\tho_u\thi_v\B_{z(u)z(v)}
	=\thi_v\B_{z(u)z(v)}/\tdo_{z(u)}=\thi_v\tP_{z(u)z(v)}.
	\end{equation*}
	For the powers of $\cP$, noticing that $Z\T \Ti Z=I_K$, 
	\begin{equation*}
	\cP^2=Z\tP Z\T \Ti Z\tP Z\T\Ti =Z\tP^2Z\T\Ti.
	\end{equation*}
	The desired result follows from the principle of induction on $k$-th power.
\end{proof}

\subsection{Proof of Theorem \ref{explicit2}} \label{pf:explicit2}
\begin{proof}
	By Proposition \ref{ppr_sol} and Lemma \ref{explicit1}, we have
	%	\begin{eqnarray*}
	%		\p&=&\alpha\sum_{s=0}^\infty(1-\alpha)^s\cP^s\pi\\
	%		&=&\alpha\sum_{s=0}^\infty(1-\alpha)^s\Theta Z\tP^sZ\T \pi\\
	%		&=&\Theta Z\left(\alpha\sum_{s=0}^\infty(1-\alpha)^s \tP^s\boldsymbol{\pi}\right)\\
	%		&=&\Theta Z \tp.
	%	\end{eqnarray*}
	\begin{eqnarray*}
		\p&=&\alpha\sum_{s=0}^\infty(1-\alpha)^s\left(\cP^s\right)\T\pi\\
		&=&\alpha\sum_{s=0}^\infty(1-\alpha)^s\Ti Z\left(\tP^s\right)\T Z\T \pi\\
		&=&\Ti Z\left(\alpha\sum_{s=0}^\infty(1-\alpha)^s \left(\tP^s\right)\T\boldsymbol{\pi}\right)\\
		&=&\Ti Z \tp,
	\end{eqnarray*}
	In addition, it follows from Lemma \ref{lem:basic}\ref{DB} that 
	%	$$\p^*=\D^{-1}\Theta Z \tp=Z\tD^{-1}\tp=Z\tp^*,$$
	$$\p^*=\left[\Di\right]^{-1}\p=\left[\Di\right]^{-1}\Ti Z \tp=Z\left[\tDi\right]^{-1}\tp=Z\tp^*.$$
	This completes the proof.
\end{proof}

\subsection{Proof of Lemma \ref{lem:top}} \label{pf:top}
\begin{proof} %[of Lemma \ref{lem:top}]
	
	For any $\alpha>0$, the PPR vector with seed node $v_0=1$ is the solution to the equation $\p\T=\p\T\Q$, where $\Q=\alpha\Pi+(1-\alpha)\cP$. Define a sequence of probability distribution $\p^s\in\R^N$ such that $\p^s= \left(\Q^s\right)\T \p^0$, where $\p^0$ is an arbitrary initial probability distribution. Then, $\lim_{s\rightarrow\infty}\p^s=\p$. For simplicity, we assume $\p^0$ is close to $\p$, that is, for any $\varepsilon>0$ and $s\ge0$,
	\begin{equation} \label{eqn:concentrate_ps}
	\|\p^s-\p\|_\infty<\varepsilon/2.
	\end{equation}
	This can be achieved by finding an integer $S(\varepsilon)$ large enough and setting $\p^0=\p^{S}$.
	
	We first claim that 
	\begin{equation} \label{eqn:decreasing}
	\max_{u\neq 1}\frac{\p^{s+1}_u}{\cd_u} \le  (1-\alpha)\max_{u\in V} \frac{\p^s_u}{\cd_u}.
	\end{equation}
	In fact, for any $u\neq1$,
	%$$	\p^{s+1}_u=\alpha\1_{\{u=1\}}+(1-\alpha)\sum_{v\in V}\frac{\A_{vu}}{\cd_v}\p^s_v\le(1-\alpha)\left(\sum_{v\in V}\A_{vu}\right)\max_{v\in V} \frac{\p^s_v}{\cd_v}=(1-\alpha)\cd_u\max_{v\in V} \frac{\p^s_v}{\cd_v}.$$
	\begin{eqnarray*}
		\p^{s+1}_u&=&\alpha\1_{\{u=1\}}+(1-\alpha)\sum_{v\in V}\frac{\A_{vu}}{\cd_v}\p^s_v\\
		&\le&(1-\alpha)\left(\sum_{v\in V}\A_{vu}\right)\max_{v\in V} \frac{\p^s_v}{\cd_v}\\
		&=&(1-\alpha)\cd_u\max_{v\in V} \frac{\p^s_v}{\cd_v}.
	\end{eqnarray*}
	
	We then show $\frac{\p^s_1}{\cd_1}>\frac{\p^s_v}{\cd_v}$ for any $v\neq1$ by contradiction. 
	Suppose otherwise that $\frac{\p^s_1}{\cd_1}\le\max_{u\neq1}\frac{\p^s_u}{\cd_u}$, then Equation (\ref{eqn:concentrate_ps}) implies for any $s'$,
	\begin{eqnarray*}
		\frac{\p^{s'}_1}{\cd_1}\le\frac{\p^s_1+\varepsilon}{\cd_1}\le\max_{u\neq 1}\frac{\p^s_u}{\cd_u}+\frac{\varepsilon}{\cd_1}\le \max_{u\neq 1}\frac{\p^{s'}_u+\varepsilon}{\cd_u}+\frac{\varepsilon}{\cd_1}\le \max_{u\neq 1}\frac{\p^{s'}_u}{\cd_u}+\frac{2\varepsilon}{\cd_{\min}},
	\end{eqnarray*}
	where $\cd_{\min}=\min_{v\in V}\cd_v$. 
	Hence, 
	%\max \left\{\frac{\p^{s'}_1}{\cd_1}, \max_{u\neq1} \frac{\p^{s'}_u}{\cd_u}\right\}\
	$\max_{u\in V}\frac{\p^{s'}_u}{\cd_u}\le \max_{u\neq 1}\frac{\p^{s'}_u}{\cd_u}+\frac{2\varepsilon}{\cd_{\min}}$.
	In addition, applying Equation (\ref{eqn:decreasing}) recursively we have 
	\begin{eqnarray*}
		\max_{u\in V}\frac{\p^s_u}{\cd_u}&=&\max_{u\neq 1}\frac{\p^s_u}{\cd_u}\\
		&\le&(1-\alpha)\max_{u\in V}\frac{\p^{s-1}_u}{\cd_u}\\
		&\le&(1-\alpha)\left(\max_{u\neq1}\frac{\p^{s-1}_u}{\cd_u}+\frac{2\varepsilon}{\cd_{\min}}\right)\\
		&\le&(1-\alpha)^s\max_{u\in V}\frac{\p^{0}_u}{\cd_u} + \frac{2\varepsilon}{\cd_{\min}}\sum_{t=1}^{s-1}(1-\alpha)^t.
	\end{eqnarray*}
	The inequality means that if $\cd_{\min}>0$ is fixed, $\p^s_u$ can be arbitrarily small when $s\rightarrow\infty$, which contradicts the fact that $\p$ is a probability distribution. This completes the proof.
	
	\textsc{Remark.}
	When the teleportation constant is zero, the PPR vector becomes the stationary probability distribution of a standard random walk, %which is proportional to node degrees,
	\begin{equation*}\label{station}
	%\pi^*=
	\left(\frac{\cd_1}{\sum_i\cd_i},\frac{\cd_2}{\sum_i\cd_i},..., \frac{\cd_N}{\sum_i\cd_i}\right).
	%\pi^*=\frac{1}{\sum_i\cd_i}\left(\cd_1,\cd_2,..., \cd_N\right).
	\end{equation*}
	After adjusting by node degrees, every entry becomes identical (${1/\sum_i\cd_i}$). 
	The lemma is intuitive, recognizing that the teleportation introduces a particular favor of the seed node. 
	
	\textsc{Remark.}
	When the edges are weighted (non-negative), the stationary distribution of a random walk is still proportional to node degrees, if one defines the degree as sum of edge weights incident to the node \citep{lovasz1993random}.
	Note also that the stationary distribution of a random walk in a directed graph is characterized by the in-degree of nodes \citep{ghoshal2011ranking, lu2013respondent}. The conclusion and a modified proof apply to directed or weighted graphs.
\end{proof}

\subsection{Proof of Corollary \ref{cor:main}}

\begin{proof} 
	The algorithm ranks all vertices according to ${p^\epsilon}^*$, and the population local cluster can be explicitly written as
	$$\mathcal{C}=\{v\in V:\p_v^*=\tp^*_1\}.$$
	It suffices to show that
	$${p^\epsilon_v}^*>{p^\epsilon_u}^*, \text{ for } \forall v\in \mathcal{C},u\in V \backslash \mathcal{C},$$ 
	where ${p^\epsilon}^*_v=p^\epsilon_v/d_v$.
	To this end, we apply triangle inequality and get
	\begin{eqnarray*}
		\frac{{p^\epsilon_v}^*-{p^\epsilon_u}^*}{\|\p^*\|_\infty}
		&\ge&\frac{\p^*_v-\p^*_u}{\|\p^*\|_\infty}-\frac{|p^*_v-\p^*_v|}{\|\p^*\|_\infty}-\frac{|p^*_u-\p^*_u|}{\|\p^*\|_\infty}-\frac{\left|{p^\epsilon_u}^*-p^*_u\right|}{\|\p^*\|_\infty}-\frac{\left|{p^\epsilon_v}^*-p^*_v\right|}{\|\p^*\|_\infty} \\
		&\ge&\Delta-\frac{2\|p^*-\p^*\|_\infty}{\|\p^*\|_\infty}-\frac{2\|{p^\epsilon}^*-p^*\|_\infty}{\|\p^*\|_\infty}.
	\end{eqnarray*}
	Since $\Delta_\alpha\le1$, assumption (\ref{eq:assm2}) contains condition (\ref{eq:assm1}) in Theorem \ref{thm:concPPR}, which together with Proposition \ref{prop:approx} implies that
	\begin{eqnarray*}
		\frac{\|p^*-\p^*\|_\infty}{\|\p^*\|_\infty}<\frac{1}{4}\Delta, && \frac{\|{p^\epsilon}^*-p^*\|_\infty}{\|\p^*\|_\infty}<\frac{1}{4}\Delta,
	\end{eqnarray*}
	if 
	%$\frac{\Delta^2\delta}{\log{N}}$ 
	$\Delta^2\delta/\log{N}$ 
	is large enough. These collectively imply $p^*_v>p^*_u$ as desired.
\end{proof}

\newpage
\clearpage
\beginsupplement

\begin{center}
	{\LARGE Supplementary Materials}
\end{center}

%\vspace{1cm}

\begin{abstract}
	This document provides several supplementary materials to ``Targeted sampling from massive block model graphs with personalized PageRank''. 
	Section \ref{sxn:rand} contains a proof for the entrywise error control (Theorem \ref{thm:concPPR}). 
	Section \ref{sxn:parameter} gives additional information about some model parameters, including $\B$, $\cP$, $\alpha$, and $N$. 
	Section \ref{sxn:ld} extends the results in \citet{kloumann2017block} to the DC-SBM from a linear discriminant analysis perspective. 
	Section \ref{sxn:top200} supplies three targeted Twitter samples about the seed @NBCPolitics described in the paper. 
	The PPR clustering is implemented in \textsf{R} and all source codes are available at author's GitHub (\url{https://github.com/RoheLab/aPPR}).
\end{abstract}

\section{A proof for the entrywise error control} \label{sxn:rand}
We start with a few lemmas to prepare for the proof of Theorem \ref{thm:concPPR}. 
For completeness, Section \ref{sxn:ineq} lists a few inequalities that are used throughout the proofs.

\subsection{Some definitions and lemmas}
In this section, we introduce a few notations used in Lemma \ref{lem:Davis} and list a few properties of vector norm and matrix norm \citep{bremaud2013markov}.

For any strictly positive probability distribution vector $p\in\R^N$, the inner product space indexed by $p$ is a real vector space $\R^N$ endowed with the inner product 
$$\langle x,y\rangle_p=\sum_{v=1}^Np_vx_vy_v.$$
The corresponding vector norm and the induced matrix norm are defined respectively as 
$$\|x\|_p=\sqrt{\langle x,x\rangle_p} \text{ \ and \ } \|A\|_p=\sup_{\|x\|_p=1}\|A\T x\|_p.$$
\begin{lemma}
	\label{norm}
	If $0\le p_{\min}\le p_v\le p_{\max}$ for all $v=1,2,...,N$, then the following inequalities hold
	$$\sqrt{p_{\min}}\|x\|_2\le \|x\|_p\le\sqrt{p_{\max}}\|x\|_2 \text{ \ and \ } \sqrt{\frac{p_{\min}}{p_{\max}}}\|A\|_2\le\|A\|_p\le \sqrt{\frac{p_{\max}}{p_{\min}}}\|A\|_2.$$
\end{lemma}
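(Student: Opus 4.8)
The plan is to derive everything from the definition $\|x\|_p^2 = \langle x,x\rangle_p = \sum_{v=1}^N p_v x_v^2$ together with the elementary fact that the spectral norm is invariant under transposition, $\|A\|_2 = \|A\T\|_2$ (equal singular values). Since $p$ is a strictly positive probability vector we have $p_{\min}>0$, so every quotient appearing below is well defined, and the induced matrix norm can be written in ratio form as $\|A\|_p = \sup_{x\neq 0}\|A\T x\|_p/\|x\|_p$ by homogeneity.

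First I would establish the vector-norm inequalities, which follow directly from the definition. Because $p_{\min}\le p_v\le p_{\max}$ for every $v$, bounding each summand gives
\[
p_{\min}\sum_{v=1}^N x_v^2 \;\le\; \sum_{v=1}^N p_v x_v^2 \;\le\; p_{\max}\sum_{v=1}^N x_v^2,
\]
that is, $p_{\min}\|x\|_2^2 \le \|x\|_p^2 \le p_{\max}\|x\|_2^2$. Taking square roots yields $\sqrt{p_{\min}}\|x\|_2 \le \|x\|_p \le \sqrt{p_{\max}}\|x\|_2$, which is the first claim.

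For the matrix-norm inequalities I would feed these vector bounds into the ratio $\|A\T x\|_p/\|x\|_p$. For the upper bound, apply the upper vector bound to the numerator and the lower vector bound to the denominator, and then use $\|A\T x\|_2 \le \|A\T\|_2\|x\|_2 = \|A\|_2\|x\|_2$, to get
\[
\frac{\|A\T x\|_p}{\|x\|_p} \;\le\; \frac{\sqrt{p_{\max}}\,\|A\|_2\,\|x\|_2}{\sqrt{p_{\min}}\,\|x\|_2} \;=\; \sqrt{\tfrac{p_{\max}}{p_{\min}}}\,\|A\|_2
\]
for every $x\neq 0$; taking the supremum gives $\|A\|_p \le \sqrt{p_{\max}/p_{\min}}\,\|A\|_2$. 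For the lower bound I would reverse the roles of the two vector bounds, placing the lower bound on the numerator and the upper bound on the denominator, to obtain $\|A\T x\|_p/\|x\|_p \ge \sqrt{p_{\min}/p_{\max}}\,\|A\T x\|_2/\|x\|_2$; taking the supremum over $x\neq 0$ and recognizing that $\sup_{x\neq 0}\|A\T x\|_2/\|x\|_2 = \|A\T\|_2 = \|A\|_2$ completes the claim.

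There is no substantive obstacle here: the argument is a routine application of the vector-norm comparison to both the numerator and denominator of the induced norm. The only points requiring slight care are keeping track of the transpose in the definition of $\|A\|_p$, which is handled by the transpose-invariance of the spectral norm, and noting that the strict positivity of $p$ guarantees $p_{\min}>0$ so that the ratios are meaningful.
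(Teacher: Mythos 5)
Your proof is correct. The paper itself gives no proof of this lemma (it is stated as a listed property with a citation to Br\'emaud's text), and your argument is the standard one it implicitly relies on: sandwiching $\|\cdot\|_p$ between $\sqrt{p_{\min}}\|\cdot\|_2$ and $\sqrt{p_{\max}}\|\cdot\|_2$ termwise, then pushing these bounds through the numerator and denominator of the induced-norm ratio and using $\|A\T\|_2=\|A\|_2$, with the strict positivity of $p$ assumed in the surrounding text guaranteeing $p_{\min}>0$.
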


The following lemma provides concentration of the node degrees in a graph generated from the DC-SBM.
\begin{lemma} [Degree concentration]
	\label{degree}
	\dcsbmset Let $d_{\min}$ and $d_{\max}$ be the smallest and the largest node degree observed. 
	Let $\delta$ be the average expected node degree, and define $\rho=\cd_{\max}/\cd_{\min}$. 
	If $\delta\ge c_0(1-\alpha)\logN$ for some sufficiently large constant $c_0>0$, then with probability at least $1-\cO(N^{-10})$, it holds that
	\begin{equation}\label{deg1}
	\frac{\delta}{2\rho}\le d_{\min}\le d_{\max} \le \frac{3\rho\delta}{2}.
	\end{equation}
\end{lemma}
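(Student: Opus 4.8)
The plan is to establish the two-sided multiplicative deviation bound $\tfrac{1}{2}\cd_v \le d_v \le \tfrac{3}{2}\cd_v$ simultaneously for every vertex $v$, and then translate it into the claimed bounds on $d_{\min}$ and $d_{\max}$ using two elementary inequalities relating $\cd_{\min}$, $\cd_{\max}$, and $\delta$. The whole argument is a Chernoff bound applied node-by-node, stitched together by a union bound.

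First I would observe that once the block memberships $z$ and the degree parameters $\Theta$ are fixed, each observed degree $d_v=\sum_{u\in V}A_{uv}$ is a sum of independent Bernoulli random variables with mean $\E d_v=\cd_v$, since the edge indicators $A_{uv}$ are mutually independent under the DC-SBM. This places each $d_v$ within reach of a multiplicative Chernoff bound at deviation level $t=\tfrac12$: there is an absolute constant $c>0$ (for instance $c=\tfrac1{10}$) with
$$\bP\!\left(|d_v-\cd_v|\ge \tfrac12\cd_v\right)\le 2\exp(-c\,\cd_v).$$
Next I would lower-bound the exponent uniformly in $v$. Because $\delta$ is the average of the $\cd_v$'s we have $\cd_{\min}\le\delta\le\cd_{\max}$, and hence $\cd_{\min}=\cd_{\max}/\rho\ge\delta/\rho$; therefore $\cd_v\ge\cd_{\min}\ge\delta/\rho$ for every $v$. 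Substituting yields a per-vertex failure probability of at most $2\exp(-c\delta/\rho)$. Invoking the hypothesis $\delta\ge c_0(1-\alpha)\logN$ with $c_0$ chosen sufficiently large — large enough to absorb $\rho$, the constant $c$, and the bounded factor $1-\alpha\le1$ — this is at most $N^{-11}$.

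A union bound over the $N$ vertices then shows that, with probability at least $1-N\cdot N^{-11}=1-\cO(N^{-10})$, the event $\tfrac12\cd_v\le d_v\le\tfrac32\cd_v$ holds for all $v$ simultaneously. On this event I would read off the conclusion directly. For the lower bound, $d_{\min}=\min_v d_v\ge\tfrac12\min_v\cd_v=\tfrac12\cd_{\min}\ge\frac{\delta}{2\rho}$. For the upper bound, $d_{\max}=\max_v d_v\le\tfrac32\cd_{\max}=\tfrac32\rho\,\cd_{\min}\le\tfrac32\rho\,\delta$, where the last step uses $\cd_{\min}\le\delta$. Combining with the trivial $d_{\min}\le d_{\max}$ gives exactly $\frac{\delta}{2\rho}\le d_{\min}\le d_{\max}\le\frac{3\rho\delta}{2}$.

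I do not expect a deep obstacle: this is a standard Chernoff-plus-union-bound argument. The only points requiring care are (i) verifying that the independence structure of the DC-SBM lets us treat each $d_v$ as a sum of independent indicators (so that the Chernoff bound applies), (ii) pinning the numerical constants in the Chernoff bound so that the per-vertex tail beats $N^{-11}$ and the union bound yields the stated $\cO(N^{-10})$, and (iii) tracking how $\rho$ and the factor $1-\alpha$ are folded into $c_0$, which is harmless provided $\alpha$ is bounded away from $1$ and $\rho=\cd_{\max}/\cd_{\min}$ is treated as a finite constant as assumed in Theorem \ref{thm:concPPR}.
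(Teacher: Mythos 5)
Your proposal is correct and follows essentially the same route as the paper, which merely notes that $\frac{\delta}{\rho}\le\cd_{\min}\le\cd_{\max}\le\delta\rho$ and then invokes a standard Chernoff bound, omitting the details; you have simply filled in the per-vertex multiplicative Chernoff bound, the union bound, and the translation to $d_{\min}$ and $d_{\max}$. Your closing caveat that $c_0$ must absorb $\rho$ and $1/(1-\alpha)$ is accurate but is a feature of the lemma as stated (both are treated as fixed constants throughout the paper), so it does not constitute a gap.
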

\begin{proof}
	Note that the definition of $\rho$ immediately implies that
	$$\frac{\delta}{\rho}\le\cd_{\min}\le\cd_{\max}\le\delta\rho.$$
	The lemma follows from the standard Chernoff's bound, hence is omitted.
\end{proof}

The following useful lemma concerns the eigenvector perturbation for probability transition matrices, promoted from the celebrated Davis-Kahan $\sin\Theta$ Theorem \citep{davis1970rotation}.

\begin{lemma}[Eigenvector perturbation]\label{lem:Davis}
	Suppose that $Q$, $\hat Q$, and $\Q$ are probability transition matrices with stationary distributions $p$, $\hat p$, and $\p$ respectively. Assume that $\Q$ represents a reversible Markov chain. Then, 
	$$\|p-\hat p\|_\p\le\frac{\|(Q-\hat Q)\T p\|_\p}{1-\max\{\lambda_2(\Q), -\lambda_N(\Q)\}-\|\hat{Q}-\Q\|_\p}.$$
\end{lemma}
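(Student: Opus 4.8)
The plan is to turn the problem into a spectral-gap estimate for the self-adjoint operator attached to the reversible chain $\Q$. First I would record the three stationarity identities $Q\T p=p$, $\hat{Q}\T\hat{p}=\hat{p}$, and $\Q\T\p=\p$. Subtracting the first two and inserting $\hat{Q}\T p$ gives the exact relation $(I-\hat{Q}\T)(p-\hat{p})=(Q-\hat{Q})\T p$. Centering the analysis at the population matrix by writing $\hat{Q}=\Q+(\hat{Q}-\Q)$ then yields
\[
(I-\Q\T)(p-\hat{p})=(Q-\hat{Q})\T p+(\hat{Q}-\Q)\T(p-\hat{p}).
\]
Everything after this is a matter of inverting $I-\Q\T$ on the correct subspace and controlling the two source terms on the right.

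Next I would bring in reversibility. Detailed balance $\p_i\Q_{ij}=\p_j\Q_{ji}$ says precisely that $\Q$ is self-adjoint with respect to $\langle\cdot,\cdot\rangle_\p$; consequently its eigenvalues $1=\lambda_1>\lambda_2\ge\cdots\ge\lambda_N\ge-1$ are real, it admits an orthonormal eigenbasis in this geometry, and its Perron eigenvector is the all-ones vector $\1$ (since $\Q\1=\1$). The decisive structural fact is that the error avoids the stationary direction: because $p$ and $\hat{p}$ are probability vectors, $\1\T(p-\hat{p})=0$, so $p-\hat{p}$ lies in the invariant complement of the eigenvalue-one eigenspace of $I-\Q\T$. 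On that complement $\Q\T$ acts as a contraction with factor $\max\{\lambda_2(\Q),-\lambda_N(\Q)\}$, whence $I-\Q\T$ is boundedly invertible there and I would obtain
\[
\|(I-\Q\T)(p-\hat{p})\|_\p\ge\Big(1-\max\{\lambda_2(\Q),-\lambda_N(\Q)\}\Big)\|p-\hat{p}\|_\p.
\]

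To close, I would take $\|\cdot\|_\p$ of the centered identity and bound its right-hand side by the triangle inequality together with the induced-norm estimate $\|(\hat{Q}-\Q)\T(p-\hat{p})\|_\p\le\|\hat{Q}-\Q\|_\p\,\|p-\hat{p}\|_\p$, which is immediate from $\|A\|_\p=\sup_{\|x\|_\p=1}\|A\T x\|_\p$. Combining with the lower bound and moving the term $\|\hat{Q}-\Q\|_\p\,\|p-\hat{p}\|_\p$ to the left gives
\[
\Big(1-\max\{\lambda_2(\Q),-\lambda_N(\Q)\}-\|\hat{Q}-\Q\|_\p\Big)\|p-\hat{p}\|_\p\le\|(Q-\hat{Q})\T p\|_\p,
\]
and dividing through yields the claim, the division being legitimate exactly when the perturbation $\|\hat{Q}-\Q\|_\p$ is smaller than the spectral gap $1-\max\{\lambda_2(\Q),-\lambda_N(\Q)\}$ so that the denominator is positive.

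The hard part will be the weighted-norm bookkeeping in the contraction step. Since the induced matrix norm is defined through the transpose, the operator actually acting on $p-\hat{p}$ is $\Q\T$, whereas reversibility makes $\Q$ (not $\Q\T$) self-adjoint for $\langle\cdot,\cdot\rangle_\p$. Reconciling this is the delicate point: the symmetric similarity $\diag{\p}^{-1/2}\Q\T\diag{\p}^{1/2}=\diag{\p}^{1/2}\Q\diag{\p}^{-1/2}$ shows that $\Q\T$ is self-adjoint in the reciprocal weighting, that the constraint $\1\T(p-\hat{p})=0$ is exactly orthogonality to the Perron direction in that geometry, and that a resolvent (Neumann-series) bound then delivers the contraction factor $\max\{\lambda_2(\Q),-\lambda_N(\Q)\}$ on the complement. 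Transferring the estimate between the two weightings costs only a bounded factor governed by $\rho$ through Lemma \ref{norm}, which is precisely where the assumption that $\rho$ is bounded enters the constants of Theorem \ref{thm:concPPR} downstream.
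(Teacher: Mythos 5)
The paper does not actually prove this lemma; it cites \citet{chen2017spectral} and omits the argument, and your skeleton --- the identity $(I-\Q\T)(p-\hat p)=(Q-\hat Q)\T p+(\hat Q-\Q)\T(p-\hat p)$, inversion of $I-\Q\T$ on the mean-zero subspace, then triangle inequality and rearrangement --- is exactly the argument in that source. The gap is in the contraction step, and it is not merely bookkeeping. With the paper's convention $\|x\|_\p=\sqrt{\sum_v\p_vx_v^2}$, the inequality $\|\Q\T x\|_\p\le\max\{\lambda_2(\Q),-\lambda_N(\Q)\}\,\|x\|_\p$ for $\1\T x=0$ is false in general: for the reversible three-state chain with $\Q_{11}=3/4$, $\Q_{12}=1/4$, $\Q_{21}=1/3$, $\Q_{23}=2/3$, $\Q_{32}=2/3$, $\Q_{33}=1/3$ (stationary distribution $\p=(0.4,0.3,0.3)$, eigenvalues $1$, $(1+\sqrt{209})/24\approx0.644$, $(1-\sqrt{209})/24\approx-0.561$) and $x=(1,0,-1)\T$, one computes $\|\Q\T x\|_\p/\|x\|_\p\approx0.666>0.644$. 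The operator $\Q\T$ acting on signed measures is self-adjoint with Perron direction $\p$ and orthogonal complement $\{x:\1\T x=0\}$ only in the reciprocal weighting $\|x\|_{1/\p}=\sqrt{\sum_vx_v^2/\p_v}$ (this is Fill's chi-square contraction), so the clean factor $\max\{\lambda_2,-\lambda_N\}$ is available only there; eigenvalues alone do not control the $\p$-weighted operator norm of the non-normal matrix $\Q\T$.

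You flag exactly this tension in your last paragraph, but your proposed fix --- transferring between the two weightings via Lemma \ref{norm} at a cost of $\sqrt{\p_{\max}/\p_{\min}}$ --- proves a strictly weaker statement than the one displayed: the lemma as written carries no such multiplicative factor, and by the counterexample above that factor cannot be eliminated along this route. To obtain the inequality verbatim you must run the entire argument in the $1/\p$-weighted norm (equivalently, read $\|\cdot\|_\p$ in the statement as the chi-square norm, under which the cited proof goes through unchanged). For the downstream use in Theorem \ref{thm:concPPR} the distinction is harmless, since the two norms differ by at most $\sqrt{\p_{\max}/\p_{\min}}\le\sqrt{\kappa}$ and $\kappa$ is absorbed into constants there, but as a proof of the lemma as stated your argument does not close.
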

The proof the Lemma \ref{lem:Davis} can be found in \citet{chen2017spectral} Section 3, thus omitted.

\subsection{Proof of Theorem \ref{thm:concPPR}}
\begin{proof}
	The proof processes as follows. 
	We first bound the entrywise error rate of $p$, 
	$$\frac{\|p-\p\|_\infty}{\|\p\|_\infty}\le c_0\sqrt{\frac{\logN}{\delta}},$$
	by invoking the novel leave-one-out techniques \citep{chen2017spectral},
	The entrywise error bounds of $p^*$ follows immediately.
	
	Recall that both $p$ and $\p$ are stationary distribution, which means
	$$p=Q\T p \text{ \ and \ } \p=\Q\T \p.$$
	Due to this, for any $w=1,2,...,N$, we can decompose
	\begin{eqnarray*}
		p_w-\p_w&=&Q\T_{\cdot w}p-\Q\T_{\cdot w}\p\\
		&=&\underbrace{(Q_{\cdot w}-\Q_{\cdot w})\T \p}_{:=I_1^w}+\underbrace{Q\T_{\cdot w}(p-\p)}_{:=I_2^w},
	\end{eqnarray*}
	where $Q_{\cdot w}$ denotes the $w$-th column of $Q$.
	
	\begin{enumerate} [(a)]
		\item We start with the first term $I^w_1$. Note that
		\begin{eqnarray*}
			I_1^w&=&(1-\alpha)\sum_{v=1}^N\left[\frac{A_{vw}}{d_v}-\frac{\A_{vw}}{\cd_v}\right]\p_v\\
			&=&\underbrace{(1-\alpha)\sum_{v=1}^N\left[(A_{vw}-\A_{vw})\frac{1}{\cd_v}\right]\p_v}_{:=I_{11}^w}+\underbrace{(1-\alpha)\sum_{v=1}^NA_{vw}\left(\frac{1}{d_v}-\frac{1}{\cd_v}\right)\p_v}_{:=I_{12}^w}.
		\end{eqnarray*}
		
		Recall that $A_{vw}$'s correspond to independent Bernoulli random variables, we can easily bound the first term using Bernstein's inequality (Lemma \ref{bernstein}), with probability at least $1-\cO(N^{-8})$,
		\begin{eqnarray*}
			|I_{11}^w|&\le&(1-\alpha)\left|\sum_{v=1}^N(A_{vw}-\A_{vw})\right|\frac{\|p\|_\infty}{\delta}\\
			&\le&(1-\alpha)\left(\sqrt{16\logN\sum_{v=1}^N\A_{vw}}+\frac{16\logN}{3}\right)\frac{\|p\|_\infty}{\delta}\\
			&\overset{\text{(i)}}{\le}&(1-\alpha)\left(4\sqrt{\frac{\rho\logN}{\delta}}+\frac{16\logN}{3\delta}\right)\|p\|_\infty,
		\end{eqnarray*}
		where (i) follows from the fact that $\rho\delta\le\cd_{\max}$.
		
		Note that the second term is 
		$$I_{12}^w=(1-\alpha)\sum_{v=1}^N \1_{(v,w)\in E}\left(\frac{1}{d_v}-\frac{1}{\cd_v}\right)\p_v,$$
		to which we can apply the Hoeffding's inequality (Lemma \ref{hoeffding}) and obtain
		$$\bP\left(|I_{12}^w|\le\rho(1-\alpha)\sqrt{\frac{\rho\logN}{\delta}}\|\p\|_\infty\right)\ge1-2N^{-8}.$$
		
		In sums, we have high probability event
		\begin{equation}\label{I_1}
		|I^w_1|\le(1-\alpha)\left((4+\rho)\sqrt\rho+3\sqrt{\frac{\logN}{\delta}}\right)\sqrt{\frac{\logN}{\delta}}\|p\|_\infty.
		\end{equation}
		\item 
		The statistical dependency between $p$ and $Q$ introduces difficulty in sharply bounding $I_2^w$. Nevertheless, we can invoke the leave-one-out techniques to decouple the dependency. To this end, we define, for each $w=1,2,...,N$, a new transition matrix $\Qw=\alpha\Pi+(1-\alpha)\Pw$ that bridges between $Q$ and $\Q$. $\Pw$ has almost the same entries as $P$ except for replacing those in $w$-th row or column by their expectations; that is, for any $u \neq v$, 
		$$\Pw_{uv}=\left\{
		\begin{array}{ll}
		P_{uv},&u\neq w \text{ and } v\neq w,\\
		\cP_{uv},&u=w \text{ or } v=w,
		\end{array}\right.$$
		and for any $u=1,2,...,N$,
		$$\Pw_{uu}=1-\sum_{v:v\neq u}\Pw_{uv},$$
		in order to ensure that $\Pw$ and $\Qw$ are transition matrices. In addition, define $\pw$ to be the stationary distribution corresponding to $\Qw$. As demonstrated in \citet{chen2017spectral}, $\pw$ helps us well approximate $p$, yet it is statistically independent of $Q_{\cdot w}$.
		
		Now we decompose $I_2^w$ as follows:
		\begin{eqnarray*}
			I_2^w&=&\sum_{v=1}^N Q_{vw}(p_v-\p_v)\\
			&=&\underbrace{\sum_{v=1}^N Q_{vw}\left(p_v-\pw_v\right)}_{:=I_{21}^w}+ \underbrace{\sum_{v=1}^N Q_{vw}\left(\pw_v-\p_v\right)}_{:=I_{22}^w}.
		\end{eqnarray*}
		
		\item 
		In this part, we focus on the first term $I_{21}^w$, where we would need another intermediate quantity to facilitate our estimation. To be specific, consider the leave-one-out version of $Q$ conditioning on the graph $G=(V,E)$, $\QwG=\alpha\Pi+(1-\alpha)\PwG$, which is almost the same as $Q$ except for replacing the non-zero entries in $w$-th row or column by their expectations. Concretely, for and $u\neq v$,
		$$\PwG_{uv}=\left\{
		\begin{array}{ll}
		P_{uv}, & u\neq w \text{ and } v\neq w,\\
		\1_{(u,v)\in E}\cP_{uv}, & u= w \text{ or } v= w,
		\end{array}\right.$$
		and for any $u=1,2,...,N$, define
		$$\PwG_{uu}=1-\sum_{v:v\neq u}\PwG_{uv},$$
		so that $\PwG$ is a probability transition matrix.
		
		With $\QwG$ in mind, we now apply Cauchy-Schwarz inequality on $I_{21}^w$ to reach
		\begin{eqnarray*}
			|I_{21}^w|&=&\left|\sum_{v=1}^N Q_{vw}\left(p_v-\pw_v\right)\right|\\
			&\le&\left(\sum_{v=1}^N Q_{vw}^2\right)^\frac{1}{2}\left\|p-\pw\right\|_2\\
			&\overset{\text{(i)}}{\le}&\sqrt{\alpha+\frac{1}{d_{\min}}}\sqrt{\frac{\p_{\max}}{\p_{\min}}}\left\|p-\pw\right\|_\p\\
			&\overset{\text{(ii)}}{\le}&\sqrt{\alpha+\frac{1}{d_{\min}}}\sqrt{\frac{\p_{\max}}{\p_{\min}}}\frac{1}{\gamma}\left\|\left(Q-\Qw\right)\T\pw\right\|_2\\
			&\overset{\text{(iii)}}{\overset{\text{w.h.p.}}{\le}}&\sqrt{\alpha+\frac{2\rho}{\delta}}\frac{\sqrt{\kappa}}{\gamma }\left(\underbrace{\left\|(Q-\QwG)\T\pw\right\|_2}_{:=I^w_{211}}+\underbrace{\left\|(\QwG-\Qw)\T\pw\right\|_2}_{:=I^w_{212}}\right).\\
		\end{eqnarray*}
		where (i) follows from Lemma \ref{norm} and the fact that $P_{vw}\le \frac{1}{d_{\min}}$, (ii) comes from Lemma \ref{lem:Davis}, and (iii) results from Lemma \ref{degree} and the triangle inequality, and recognizing $\kappa = \p_{\max}/\p_{\min}$ (from the proof of Proposition \ref{ppr_sol}, it is bounded), and ``w.h.p.'' is short for ``with high probability''. Note that $\Pi$ adds at most 1 to the rank of $\Q$, and because we presume $B$ is positive definite $\cP$ has exactly $K$ positive eigenvalues among other zeros (Section \ref{sxn:sp_transition}). Here, $\gamma=1-\max\{\lambda_2(\Q), -\lambda_N(\Q)\}-\|\QwG-\Q\|_\p$ is the spectral gap and is lower bounded by some positive constant (due to \citet{khanna2017approximation}). Then, it boils down to controlling $I^w_{211}$ and $I^w_{212}$.
		
		For $I^w_{211}$, the $w$-th entry inside the vector norm is 
		\begin{eqnarray*}
			\left[\left(Q-\QwG\right)\T\pw\right]_w&=&\left[\left(Q-\Q\right)\T\pw\right]_w\\
			&=&(1-\alpha)\sum_{v=1}^N(P_{vw}-\cP_{vw})\pw_v.
		\end{eqnarray*}
		Note that $\pw_v$ is statistically independent of $P_{\cdot w}$. Then, by Hoeffding's inequality (Lemma \ref{hoeffding}) and Lemma \ref{degree}, we have with probability at least $1-2N^{-8}$,
		\begin{equation}
		\label{eqn:I2111}
		\left[\left(Q-\QwG\right)\T\pw\right]_w\le4\rho(1-\alpha)\sqrt{\frac{\rho \logN}{\delta}}\left\|\pw\right\|_\infty.
		\end{equation}
		As for any $u\neq w$, applying Hoeffding's inequality again yields
		\begin{eqnarray*}
			\left[\left(Q-\QwG\right)\pw\right]_u&=&(1-\alpha)\sum_{v=1}^N\left(P_{vu}-\cP_{vu}\right)\pw_v\\
			&=&(1-\alpha)\left(P_{uu}-\PwG_{uu}\right)\pw_u\\
			&&+(1-\alpha)\left(P_{uw}-\PwG_{uw}\right)\pw_w\\
			&=&-(1-\alpha)\left(P_{uw}-\PwG_{uw}\right)\pw_u\\
			&&+(1-\alpha)\left(P_{uw}-\PwG_{uw}\right)\pw_w.
		\end{eqnarray*}
		Recognizing that 
		\begin{eqnarray*}
			\left|P_{uw}-\PwG_{uw}\right|&=&\left\{
			\begin{array}{ll}
				A_{uw}d^{-1}_{uu}-\A_{uw}\cd^{-1}_{uu},&(u,w)\in E,\\
				0,&(u,w)\notin E,
			\end{array}\right.
		\end{eqnarray*}
		we apply again the Hoeffding's inequality (Lemma \ref{hoeffding}) together with (\ref{deg1}), and obtain with probability at least $1-\cO\left(N^{-8}\right)$,
		\begin{equation}
		\label{eqn:I2112}
		\left|\left[\left(Q-\QwG\right)\T\pw\right]_u\right|\le\left\{
		\begin{array}{ll}
		4\rho(1-\alpha)\frac{\sqrt{\logN}}{\delta}\left\|\pw\right\|_\infty,&(u,w)\in E,\\
		0,&(u,w)\notin E.
		\end{array}\right.\\
		\end{equation}
		Combining (\ref{eqn:I2111}) and (\ref{eqn:I2112}) yields 
		\begin{eqnarray*}
			I^w_{211}&\le&4\rho(1-\alpha)\left(1+\sqrt{\sum_{u:u\neq w}\1_{(u,w)\in E}}\right)\sqrt{\frac{\rho \logN}{\delta}}\left\|\pw\right\|_\infty\\
			&\overset{\text{(i)}}{\overset{\text{w.h.p.}}{\le}}& 8\rho^2\sqrt{\rho}(1-\alpha)\sqrt{\frac{\logN}{\delta}}\left\|\pw\right\|_\infty,
		\end{eqnarray*}
		where (i) follows from the high probability event that $d_{\max}\le3\rho\delta/2$.
		
		Regarding $I^w_{212}$, since $(\QwG-\Qw)\p=\vec 0$, we can rewrite this as
		$$I^w_{212}=\left\|\left(\QwG-\Qw\right)\T\left(\pw-\p\right)\right\|_2.$$
		Similarly, note that $\Pw_{vw}-\PwG_{vw}=\frac{\A_{vw}}{\cd_v}\1_{(w, v)\notin E}$, we apply Bernstein's inequality on $w$-th term inside the vector norm to obtain that with probability at least $1-2N^{-8}$,
		\begin{eqnarray*}
			\left[\left(\QwG-\Qw\right)\left(\pw-\p\right)\right]_w&=&(1-\alpha)\sum_{v=1}^N\left(\PwG_{vw}-\Pw_{vw}\right)\left(\pw_v-\p_v\right)\\
			&=&(1-\alpha)\sum_{v=1}^N\frac{1}{\cd_v}\left(\pw_v-\p_v\right)\1_{(w, v)\notin E}\\	&\overset{\text{w.h.p.}}{\le}&(1-\alpha)\left(4\rho\sqrt{\frac{\rho\logN}{\delta}}+\frac{16}{3}\frac{\logN}{\delta}\right)\left\|\pw-\p\right\|_\infty.
		\end{eqnarray*}
		For any $u\neq w$, the $u$-th term inside vector norm is
		\begin{eqnarray*}
			\left[\left(\QwG-\Qw\right)\left(\pw-\p\right)\right]_u&=&(1-\alpha)\sum_{v=1}^N\left(\PwG_{vu}-\Pw_{vu}\right)\left(\pw_v-\p_v\right)\\
			&=&(1-\alpha)\left(\PwG_{vv}-\Pw_{uu}\right)\left(\pw_u-\p_u\right)\\
			&&+(1-\alpha)\left(\PwG_{vw}-\Pw_{uw}\right)\left(\pw_w-\p_w\right)\\
			&=&-(1-\alpha)\left(\PwG_{vw}-\Pw_{uw}\right)\left(\pw_u-\p_u\right)\\
			&&+(1-\alpha)\left(\PwG_{vw}-\Pw_{uw}\right)\left(\pw_w-\p_w\right).
		\end{eqnarray*}
		Recognizing that 
		$$\Pw_{uw}-\PwG_{uw}= \A_{uw}\cd^{-1}_{u}\1_{(u,w)\notin E},$$
		we have from (\ref{deg1}) that
		$$\left|\left[\left(\QwG-\Qw\right)\T\left(\pw-\p\right)\right]_u\right|\le 2\A_{uw}\cd^{-1}_{u}\1_{(u,w)\notin E}(1-\alpha)\left\|\pw-\p\right\|_\infty.$$
		Hence, we have with probability at least $1-\cO\left(N^{-8}\right)$,
		\begin{eqnarray*}
			I^w_{212}&\le&(1-\alpha)\left(4\rho\sqrt{\rho}\sqrt{\frac{\logN}{\delta}}+\frac{16}{3}\frac{\logN}{\delta}+2\sqrt{\sum_{u:u\neq w}\frac{\1_{(u,w)\notin E}}{\D_{uu}^2}}\right)\left\|\pw-\p\right\|_\infty\\
			&\overset{\text{(i)}}{\le}&(1-\alpha)\left(4\rho\sqrt{\rho}\sqrt{\frac{\logN}{\delta}}+\frac{16}{3}\frac{\logN}{\delta}+2\rho\sqrt{\frac{\rho}{\delta}}\right)\left\|\pw-\p\right\|_\infty,
		\end{eqnarray*}
		where (i) follows from the high probability event that $d_{\max}\le3\rho\delta/2$.
		Combining the above two bounds, we have with probability at least $1-\cO\left(N^{-8}\right)$ that
		\begin{eqnarray*}
			I^w_{21}&\le&\sqrt{\alpha+\frac{2\rho}{\delta}}\frac{\sqrt{\kappa}}{\gamma }\left(I^w_{211}+I^w_{212}\right)\\
			&\le&c(1-\alpha)\left(8\rho^2\sqrt{\frac{\rho\logN}{\delta}}\|\pw\|_\infty+\left(2\rho\sqrt\frac{\rho}{\delta}+4\rho\sqrt{\frac{\rho\logN}{\delta}}+\frac{16}{3}\frac{\logN}{\delta}\right)\|\pw-\p\|_\infty\right)\\
			&\overset{\text{(i)}}{\le}&8c\rho^2(1-\alpha)\sqrt{\frac{\rho\logN}{\delta}}\|\p\|_\infty\\
			&&+c(1-\alpha)\left(8\rho^2\sqrt{\frac{\rho\logN}{\delta}}+ 2\rho\sqrt\frac{\rho}{\delta}+4\sqrt{\frac{\rho\logN}{\delta}}+\frac{16}{3}\frac{\logN}{\delta}\right)\|\pw-\p\|_\infty\\
			&\overset{\text{(ii)}}{\le}&8c\rho^2(1-\alpha)\sqrt{\frac{\rho\logN}{\delta}}\|\p\|_\infty+\frac{c}{2}\|\pw-\p\|_\infty\\
			&\overset{\text{(iii)}}{\le}&16c\rho^2(1-\alpha)\sqrt{\frac{\rho\logN}{\delta}}\|\p\|_\infty+c\|p-\p\|_\infty.\\
		\end{eqnarray*}
		where $c=\sqrt{\alpha+\frac{2\rho}{\delta}}\frac{\sqrt{\kappa}}{\gamma }$, and (i) follows from the triangle inequality $\left\|\pw\right\|_\infty\le\left\|\pw-\p\right\|_\infty+\|\p\|_\infty$, and (ii) holds as long as $\delta>c_0(1-\alpha)^2\logN$ for some $c_0>0$ sufficiently large, and (iii) comes from the triangle inequality $\left\|\pw-\p\right\|_\infty\le\left\|\pw-p\right\|_2+\|p-\p\|_\infty$.
		
		\item Now it is left to estimate the last item $I^w_{22}$. Note that
		\begin{eqnarray*}
			I^w_{22}&=&\sum_{v=1}^N\1_{(v,w)\in E}Q_{vw}\left(\pw_v-\p_v\right)\\
			&=&\sum_{v=1}^N\left[\alpha\1_{\{w=1\}}+(1-\alpha)\frac{1}{d_v}\1_{(v,w)\in E}\right]\left(\pw_v-\p_v\right)\\
			&=&\underbrace{\alpha\sum_{v=1}^N\1_{\{w=1\}}\left(\pw_v-\p_v\right)}_{:=I^w_{221}}+\underbrace{(1-\alpha)\sum_{v=1}^N\frac{\1_{(v,w)\in E}}{\cd_v}\left(\pw_v-\p_v\right)}_{:=I^w_{222}}\\
			&&+\underbrace{(1-\alpha)\sum_{v=1}^N\left(\frac{1}{d_v}-\frac{1}{\cd_v}\right)\1_{(v,w)\in E}\left(\pw_v-\p_v\right)}_{:=I^w_{223}}.\\
		\end{eqnarray*}
		Since both $\pw$ and $\p$ are distribution vector, $I^w_{221}=0$. Then, due to Hoeffding's inequality (Lemma \ref{hoeffding}), 
		\begin{eqnarray*}
			|I^w_{222}|&\le&4\rho(1-\alpha)\sqrt{\frac{\rho\logN}{\delta}}\left\|\pw-\p\right\|_\infty,\\
			|I^w_{223}|&\le&2\rho(1-\alpha)\sqrt{\frac{\rho\logN}{\delta}}\left\|\pw-\p\right\|_\infty,
		\end{eqnarray*}
		with probability at least $1-\cO\left(N^{-8}\right)$.
		Thus, we reach the high probability event
		\begin{eqnarray*}
			|I^w_{22}|&\le&6\rho(1-\alpha)\sqrt{\frac{\rho\logN}{\delta}}\left\|\pw-\p\right\|_\infty.
		\end{eqnarray*}
		In sums, we reach with probability at least $1-\cO\left(N^{-8}\right)$,
		\begin{eqnarray}\label{I_2}
		|I^w_{2}|&\le&16\rho^2(1-\alpha)\frac{\sqrt{\kappa\rho}}{\gamma}\sqrt{\alpha+\frac{2\rho}{\delta}}\sqrt{\frac{\logN}{\delta}}\|\p\|_\infty\nonumber\\
		&&+\left(\frac{\sqrt{\kappa}}{\gamma}\sqrt{\alpha+\frac{2\rho}{\delta}}+6\rho(1-\alpha)\sqrt{\frac{\rho\logN}{\delta}}\right)\|p-\p\|_\infty.
		\end{eqnarray}
		
		\item Collecting the preceding bounds (\ref{I_1}) and (\ref{I_2}) together, we conclude that with high probability
		\begin{eqnarray*}
			\|p-\p\|_\infty&=&\max_w |p_w-\p_w|\\
			&\le&c_2 (1-\alpha)\sqrt{\frac{\logN}{\delta}}\|\p\|_\infty + c_3 \|p-\p\|_\infty,
		\end{eqnarray*} 
		as long as $\delta/[(1-\alpha)\logN]$ is sufficiently large, which controls the entrywise error of $p$,
		\begin{equation}\label{pbound}
		\frac{\|p-\p\|_\infty}{\|\p\|_\infty}\le c_1(1-\alpha)\sqrt{\frac{\logN}{\delta}},
		\end{equation}
		for some sufficiently large constant $c_1,c_2,c_3>0$.
		
		\textsc{Remark.}
		$c_2$ and $c_3$ are controlled by constants $\rho, \kappa, \gamma$, which are thereby driven from the model parameters $\B$, $\Theta$, $K$, and $Z$.

		\item Finally, we accomplish the proof by observing that
		\begin{eqnarray*}
			\frac{\|p^*-\p^*\|_\infty}{\|\p^*\|_\infty}&\le&\frac{2\max \left(d_{\min}^{-1}, \cd_{\min}^{-1}\right)}{\cd_{\min}^{-1}} \frac{\|p-\p\|_\infty}{\|\p\|_\infty}\\
			&\le&\frac{4\|p-\p\|_\infty}{\|\p\|_\infty}.
		\end{eqnarray*}
	\end{enumerate}
	Above observation together with the inequality (\ref{pbound}) allow us to control the entrywise error of $p^*$ as claimed, with probability at least $1-\cO\left(N^{-5}\right)$,
	$$\frac{\|p^*-\p^*\|_\infty}{\|\p^*\|_\infty} \le c_2(1-\alpha)\sqrt{\frac{\logN}{\delta}},$$
	for some sufficiently large constant $c_2>0$.
\end{proof}

\subsection{Concentration inequalities} \label{sxn:ineq}
The following is a standard concentration inequality used throughout the paper.
\begin{lemma}[Hoeffding's inequality]\label{hoeffding}
	Let $\{X_i\}_{1\le i\le n}$ be a sequence of independent random variables where $X_i\in[a_i,b_i]$ for each $1\le i \le n$, and $S_n=\sum_{i=1}^n X_i$. Then, 
	$$\bP(|S_n-\E S_n|\ge t)\le 2\exp\left\{-\frac{t^2}{\sum_{i=1}^n(b_i-a_i)^2}\right\}.$$
\end{lemma}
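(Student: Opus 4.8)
The plan is to prove the statement by the standard Chernoff (exponential Markov) method, which in fact delivers the sharper sub-Gaussian tail carrying $2t^2$ in the exponent; the stated bound with $t^2$ then follows at once, since $-2t^2/s \le -t^2/s$ for $s=\sum_{i=1}^n (b_i-a_i)^2>0$. First, for any $\lambda>0$ I would apply Markov's inequality to the exponentiated deviation,
$$\bP(S_n-\E S_n\ge t)\le e^{-\lambda t}\,\E\exp\left\{\lambda(S_n-\E S_n)\right\},$$
and then exploit independence of the $X_i$ to factorize the moment generating function,
$$\E\exp\left\{\lambda(S_n-\E S_n)\right\}=\prod_{i=1}^n\E\exp\left\{\lambda(X_i-\E X_i)\right\}.$$

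The crux of the argument, and the step I expect to be the main obstacle, is the per-coordinate bound known as Hoeffding's lemma: for a single $Y\in[a,b]$ with $\E Y=0$,
$$\E e^{\lambda Y}\le\exp\left\{\frac{\lambda^2(b-a)^2}{8}\right\}.$$
I would establish this by convexity of $y\mapsto e^{\lambda y}$, giving $e^{\lambda y}\le\frac{b-y}{b-a}e^{\lambda a}+\frac{y-a}{b-a}e^{\lambda b}$ for $y\in[a,b]$; taking expectations and using $\E Y=0$ reduces the right-hand side to $\exp\{\psi(\lambda)\}$ with $\psi(\lambda)=-\lambda p(b-a)+\log\!\left(1-p+p\,e^{\lambda(b-a)}\right)$ and $p=-a/(b-a)\in[0,1]$. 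A short computation shows $\psi(0)=\psi'(0)=0$ while $\psi''(\lambda)\le(b-a)^2/4$ uniformly (the variance of a Bernoulli-type variable is at most $1/4$), so a second-order Taylor expansion yields $\psi(\lambda)\le\lambda^2(b-a)^2/8$. This convexity-plus-Taylor estimate is the only genuinely nontrivial input; everything else is bookkeeping.

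Applying Hoeffding's lemma to each factor with $Y=X_i-\E X_i$, which lies in an interval of the same length $b_i-a_i$, produces
$$\E\exp\left\{\lambda(S_n-\E S_n)\right\}\le\exp\left\{\frac{\lambda^2}{8}\sum_{i=1}^n(b_i-a_i)^2\right\},$$
and hence $\bP(S_n-\E S_n\ge t)\le\exp\{-\lambda t+\tfrac{\lambda^2}{8}\sum_i(b_i-a_i)^2\}$. Minimizing the exponent over $\lambda>0$ at $\lambda=4t/\sum_i(b_i-a_i)^2$ gives the one-sided bound $\exp\{-2t^2/\sum_i(b_i-a_i)^2\}$. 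Finally I would run the identical argument on $-S_n$ to control the lower tail and combine the two via a union bound, which supplies the leading factor of $2$ and yields
$$\bP(|S_n-\E S_n|\ge t)\le 2\exp\left\{-\frac{2t^2}{\sum_{i=1}^n(b_i-a_i)^2}\right\}\le 2\exp\left\{-\frac{t^2}{\sum_{i=1}^n(b_i-a_i)^2}\right\},$$
the last inequality recovering the weaker form stated in the lemma.
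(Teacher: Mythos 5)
Your proof is correct: the exponential Markov bound, Hoeffding's lemma via convexity plus the second-order Taylor estimate $\psi''(\lambda)\le (b-a)^2/4$, the optimization at $\lambda = 4t/\sum_i(b_i-a_i)^2$, and the union bound for the two-sided tail are all sound, and you rightly note that the stated (weaker) constant follows from the sharper $2t^2$ exponent. The paper itself omits the proof and defers to \citet{boucheron2013concentration}, where this is exactly the standard argument given, so there is nothing to compare beyond noting agreement.
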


The next lemma is a special case of Chernoff's bound.
\begin{lemma}[Chernoff's bounds]\label{chernoff}
	Let $\{X_i\}_{1\le i\le n}$ be a sequence of independent random variables, whose sum is $S_n$, each having probability $p_i$ of being equal to $a_i$, otherwise 0. Define $\mu=\sum_i p_ia_i$. Then, for any $\epsilon>0$,
	$$\bP\left(X_i\ge (1+\epsilon)\mu\right)\le (1+\epsilon)^{-\epsilon\mu},$$
	$$\bP\left(X_i\le (1-\epsilon)\mu\right)\le (1-\epsilon)^{\epsilon\mu}.$$
\end{lemma}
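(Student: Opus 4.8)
The plan is to prove both tail bounds by Chernoff's exponential-moment method, treating the two displays symmetrically. For the upper tail, fix any $t>0$ and apply the exponential Markov inequality to $S_n$: since $x\mapsto e^{tx}$ is increasing,
$$\bP\left(S_n\ge(1+\epsilon)\mu\right)=\bP\left(e^{tS_n}\ge e^{t(1+\epsilon)\mu}\right)\le e^{-t(1+\epsilon)\mu}\,\E\left[e^{tS_n}\right].$$
The factor $\E[e^{tS_n}]$ is the object to control, and everything else is bookkeeping in $t$ and $\epsilon$.

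Next I would exploit independence to factor the moment generating function, $\E[e^{tS_n}]=\prod_{i=1}^n\E[e^{tX_i}]$, and compute each factor exactly from the two-point law of $X_i$, namely $\E[e^{tX_i}]=1+p_i(e^{ta_i}-1)$. Using the elementary inequality $1+x\le e^{x}$ termwise then gives $\E[e^{tS_n}]\le\exp\{\sum_{i=1}^n p_i(e^{ta_i}-1)\}$, which in the unit-increment case $a_i\equiv1$ collapses to $\exp\{\mu(e^{t}-1)\}$ with $\mu=\sum_i p_i$. Substituting back yields $\bP(S_n\ge(1+\epsilon)\mu)\le\exp\{\mu(e^{t}-1)-t(1+\epsilon)\mu\}$, and the natural choice $t=\log(1+\epsilon)$, the minimizer of the exponent, produces the multiplicative bound $\bigl(e^{\epsilon}/(1+\epsilon)^{1+\epsilon}\bigr)^{\mu}$. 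The lower tail is handled identically by applying the same argument to $-S_n$ with $t<0$, i.e. $t=\log(1-\epsilon)$, giving $\bigl(e^{-\epsilon}/(1-\epsilon)^{1-\epsilon}\bigr)^{\mu}$.

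The step I expect to be the main obstacle is the last one: passing from these sharp, optimized exponents to the clean closed forms $(1+\epsilon)^{-\epsilon\mu}$ and $(1-\epsilon)^{\epsilon\mu}$ displayed in the statement. This requires carefully tracking the direction of the standard logarithmic inequalities (for instance $\log(1+\epsilon)\le\epsilon$ and its lower-tail analogue) and verifying that they point the right way over the relevant range of $\epsilon$; it is here that one must be precise about the regime of $\epsilon$ for which the displayed form is claimed. Since the lemma is invoked in the concentration arguments only as a crude device to produce $\cO(N^{-8})$-type failure probabilities, any of these equivalent multiplicative forms suffices, and I would simply record whichever bound the downstream Hoeffding- and Bernstein-type estimates actually call on.
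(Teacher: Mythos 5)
You cannot really be compared against the paper's own argument here, because the paper gives none: it states that the proofs of Lemmas S1--S3 ``can be found in \citet{boucheron2013concentration}, hence are omitted.'' Your exponential-moment plan is the canonical route to a Chernoff bound, and everything through the factorization of $\E[e^{tS_n}]$, the bound $1+x\le e^x$, and the optimization $t=\log(1\pm\epsilon)$ is fine (modulo two small points: the events in the display should read $S_n$, not $X_i$, and you silently set $a_i\equiv 1$ --- for general unbounded $a_i$ no bound depending on $\mu$ alone can hold, e.g.\ a single two-point variable with $a_1=(1+\epsilon)\mu$ and $p_1=1/(1+\epsilon)$ already violates the display once $\epsilon\mu>1$, so some normalization such as $a_i\le 1$ is implicitly required).

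The obstacle you flagged at the end is, however, a genuine gap, and it cannot be closed. The optimized exponent gives $\bP(S_n\ge(1+\epsilon)\mu)\le\bigl(e^{\epsilon}/(1+\epsilon)^{1+\epsilon}\bigr)^{\mu}=\exp\{-\mu[(1+\epsilon)\log(1+\epsilon)-\epsilon]\}$, whereas the display claims $\exp\{-\epsilon\mu\log(1+\epsilon)\}$. The claimed bound follows from the optimized one if and only if $(1+\epsilon)\log(1+\epsilon)-\epsilon\ge\epsilon\log(1+\epsilon)$, i.e.\ if and only if $\log(1+\epsilon)\ge\epsilon$, which fails for every $\epsilon>0$. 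So the displayed inequality is strictly \emph{stronger} than Chernoff's bound, and no bookkeeping with logarithmic inequalities can recover it; indeed it is false as stated. For $X_i\sim\text{Bernoulli}(\mu/n)$ with $n$ large, $S_n$ is nearly Poisson$(\mu)$ and $\bP(S_n\ge(1+\epsilon)\mu)$ is of order $\bigl(e^{\epsilon}/(1+\epsilon)^{1+\epsilon}\bigr)^{\mu}$ up to a polynomial factor; with $\mu=100$ and $\epsilon=1$ this is about $10^{-19}$, while the claimed bound is $2^{-100}\approx 10^{-30}$. The same comparison ($e^{-\epsilon}\ge 1-\epsilon$) defeats the lower tail. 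Your closing remark points to the right repair: restate the lemma with the correct multiplicative forms $\bigl(e^{\epsilon}/(1+\epsilon)^{1+\epsilon}\bigr)^{\mu}$ and $\bigl(e^{-\epsilon}/(1-\epsilon)^{1-\epsilon}\bigr)^{\mu}$ (or a Bernstein-type exponent); these are what your argument actually proves and they suffice for the $\cO(N^{-8})$ failure probabilities needed downstream, e.g.\ in Lemma \ref{lem:lp}.
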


For the use of this paper, we only invoke a simpler version of Bernstein inequality.
\begin{lemma}[Bernstein's inequality]\label{bernstein}
	Let $\{X_i\}_{1\le i\le n}$ be a sequence of independent random variables with $|X_i|\le B$ for each $1\le i \le n$, and $S_n=\sum_{i=1}^n X_i$ and $T_n=\sum_{i=1}^n X_i^2$. Then, with probability at least $1-2n^{-a}$,
	$$|S_n-\E [S_n]|\le\sqrt{2a\log n\E [T_n]} +\frac{2a}{3}B\log n$$
	for any $a\ge2$.
\end{lemma}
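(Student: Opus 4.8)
The plan is to establish this standard two-sided concentration bound by the classical Cram\'er--Chernoff exponential-moment method (as for the neighbouring Hoeffding and Chernoff bounds) and then to invert the resulting tail estimate into the stated additive two-term form.

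First I would center the summands, writing $Y_i = X_i - \E[X_i]$, so that the $Y_i$ are independent with $\E[Y_i]=0$. Setting $v=\sum_{i=1}^n \mathrm{Var}(X_i)=\sum_{i=1}^n\mathrm{Var}(Y_i)$, the analytic core is the moment-generating-function bound for a bounded, mean-zero variable: for $0\le\lambda<3/B$ one has $\E e^{\lambda Y_i}\le \exp\big(\tfrac{\lambda^2\mathrm{Var}(Y_i)/2}{1-\lambda B/3}\big)$. This follows by expanding the exponential, using $\E[Y_i]=0$ to drop the linear term, dominating $\E[Y_i^k]\le \E[|Y_i|^k]\le B^{k-2}\mathrm{Var}(Y_i)$ to get $\E e^{\lambda Y_i}\le 1+\tfrac{\mathrm{Var}(Y_i)}{B^2}(e^{\lambda B}-1-\lambda B)$, and then invoking the elementary inequality $e^x-1-x\le \tfrac{x^2/2}{1-x/3}$ on $[0,3)$. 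By independence these bounds multiply, giving $\E e^{\lambda(S_n-\E[S_n])}\le\exp\big(\tfrac{\lambda^2 v/2}{1-\lambda B/3}\big)$.

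Next I would apply Markov's inequality to $e^{\lambda(S_n-\E[S_n])}$ and optimize over $\lambda\in(0,3/B)$; the explicit choice $\lambda=t/(v+Bt/3)$ (which automatically satisfies $\lambda<3/B$) collapses the exponent to $-\tfrac{t^2/2}{v+Bt/3}$, yielding the one-sided tail $\bP(S_n-\E[S_n]\ge t)\le\exp\big(-\tfrac{t^2/2}{v+Bt/3}\big)$; running the same argument for $-Y_i$ and taking a union bound supplies the two-sided statement with the leading factor $2$.

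The remaining step is the inversion. Setting the tail equal to $2n^{-a}$ requires $\tfrac{t^2/2}{v+Bt/3}\ge a\log n$, i.e. $t^2-\tfrac{2aB\log n}{3}t-2av\log n\ge 0$; writing $P=\tfrac{aB\log n}{3}$, the positive root is $P+\sqrt{P^2+2av\log n}$, and bounding it via $\sqrt{x^2+y^2}\le x+y$ shows that every $t\ge \sqrt{2av\log n}+\tfrac{2aB}{3}\log n$ works. Finally, replacing the true variance by the over-estimate $v=\sum_i\mathrm{Var}(X_i)\le\sum_i\E[X_i^2]=\E[T_n]$ — which only enlarges the bound, since $v$ sits under a square root — gives exactly $|S_n-\E[S_n]|\le\sqrt{2a\log n\,\E[T_n]}+\tfrac{2a}{3}B\log n$. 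I expect the only real obstacle to be bookkeeping: matching the constants $\sqrt{2a}$ and $\tfrac{2a}{3}$ precisely through the $\sqrt{x^2+y^2}\le x+y$ split, and noting that centering a general $[-B,B]$ variable guarantees only $|Y_i|\le 2B$, so that the sharp linear constant $\tfrac{2}{3}B$ (rather than $\tfrac{4}{3}B$) relies on the summands being one-signed, as they are everywhere this lemma is applied; the hypothesis $a\ge 2$ is needed only so that the failure probabilities $2n^{-a}$ remain summable across the many union bounds taken elsewhere in the proofs.
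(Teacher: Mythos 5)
The paper never actually proves this lemma: it states that the proofs of the Hoeffding, Chernoff, and Bernstein lemmas ``can be found in Boucheron et al.\ (2013), hence are omitted,'' so there is no in-paper argument to compare against. Your Cram\'er--Chernoff plan --- bound the MGF of the centered sum by $\exp\bigl(\lambda^2 v/(2(1-\lambda B/3))\bigr)$, apply Markov, take $\lambda=t/(v+Bt/3)$ to collapse the exponent to $-\tfrac{t^2/2}{v+Bt/3}$, and invert the resulting quadratic via $\sqrt{P^2+Q}\le P+\sqrt{Q}$ --- is exactly the standard proof, and your inversion does reproduce the stated constants $\sqrt{2a\log n\,\E[T_n]}$ and $\tfrac{2a}{3}B\log n$ once $v$ is taken to be $\E[T_n]$.

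The one genuine soft spot is the step you yourself flagged: $\E[|Y_i|^k]\le B^{k-2}\mathrm{Var}(Y_i)$ is false in general for $Y_i=X_i-\E[X_i]$ when only $|X_i|\le B$ is assumed, and patching it with $|Y_i|\le 2B$ degrades the linear term to $\tfrac{4a}{3}B\log n$, which does not imply the stated bound. Your escape hatch --- that the summands are one-signed wherever the lemma is invoked --- is true of the paper's applications (Bernoullis and indicators), but the lemma is stated for arbitrary $|X_i|\le B$, so the proof should not lean on it. The standard repair is to bound the MGF \emph{before} centering: since $\phi(u)=e^u-u-1$ has $\phi(u)/u^2$ nondecreasing, the one-sided bound $X_i\le B$ gives $\phi(\lambda X_i)\le \tfrac{X_i^2}{B^2}\phi(\lambda B)$ for $\lambda>0$, hence $\log\E\, e^{\lambda(X_i-\E[X_i])}\le \tfrac{\E[X_i^2]}{B^2}\phi(\lambda B)\le\tfrac{\E[X_i^2]\lambda^2/2}{1-\lambda B/3}$. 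This needs no bound on the centered variable, produces $v=\sum_i\E[X_i^2]=\E[T_n]$ directly rather than the variance, and feeds into the rest of your argument unchanged (the lower tail follows by applying the same bound to $-X_i$ and a union bound), giving exactly the stated inequality. A minor point: $a\ge2$ plays no role in the proof itself; it is only there to match how the lemma is invoked elsewhere.
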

The proofs of Lemma \ref{hoeffding}, \ref{chernoff}, and \ref{bernstein} can be found in \citet{boucheron2013concentration}, hence are omitted.

%\newpage
\section{Additional information on model parameters}\label{sxn:parameter}

\subsection{Block connectivity matrix $\B$} 
In the paper, we assume that the block connectivity matrix $\B$ corresponds to a strongly connected graph at block level and is positive definite. 
These assumptions asserts the efficacy of PPR clustering and is primarily a technical assumption sufficient for our theoretical results. 
In fact, we require $\B$ to represent to a strongly connected graph because this enables the block-wise PPR vector to have the largest value corresponding to the block of seed(s) (Lemma \ref{lem:top} in the paper). 
On the other hand, we impose the positive definiteness on $\B$ because this allows us to intuitively define the notion of local cluster, yet our statistical theory (i.e., the entrywise control of sample PPR vector) does not explicitly rely on such positive-definiteness per se.
It is not clear yet whether these constraints are \textit{necessary} in order for PPR clustering to function; possible generalizations of them are of research interest. 

We list a few concrete examples showing that 
(i) if we break the strongly-connectivity assumption, the PPR clustering can fail, despite a reasonable teleportation constant, $\alpha=0.15$, but (ii) PPR clustering often works as hoped even when $\B$ is not positive-definite. 
Throughout, we assume that the first block is targeted and 
consider directed graphs with three underlying blocks ($K=3$). 
The first two instances of $\B$ demonstrates the necessity of the strongly-connectivity constraint, which ensures the block-wise aPPR vector to possess the largest first element.  
The third and forth instances, on the other hand, indicate that $\B$ need not to be positive definite. 

\subsubsection{Violating the strongly-connective assumption}
\paragraph{Hierarchy case.} 
Let the block connectivity matrix 
$$\B=\begin{bmatrix} 
p & p & p \\
0 & p & p \\
0 & 0 & p \end{bmatrix}$$
for some constant $p>0$. $\B_{ij}$ is the number (or the probability) of edges from the $i$-th block to the $j$-th block in population. 
Then, the directed graph represented by $\B$ is not strongly connected, as block 3 has no path to the first block. 
In fact, this graph (specified by upper triangular $\B$) has a hierarchical structure, where the third block is in the center (or the highest hierarchy) of the graph, and the member of first block are essentially satellite from outside.
Particularly, edges only come from outsiders to insiders.

We now perform the PPR clustering on the first cluster. 
The block-wise transition matrix is 
$$\tP=\begin{bmatrix} 
1/3 & 1/3 & 1/3 \\
0 & 1/2 & 1/2 \\
0 & 0 & 1 \end{bmatrix}.$$
Then, both $\B$ and $\tP$ are positive definite, with eigenvalues of $(p, p, p)$ and $(1, 1/2, 1/3)$ respectively.
To ease the calculation, we set $p=3$.
Then the block-wise PPR vector is approximately 
$$\tp=(0.209, 0.103, 0.688),$$ 
and the block-wise aPPR vector is approximately (after adjusting by column sums of $\B$) 
$$\tp^*=(0.0698, 0.0172, 0.0764).$$
As shown, neither block-wise PPR vector nor aPPR vector properly recognize the local cluster 1. 

\paragraph{Adding a small amount of circulation.}
If we add a small quantity to the left bottom element of above $\B$ matrix, then the block connectivity matrix corresponds to a connected graph.
To illustrate, we assign a small value to it, $\B_{31}=0.1$, then the new block connectivity matrix becomes
$$\B'=\begin{bmatrix} 
p & p & p \\
0 & p & p \\
0.1 & 0 & p \end{bmatrix}.$$
To explore the PPR vector, we set $p=3$ once again. 
In this case, $\B'$ has one real eigenvalue ($\approx 4.069$) and two imaginary eigenvalues.
The block-wise PPR vector is approximately 
$$\tp=(0.235, 0.115, 0.650),$$ 
and the block-wise aPPR vector is approximately (after adjusting by column sums of $\B'$) 
$$\tp^*=(0.0755, 0.0192, 0.0723).$$
In this case, the PPR clustering works like a charm.

\subsubsection{Violating the positive-definite assumption} 
Consider again the $K=3$ design with equally distributed block size. 
We present two examples breaking the positive-definite assumption on $\B$, where the PPR cluster still operates properly.

\paragraph{Indefinite case.}
Given some constants $r>p>0$, define $$\B=\begin{bmatrix} 
p & r & r \\
r & p & r \\
r & r & p \end{bmatrix}.$$

In this case, the random graphs generated from such configuration of $\B$ have a unique characteristic: two vertices with different block memberships are more likely to connect than those pairs belonging to the same block. 
Note that the three eigenvalues of $\B$ are $p+2r$, $p-r$, and $p-r$. Hence, $B$ is an indefinite matrix (so does the block-wise transition matrix $\tP$).

Interestingly, the PPR clustering continues working under this circumstance. 
For simplicity, setting $p=3$ and $r=9$, and we articulate the block-wise PPR vector and aPPR vector. 
In fact, the block-wise PPR vector is approximately 
$$\tp=(0.386, 0.306, 0.306).$$ 
Since $\B$ has homogeneous column sums, it follows that the first element in the block-wise aPPR vector is also the largest, suggesting the effectiveness of PPR clustering.
The same conclusion hold when we set $p=3$ and $r=99$ (or $999$).

\paragraph{Singular case.} 
Suppose $p>0$ and let $$\B=\begin{bmatrix} 
0 & p & 0 \\
p & 0 & p \\
0 & p & 0 \end{bmatrix}.$$

In this case, nodes in block 1 only connect with those nodes in block 2, and the nodes in block 3 only have edges with block 2's members.
Note that $\B$ is singular because three of its eigenvalues are 1, -1, and 0.
So does the block-wise transition matrix.
However, the PPR clustering remain effective. 
In fact, the block-wise PPR vector and aPPR vector are
\begin{eqnarray*}
	\tp=(0.345, 0.459, 0.195) &\text{and}& \tp^*=\frac{1}{p}(0.345, 0.230, 0.195).
\end{eqnarray*}

In both cases (when $\B$ is not positive-definite), the block-wise aPPR vector correctly assigns the largest value to the first element and thus is still effective for targeted sampling. 
These examples suggest a potentially greater applicability of the PPR clustering under the block model graph.

\subsubsection{Comments}
Putting together above demonstrations, we briefly comment on $\B$ and the PPR clustering. 
(i) The strongly-connectivity assumption is essential for the PPR clustering to be consistent.
(ii) The efficacy of PPR clustering is conditioning on the fact that teleportation constant is sufficiently large. 
If we assign an extremely small to it, e.g. $\alpha=0.001$, the PPR clustering collapses. 
(iii) Beyond community-like graphs (where $\B$ is positive-definite), the PPR clustering has potential for working on a more general block model graphs.
%(iv) Finally, we noticed from above examples that $\B$ has at least one \textit{real} and positive eigenvalue in order for a consistent PPR clustering.

\subsection{Spectral analysis on graph transition $\cP$} \label{sxn:sp_transition}
In this section, we present a spectral analysis of graph transition matrix, which demonstrates that (1) under the population DC-SBM, a graph transition matrix $\cP$ has exactly $K$ positive eigenvalues, and $N-K$ zero eigenvalues, and (2) in a random graph generated from the DC-SBM, the graph transition matrix $P$ is close to its population, with respect to spectral norm.

\begin{lemma} [Eigen-decomposition for $\cP$ and $\tP$] 
	\underdcsbm let $\cP\in\R^{N\times N}$ be the population graph transition matrix and $\tP\in\R^{K \times K}$ be the block-wise transition matrix. Then, $\cP$ and $\tP$ have the same $K$ positive eigenvalues. The remaining $N-K$ eigenvalues of $\cP$ are all zeros. Denote the $K$ positive eigenvalues of both matrices as $\lambda_1 \ge \lambda_2 \ge \cdots ... \lambda_K\ge 0$, and let $\mathcal{X}\in \mathbb R^{N\times K}$ and $\mathcal{Y}\in \mathbb R^{K\times K}$ contain the left eigenvector of $\cP$ and $\tP$ respectively, corresponding to $\lambda_i$ in their $i$-th column. Then, there exists a orthogonal matrix $U\in\R^{K \times K}$, such that
	\begin{enumerate}[\normalfont(a)]
		\item $\mathcal X\T=\mathcal{D}^{-1/2}\Theta^{1/2}ZU;$ and
		\item $\mathcal Y\T=\tD^{-1/2}U.$
	\end{enumerate} 
\end{lemma}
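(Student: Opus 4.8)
The plan is to reduce both eigenproblems to symmetric ones by conjugating with degree matrices, since $\cP=\mathcal{D}^{-1}\A$ and $\tP=\tD^{-1}\B$ are each a similarity transform of a symmetric matrix. First I would symmetrize the block transition matrix: set $\widetilde{\B}=\tD^{-1/2}\B\tD^{-1/2}$, which is symmetric and, because $\B$ is positive definite by assumption, positive definite, so it admits $\widetilde{\B}=U\Lambda U\T$ for an orthogonal $U$ and $\Lambda=\diag{\lambda_1,\dots,\lambda_K}$ with $\lambda_1\ge\cdots\ge\lambda_K>0$. Since $\tP=\tD^{-1/2}\widetilde{\B}\tD^{1/2}$ is similar to $\widetilde{\B}$, it carries the same eigenvalues $\lambda_1,\dots,\lambda_K$, and its eigenvectors are recovered from those of $\widetilde{\B}$ by the conjugation $\tD^{\mp1/2}U$; matching the normalization in the statement reproduces part (b), $\mathcal{Y}\T=\tD^{-1/2}U$.

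Next I would treat $\cP$ in parallel. Writing $\A=\Theta Z\B Z\T\Theta$ and using the degree identity $\cd_v=\theta_v\td_{z(v)}$ from Lemma~\ref{lem:basic}, the symmetrized matrix $\widetilde{\A}=\mathcal{D}^{-1/2}\A\mathcal{D}^{-1/2}$ simplifies after observing the two reduction identities $\mathcal{D}^{-1/2}\Theta=\Theta^{1/2}\tD_Z^{-1/2}$, where $\tD_Z$ is the diagonal matrix with entries $\td_{z(v)}$, and $\tD_Z^{-1/2}Z=Z\tD^{-1/2}$. Substituting these collapses $\widetilde{\A}$ to $\Theta^{1/2}Z\,\widetilde{\B}\,Z\T\Theta^{1/2}=(\Theta^{1/2}ZU)\,\Lambda\,(\Theta^{1/2}ZU)\T$.

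The crux is the orthonormality of the factor $W:=\Theta^{1/2}ZU$. The key identity is $Z\T\Theta Z=I_K$, which is exactly the DC-SBM identifiability constraint $\sum_{v:z(v)=i}\theta_v=1$ (cf.\ the Remark after Lemma~\ref{lem:basic}); it gives $W\T W=U\T(Z\T\Theta Z)U=I_K$, so $W$ has orthonormal columns and the displayed factorization is a genuine reduced eigendecomposition of $\widetilde{\A}$. Because $\B$ has full rank $K$ and $\Theta^{1/2}Z$ has full column rank $K$, the matrix $\widetilde{\A}$ is positive semidefinite of rank exactly $K$, hence it carries the $K$ positive eigenvalues $\lambda_1,\dots,\lambda_K$ of $\widetilde{\B}$ together with $N-K$ zeros. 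Transferring back through the similarity $\cP=\mathcal{D}^{-1/2}\widetilde{\A}\mathcal{D}^{1/2}$ then establishes both eigenvalue claims, and undoing the conjugation on $W$ gives part (a): the eigenvectors of $\cP$ are the columns of $\mathcal{D}^{-1/2}W=\mathcal{D}^{-1/2}\Theta^{1/2}ZU$, matching $\mathcal{X}\T$.

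I expect the main difficulty to be bookkeeping rather than conceptual: keeping the three diagonal degree matrices $\mathcal{D}$, $\tD$, and $\tD_Z$ straight, verifying the two reduction identities above, and—most delicately—tracking which power of the degree matrix ($\mathcal{D}^{+1/2}$ versus $\mathcal{D}^{-1/2}$, and likewise for $\tD$) accompanies the eigenvectors, since left and right eigenvectors of the non-symmetric $\cP$ and $\tP$ differ by this factor and only one choice reproduces the normalization asserted in the statement.
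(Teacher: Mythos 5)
Your proposal is correct and follows essentially the same route as the paper's proof: symmetrize $\tP$ via $\tD^{-1/2}\B\tD^{-1/2}$ and $\cP$ via the Laplacian $\mathcal{D}^{-1/2}\A\mathcal{D}^{-1/2}$, use the degree identity $\cd_v=\theta_v\td_{z(v)}$ to collapse the latter to $\Theta^{1/2}Z(\tD^{-1/2}\B\tD^{-1/2})Z\T\Theta^{1/2}$, and invoke $Z\T\Theta Z=I_K$ for orthonormality of $\Theta^{1/2}ZU$ before undoing the similarity transform. The only cosmetic differences are that the paper verifies the Laplacian factorization entrywise rather than by your two reduction identities, and it leaves the rank-$K$ count of nonzero eigenvalues implicit where you spell it out.
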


\begin{proof}
	We follow the proof of Lemma 3.3 in \citet{qin2013regularized}. Define $\textbf{L}=\tD^{-1/2}\B\tD^{-1/2}$, then $\tP= \tD^{-1/2}\textbf{L}\tD^{1/2}$.
	By model assumption, $\tP\succ 0$. 
	%		Let $U\in \mathbb R^{K\times K}$ be an orthogonal matrix whose $k$-th column is the eigenvector of $\tP$ corresponding to $\lambda_k$, $k=1,2,...,K$, then $\tP=U\Lambda U\T$. 
	
	Define the graph Laplacian $\mathcal{L=D^{-1/2}AD^{-1/2}}$, then by Lemma \ref{lem:basic}\ref{D_DB}, 
	$$\mathcal{L}_{uv}=\frac{\mathcal{A}_{uv}}{\sqrt{\cd_u\cd_v}}=\frac{\theta_u\theta_v\B_{z(u)z(v)}}{\sqrt{\cd_u\cd_v}}=\frac{\B_{z(u)z(v)}\sqrt{\theta_u\theta_v}}{\sqrt{\td_{z(u)}\td_{z(v)}}}=[\textbf{L}]_{z(u)z(v)}\sqrt{\theta_u\theta_v},$$
	or equivalently, $$\mathcal{L}=\Theta^{1/2}Z\textbf{L} Z\T  \Theta^{1/2}.$$
	
	Then 
	$$\mathcal{X}\T\Lambda\mathcal{X}'=\D^{-1/2}\Theta^{1/2}ZU\Lambda U\T Z\T \Theta^{1/2}\D^{1/2}=\D^{-1/2}\mathcal{L}\D^{1/2}=\D^{-1}\A=\mathcal{P},$$
	and
	$$\mathcal{Y}\T\Lambda\mathcal{Y}'=\tD^{-1/2}U\Lambda U\T \tD^{1/2}=\tD^{-1/2}\textbf{L} \tD^{1/2}=\tP,$$
	where $\mathcal X' =U\T Z\T \Theta^{1/2}\mathcal{D}^{1/2}$ and $\mathcal Y'=U\T \tD^{1/2}$ are right eigenvectors if $\cP$ and $\tP$ respectively.
	Recognizing that $\mathcal{X}\T\mathcal{X}'=\mathcal{Y}\T\mathcal{Y}'=I_K$ completes the proof.
\end{proof}

\begin{lemma} \label{P&L}
	Let $L$ be a symmetric matrix, let $D$ be a diagonal matrix, and let $P=D^{-1/2}LD^{1/2}$. If $x$ is an eigenvector of $L$ corresponding to eigenvalue $\lambda$, then
	\begin{enumerate}[\normalfont(a)]
		\item $D^{-1/2}x$ is a right eigenvector of $P$ with eigenvalue $\lambda$, \label{eigen_trans}
		\item $\|P\T P\|=\|L\|^2$.\label{norm_trans}
	\end{enumerate}
\end{lemma}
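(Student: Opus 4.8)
The plan is to handle the two parts separately, both built on the single observation that $P=D^{-1/2}LD^{1/2}$ is a similarity transform of the symmetric matrix $L$; throughout, $D$ is a diagonal matrix with strictly positive diagonal, so that $D^{1/2}$ and $D^{-1/2}$ are well defined and invertible.

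Part (\ref{eigen_trans}) is a one-line substitution. If $Lx=\lambda x$, then
\[
P(D^{-1/2}x)=D^{-1/2}LD^{1/2}D^{-1/2}x=D^{-1/2}Lx=\lambda\,D^{-1/2}x,
\]
so $D^{-1/2}x$ is a right eigenvector of $P$ with eigenvalue $\lambda$. Because $D^{-1/2}$ is invertible, this correspondence is a bijection on eigenspaces, so $P$ and $L$ have identical spectra---a fact I would record and reuse in part (\ref{norm_trans}).

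For part (\ref{norm_trans}), $\|L\|$ denotes the spectral norm, which for the symmetric $L$ equals $\max_i|\lambda_i(L)|$. The crucial point is that, although $P$ is generally not symmetric, it is self-adjoint with respect to the inner product $\langle x,y\rangle_D=x\T D y$ induced by $D$ (a weighting of the same type as the $\|\cdot\|_p$ norms in Lemma \ref{norm}). Indeed, using the symmetry of $L$,
\[
P\T D=D^{1/2}LD^{-1/2}D=D^{1/2}LD^{1/2}=D\,D^{-1/2}LD^{1/2}=DP,
\]
i.e.\ $P\T D=DP$, which says precisely that $P$ coincides with its own adjoint in the $D$-geometry. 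I would then compute the operator norm of $P$ in this geometry directly through a generalized Rayleigh quotient: since $P\T DP=DP^2$ (a symmetric matrix, because $P\T D=DP$),
\[
\|P\|_D^2=\sup_{x\ne 0}\frac{x\T P\T DP\,x}{x\T D x}=\sup_{x\ne 0}\frac{x\T DP^2 x}{x\T D x}=\lambda_{\max}(P^2)=\|L\|^2,
\]
where the final equalities use $\lambda_{\max}(D^{-1}\cdot DP^2)=\lambda_{\max}(P^2)$ together with the fact that the eigenvalues of $P$ coincide with those of $L$ by part (\ref{eigen_trans}). Reading the Gram identity $\|P\T P\|=\|P\|^2$ in this inner product, where the adjoint of $P$ is $P$ itself, then yields $\|P\T P\|=\|L\|^2$.

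The step I expect to be the genuine obstacle---and the reason the statement must be read in a weighted norm rather than the ordinary Euclidean one---is that a similarity transform preserves eigenvalues but not singular values: in the plain spectral norm $\sigma_{\max}(P)\ne\|L\|$ in general, so $\|P\T P\|_2\ne\|L\|^2$. The entire argument therefore rests on passing to the $D$-inner product in which $P$ is self-adjoint; verifying the identity $P\T D=DP$ and correctly translating the Euclidean Gram-matrix relation into this geometry is where the care is needed, while the Rayleigh-quotient computation above gives a self-contained verification that sidesteps any abstract adjoint bookkeeping.
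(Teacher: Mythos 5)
Your part (a) is exactly the paper's argument: set $y=D^{-1/2}x$ and substitute. For part (b) you take a genuinely different route, and it is worth saying plainly that your route is the one that actually works. The paper's entire proof of (b) is the sentence ``observe that $y$ is also an eigenvector of $P\T P$ with eigenvalue $\lambda^2$,'' which is false unless $D$ commutes with $L$: one computes $P\T P\,y=D^{1/2}LD^{-1}Lx=\lambda\,D^{1/2}LD^{-1}x$, which is not a multiple of $y$ in general, and the Euclidean identity itself fails --- for $L=\left(\begin{smallmatrix}0&1\\1&0\end{smallmatrix}\right)$ and $D=\diag{4,1}$ one gets $P\T P=\diag{4,\,1/4}$, so $\left\|P\T P\right\|_2=4$ while $\|L\|_2^2=1$. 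You correctly diagnose the obstruction (similarity preserves eigenvalues, not singular values) and repair the claim by passing to the $D$-inner product: the identity $P\T D=DP$ is right, $P\T DP=DP^2$ is symmetric, and the Rayleigh-quotient computation $\|P\|_D^2=\lambda_{\max}(P^2)=\|L\|_2^2$ is correct (equivalently, $D^{1/2}P^2D^{-1/2}=L^2$). This weighted reading is also the one compatible with the norms $\|\cdot\|_p$ of Lemma \ref{norm} and Lemma \ref{lem:Davis}, at the cost of a $\sqrt{d_{\max}/d_{\min}}$ factor when converting back to the Euclidean norm --- which is how the concentration argument should be run. The one soft spot in your write-up is the final sentence: ``reading the Gram identity $\|P\T P\|=\|P\|^2$ in this inner product'' silently replaces the literal transpose $P\T$ by the $D$-adjoint $D^{-1}P\T D=P$, so what you have proved is $\|P^2\|_D=\|L\|^2$, a reinterpretation of the displayed statement rather than the statement itself; that substitution should be announced explicitly. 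In short: (a) matches the paper; for (b) your proof is correct under the weighted-norm reading, whereas the paper's one-line proof does not establish the Euclidean identity it asserts.
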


\begin{proof} 
	Let $y=D^{-1/2}x$, then
	$$Py=D^{-1/2}LD^{1/2}y=D^{-1/2}Lx=\lambda D^{-1/2}x=\lambda y.$$
	Part \ref{eigen_trans} of the lemma follows. To see \ref{norm_trans}, observe that $y$ is also an eigenvector of $P\T P$ with eigenvalue $\lambda^2$.
\end{proof}

Lemma \ref{P&L} implies that $P$ has the same spectral norm of graph Laplacian $L$. Since $L$ concentrates to $\mathcal{L}$ (see for example \citet{qin2013regularized} for a proof), we have under a random graph generated from the DC-SBM, the graph transition matrix $P$ concentrates to its population $\cP$ with respect to spectral norm.

\subsection{Teleportation constant $\alpha$} 
In the paper, we state that a sufficiently large teleportation constant $\alpha$ enables the entrywise control of sample PPR vector, thus facilitating the PPR clustering in a random graph. 
Here, from a practical perspective, we further illustrate the sensitivity of PPR clustering to $\alpha$, with the Twitter friendship network. 
To this end, we investigate the targeted sampling returned by four configurations of the teleportation constant, $\alpha\in\{0.1, 0.15, 0.25, 1/3\}$, where NBC Politics (@NBCPolitics) is the seed.
The tolerance parameter is fixed, $\epsilon=10^{-7}$, in four targeted sampling.

Table \ref{tab:size} lists the number of Twitter users we examined and the total number of users we ``reached'' (as of August 2019) in four attempts. 
Here, we examine a user by retrieving its friend list (after which it gets a positive $p_u$ value in Algorithm \ref{alg:ppr_dir}), and reach a user once it appears in a user's friend list (at which point, it possesses a positive $r_v$ value in Algorithm \ref{alg:ppr_dir}).
Given the same tolerance parameter, varying the teleportation constant largely affects the number of nodes examined/reached.
This demonstrates the role of teleportation constant in leveraging between the seed preference and the standard random walk. 

\begin{table}
	\caption{Number of nodes examined/reached by Algorithm \ref{alg:ppr_dir} with seed node @NBCPolitics and different teleportation constants, and a fixed tolerance parameter $\epsilon=10^{-7}$, as in August 2019.} 
	\label{tab:size}
	\centering
	\begin{tabular}{ccc}
		\hline
		\noalign{\vskip .3mm}  
		$\alpha$ & Examined & Reached \\ 
		\hline
		\noalign{\vskip .7mm}  
		0.1 & 7,445 & 342,454 \\ 
		0.15 & 5,919 & 272,985 \\ 
		0.25 & 4,860 & 228,561 \\ 
		1/3 & 3,984 & 193,848 \\ 
		\hline
	\end{tabular}
\end{table}

Despite the fact that different $\alpha$'s result in substantial difference in network coverage, when the algorithm stops, the estimated local clusters appear to share a vast majority in common. 
To demonstrate this immediately, we inspect the local clusters of size $n=300$ returned by Algorithm \ref{alg:lc_dir} with four $\alpha$'s and quantify to what degree do they overlap each other.
Table \ref{tab:pairOverlap} shows the percentage of common members between each pair of four returned local clusters.
As shown, most pairs have about 90\% overlapping members, indicating that PPR clustering is fairly robust against the teleportation constant.

The stability of PPR clustering continues to show when we vary the cluster size, $n=100,150, ..., 700$. 
Figure \ref{fig:teleport} shows the proportion of common members across \textit{all} four local clusters, returned by PPR, aPPR, and rPPR (with the regularizer $\tau=10$). 
Overall, the PPR clustering produces a fairly consistent local cluster, with around 80\% of members overlapping across four different strengths of teleportation (see Supplementary Materials).

\begin{table}
	\caption{Percentage of pairwise overlapping among three local clusters around @NBCPolitics with different teleportation constants, $\alpha\in\{0.1, 0.15, 0.25, 1/3\}$, as in August 2019.} 
	\label{tab:pairOverlap}
	\centering
	\begin{tabular}{c|cccc}
		\hline
		%		\noalign{\vskip .3mm}  
		$\alpha$ & 0.1 & 0.15 & 0.25 & 1/3 \\ 
		\hline
		%		\noalign{\vskip .7mm}  
		\hline
		0.1 & 100\% & 92.7\% & 89.3\% & 87.7\% \\ 
		0.15 &  & 100\% & 93.3\% & 90\% \\ 
		0.25 &  & & 100\% & 92\% \\ 
		1/3 &  & &  & 100\% \\ 
		\hline
	\end{tabular}
\end{table}

\begin{figure} %[!htbp]
	\centering
	\includegraphics[width=.9\textwidth]{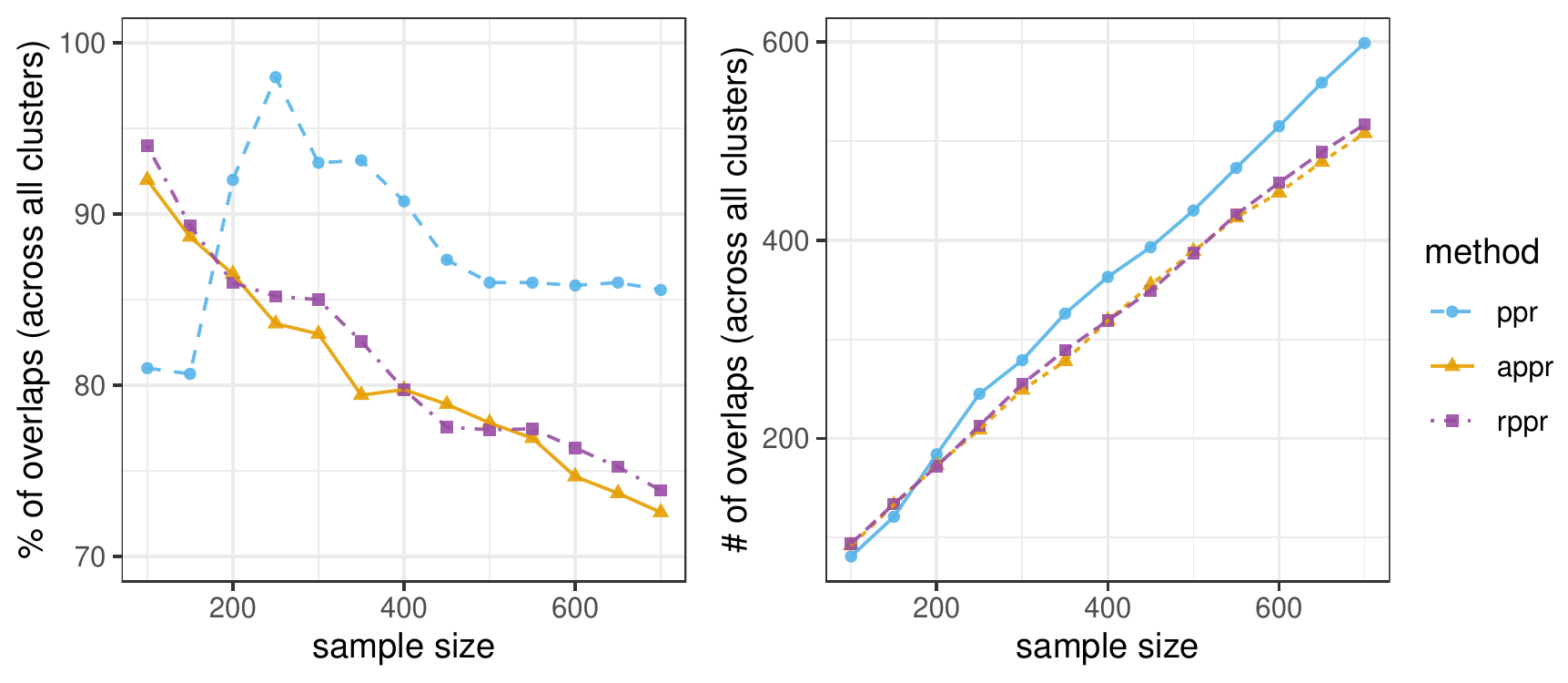} 
	\caption{Sensitivity to the teleportation constant, $\alpha=\{0.1, 0.15, 0.25, 1/3\}$. Shown are the percentage (left) and number (right) of common members across all four local clusters returned by three PPR clustering methods. The targeted sample size increases from 100 to 700 with the increment of 50.}
	\label{fig:teleport}
\end{figure}

We conclude that in practice, PPR clustering (i) is mainly influential to the number of nodes examined in the targeted sampling and (ii) has fairly robust performance with respect to the choice of teleportation constant.

\subsection{The graph size $N$} 
\begin{figure}
	\centering
	\includegraphics[width=0.5\linewidth]{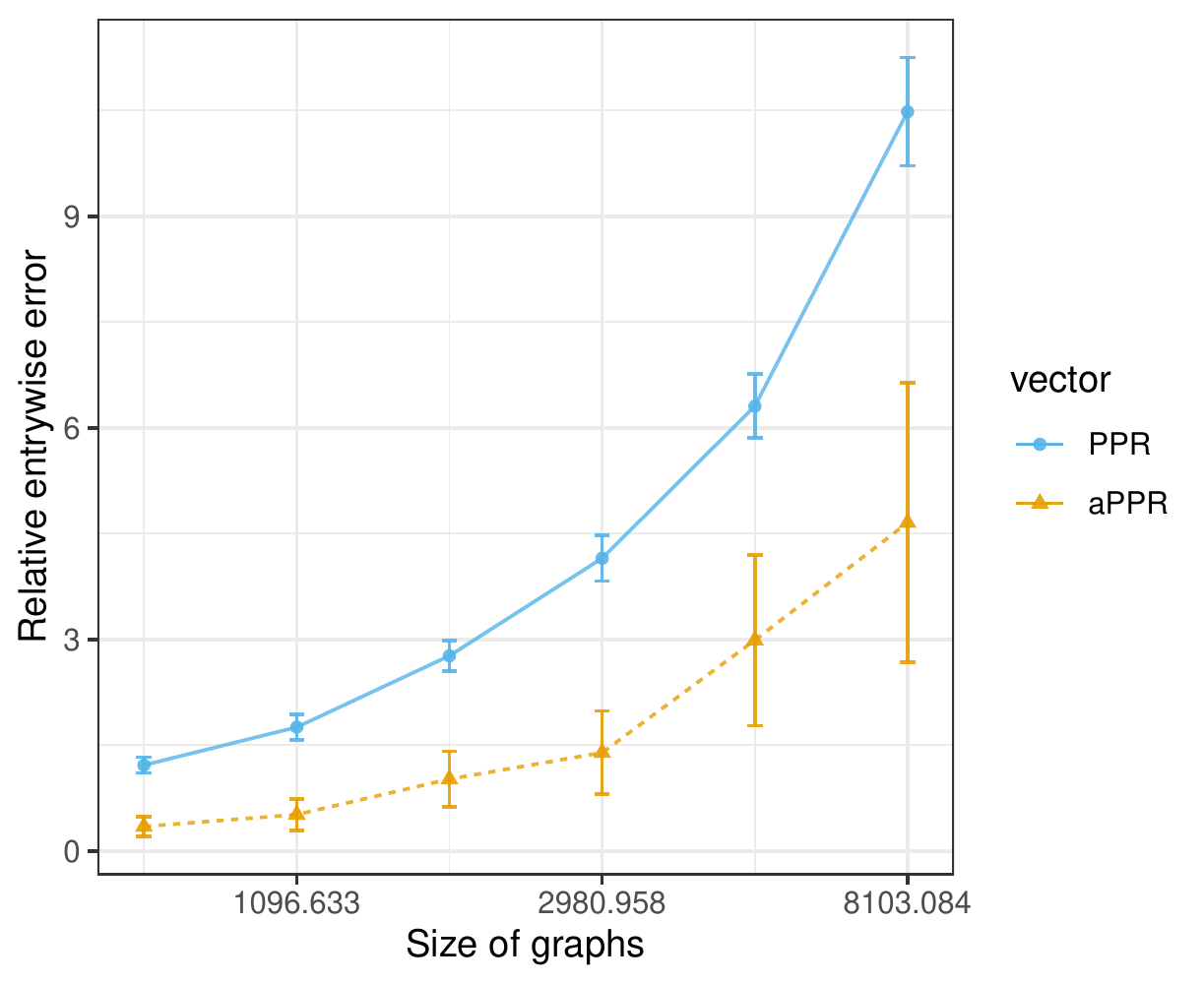}
	\caption{
		Entrywise error rate versus the graph sizes. Shown are relative entrywise error (REE) corresponding to different underlying graph sizes, averaged over 30 replicates. 
		For each dot, an error bar indicates the standard error. 
		The RER for aPPR vector is scaled down by a factor of 240 to improve visualization.
		The ticks in x-axis are transformed through logarithm with the natural base.
	}
	\label{fig:graphsize}
\end{figure}

In the paper, we provide an entrywise error control for the PPR vector and the aPPR vector (Theorem \ref{thm:concPPR}), assuming the edge density is sufficiently large (i.e., inequality (\ref{eq:assm1}) in the main paper).
Simulation 3 in Section \ref{sxn:simu} demonstrates the relationship between the expected degree ($\delta$) and the error rate of PPR clustering, as promised by the theorem. 
Here, we provide another simulation to illustrate Theorem \ref{thm:concPPR}. 
Specifically, we further investigate the affect of graph size ($N$) on the relative entrywise error (REE) of the PPR vector ($\frac{\|p-\p\|_\infty}{\|\p\|_\infty}$), given some edge density ($\delta$). 

We generate 30 replicates of networks of size $N=e^x$, where $x\in\{6.5, 7, 7.5, 8,8.5, 9\}$, from the four-parameter stochastic block model, $\text{SBM}(K=3, N, b_1=9, b_2=3)$.
The average expect degree is set to $\delta=125$.
Both PPR vectors and aPPR vectors are calculated for every network, with teleportation constant $\alpha = 0.15$ and $10$ seeds randomly selected from the first block.
Figure \ref{fig:graphsize} depict the REE with respect to different graph sizes (scaled by a logarithm transformation with the natural base).
As shown, with $\delta$ fixed (not growing at the rate of $\logN$), the REE increases as the graph expands, so does the variance of REE for both PPR and aPPR vectors, matching the results in Theorem \ref{thm:concPPR}.

%\newpage
\section{Connection to linear discriminant analysis} \label{sxn:ld}

In this section, we give another representation of PPR vectors in the landing probability space, which builds upon \citet{kloumann2017block}. This assorts PPR to a greater functional regime. Then, we extend the previous result that links the PPR vectors with linear discriminant functions under the DC-SBM. In particular, when every block has the same degree (volume), where $\tD$ becomes a scalar matrix, the PPR vector is asymptotically equivalent to the optimal linear discriminant function.

%\subsection{Landing probability space} \label{ld_pre}
First, we briefly introduce linear discriminant analysis in landing probability space, which the PPR vector also lives in.
Consider a random walk on the graph starting from a seed node. 
Define the \textit{landing probability} $r_s^v$ to be the probability that the random walk ends up at $v\in V$ after exactly $s$ steps. The \textit{landing probability space} is the space of landing probability of any nodes.

A \textit{linear discriminant} (LD) analysis keeps the first $S$ landing probability on each node, $r^v=(r^v_0, r^v_1, ..., r^v_S)\in\R^S$, and divides vertices into two sets by thresholding on the linear discriminant score vector $l\in\R^N$, whose $v$-th entry is defined to be inner product
$$l_v = \langle\omega, r^v\rangle$$ 
with some weights $\omega\in \R^S$. 
For example, let $\omega= r^{v_1} - r^{v_2}$, where $v_1$, $v_2$ are empirical centroids of two node sets. 
Then $l_v$ increases as $v$ slides from $v_1$ to $v_2$, and thresholding $(\|v_1\|^2-\|v_2\|^2)/2$ allocates vertices to nearest centroid.

\textsc{Remark.}
The landing probability of the $s$-th step, $ r_s=\left(r^1_s, r^2_s, ..., r^N_s\right) \in\R^N$, is defined as $\left(P^s\right)\T\pi$. It follows from proposition \ref{ppr_sol} that PPR vector $p=\sum_{s=0}^{\infty} \phi_s r_s$ with $\phi_s=\alpha(1-\alpha)^s$. 
Keeping the first $S$ terms yields an LD score vector with the weights $\omega_{PPR}=(\phi_0, \phi_1, ..., \phi_{S-1})$. 

%\subsection{Population analysis}
We then perform population (expectation) analysis for PPR in the landing probability space. 
Define the population \textit{block landing probability} $\W^k_{s}$ to be the probability that a random walk from $v_0$ ends up in block $k$ after exactly $s$ steps, where $k=1,2,...,K$ and $s=0,1,...,S-1$. 
Given that $v_0$ is in block 1, $\W^\cdot_{0}=(1,0,...,0)\T $. 
Using the first $S$ steps block landing probabilities, the next lemma gives an explicit form of LD vectors.

\begin{lemma} [Explicit form of LD vectors] \label{lem:explicit3}
	\underdcsbm assume all blocks have the same degrees. Let $\cl(k)$ be the linear discriminant score vector between block 1 and block $k$. Then,  
	\begin{enumerate}[\normalfont(a)]
		\item $\W^\cdot_s=\tP\T\W^\cdot_{s-1}$,  $s=1,2,...,S-1$; and \label{transW}
		\item  $\cl(k)=\Theta Z\textbf{l}(k)$, $k=2,...,K$, where $\textbf{l}(k)=\W \W\T (e_1-e_k)$. \label{explicit3b}
	\end{enumerate} 
	Here, $e_k$ is the elementary unit vector on the direction of $k$-th block.
\end{lemma}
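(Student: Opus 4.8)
The plan is to reduce both identities to the power formula for the population transition matrix (Lemma \ref{explicit1}, in its undirected form $\cP^s=Z\tP^s Z\T\Theta$) together with the within-block normalization $Z\T\Theta Z=I_K$ noted after Lemma \ref{lem:basic}; once these two facts are in hand, each part is a short computation. For part \ref{transW}, I would start from the population landing-probability vector at step $s$, namely $(\cP^s)\T\pi$, and expand it as $(\cP^s)\T\pi=\Theta Z(\tP^s)\T Z\T\pi=\Theta Z(\tP^s)\T\boldsymbol\pi$, where $\boldsymbol\pi=Z\T\pi$. The block landing probability is the within-block sum of node landing probabilities, i.e. $\W^\cdot_s=Z\T(\cP^s)\T\pi$, and substituting the expansion gives $\W^\cdot_s=\big(Z\T\Theta Z\big)(\tP^s)\T\boldsymbol\pi=(\tP^s)\T\boldsymbol\pi$ by $Z\T\Theta Z=I_K$. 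The recursion then follows from $(\tP^s)\T=\tP\T(\tP^{s-1})\T$, with the base case $\W^\cdot_0=\boldsymbol\pi=e_1$ because the seed lies in block $1$.

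For part \ref{explicit3b}, the key intermediate observation is that each node's landing-probability profile is a scalar multiple of its block's profile. Reading off the $v$-th coordinate of $\Theta Z(\tP^s)\T\boldsymbol\pi$ yields $r^v_s=\theta_v\,\W^{z(v)}_s$, so the length-$S$ population profile at node $v$ is $r^v=\theta_v\,\W\T e_{z(v)}$, where $\W=[\W^\cdot_0,\dots,\W^\cdot_{S-1}]\in\R^{K\times S}$ collects the block landing probabilities as its columns. The linear discriminant weight separating block $1$ from block $k$ is the difference of the two block profiles, $\omega_k=\W\T(e_1-e_k)\in\R^S$, which under the equal-block-degree assumption are the population class centroids. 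It then remains to evaluate the score $\cl(k)_v=\langle\omega_k,r^v\rangle=\theta_v\,(e_1-e_k)\T\W\W\T e_{z(v)}$; since $\W\W\T$ is symmetric this equals $\theta_v\,\big[\W\W\T(e_1-e_k)\big]_{z(v)}=\theta_v\,\textbf{l}(k)_{z(v)}$, and recognizing $\theta_v\,\textbf{l}(k)_{z(v)}=[\Theta Z\,\textbf{l}(k)]_v$ delivers $\cl(k)=\Theta Z\,\textbf{l}(k)$.

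I expect the main obstacle to be bookkeeping of the linear-algebraic conventions rather than any deep difficulty. One must fix orientations carefully so that $\W$ stores block landing probabilities as columns, making $\W\T(e_1-e_k)$ a length-$S$ weight and $\W\W\T(e_1-e_k)$ a length-$K$ vector, and one must pin down precisely where the equal-block-degree hypothesis is used. That hypothesis is not needed for the algebra of \ref{transW} or for the factorization $r^v=\theta_v\,\W\T e_{z(v)}$; its role is conceptual, guaranteeing that $\tP$ is symmetric (the block degree matrix is a scalar multiple of the identity, so $\tP$ is proportional to the symmetric $\B$) and hence that the block landing probabilities play the role of the genuine LDA centroids, so that \ref{explicit3b} represents the true discriminant between block $1$ and block $k$ rather than an arbitrary reweighting. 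Once these conventions are fixed, both identities are immediate consequences of Lemma \ref{explicit1} and $Z\T\Theta Z=I_K$.
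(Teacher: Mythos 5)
Your proof is correct, and part \ref{explicit3b} follows essentially the same computation as the paper: both identify $\cR=\Theta Z\W$ (equivalently $r^v_s=\theta_v\W^{z(v)}_s$), take $\omega$ to be the block-profile difference $\W\T(e_1-e_k)$, and collapse $\cR\cR\T Z(e_1-e_k)$ to $\Theta Z\W\W\T(e_1-e_k)$ via $Z\T\Theta Z=I_K$. Where you genuinely diverge is part \ref{transW}: the paper derives the recursion by introducing expected path counts $\cE^k_s$ satisfying $\cE^k_s=\sum_j\B_{kj}\cE^j_{s-1}$, normalizing $\W^k_s=\cE^k_s/\sum_i\cE^i_s$, and invoking the equal-block-degree hypothesis precisely to pull the constant $\td_j$ out of the denominator $\sum_j\td_j\cE^j_{s-1}$. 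You instead read the recursion directly off the power formula $\cP^s=Z\tP^s Z\T\Theta$ of Lemma \ref{explicit1}, writing $\W^\cdot_s=Z\T(\cP^s)\T\pi=(\tP^s)\T\boldsymbol\pi$. Your route is shorter, stays within the matrix machinery already developed for Theorem \ref{explicit2}, and makes transparent your (correct) observation that the equal-degree hypothesis is not needed for the algebra of either identity but only for the LDA interpretation (symmetry of $\tP$, block profiles as genuine centroids) and, later, for Theorem \ref{asymPPRLD}; the paper's path-count derivation buys a more combinatorial interpretation of $\W^k_s$ but consumes the hypothesis in the process. The two definitions of $\W^k_s$ (landing probability versus normalized expected path count) coincide under equal block degrees since $\mathbf{1}\T\B^s=\td^s\mathbf{1}\T$, so there is no inconsistency.
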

\begin{proof} 
	We prove \ref{transW} using following quantities. 
	Let $E^k_s$ be the number of paths from $v_0$ to block $k$ with exact length $s$, and let $\cE^k_s$ be the expected number of paths from $v_0$ to block $k$ with exact length $s$.
	Recall from \ref{sxn:pop} that $\B_{ij}$ represents the expected number of edges between block $i$ and $j$ if $i\neq j$, or twice of that if $i=j$.
	Then, 
	$$\cE^k_s=\sum_{j=1}^K \B_{kj}\cE^j_{s-1}.$$
	
	%Let $V_k$ be the set of nodes from block $k$, that is, $V_k=\{v\in V:z(v)=k\}$.
	To see $\W^\cdot_s=\tP\T\W^\cdot_{s-1}$, observe that 
	$$
	\W^k_s=\frac{\cE^k_s}{\sum_{i=1}^K\cE^i_s}\\
	= \frac{\sum_{j=1}^K\B_{kj}\cE^j_{s-1}}{\sum_{i=1}^K \sum_{j=1}^K\B_{ij} \cE^j_{s-1}}\\
	= \frac{\sum_{j=1}^K\B_{kj}\cE^j_{s-1}}{ \sum_{j=1}^K\td_{j} \cE^j_{s-1}}\\
	= \sum_{j=1}^K \tP_{kj}\W^k_{s-1}.
	$$
	The last equality comes from the assumption that all blocks have the same degrees, which means $\td_i$ is constant.
	
	Now, we prove part \ref{explicit3b} of the lemma. Let $R \in \R^{N\times S}$ collect all landing probabilities $r^v_s$ of the first $S$ steps, where $v=1,2,...,N$ and $s=0,1,...,S-1$. Without loss of generality, assume the seed node corresponds to the first row. 
	Define $\cR=\E(R)\in[0,1]^{N\times S}$ to be the population version of $R$. 
	Then the population landing probability is explicitly 
	$$\cR^v_s=\frac{\cd_v}{\td_{z(v)}}\W_s^{z(v)}=\theta_v\W^{z(v)}_s,$$
	or compactly, 
	$$\cR=\Theta Z \W.$$
	
	In linear discriminant, the weights vector $\omega$ is the geometric difference between centroid of block 1 and $k$, which can be written as 
	$$\left(\sum_{v:z(v)=1}\cR^v_1-\sum_{v:z(v)=k}\cR^v_1, \sum_{v:z(v)=1}\cR^v_2-\sum_{v:z(v)=k}\cR^v_2,..., \sum_{v:z(v)=1}\cR^v_S-\sum_{v:z(v)=k}\cR^v_S\right),$$
	or compactly
	$$\omega=\cR\T Z(e_1-e_k).$$
	By Lemma \ref{lem:basic}, the linear discriminant score vector reads
	\begin{eqnarray*}
		\langle\cR\cdot\omega\rangle&=&\cR\cR\T Z(e_1-e_k)\\
		&=&\Theta Z \W\W\T  (e_1-e_k),
	\end{eqnarray*}
	for $k=2,...,K$.
	Setting $\textbf{l}(k)=\W \W\T (e_1-e_k)$ completes the proof. 
\end{proof}

Recall from Theorem \ref{explicit2} that $\p=\Theta Z\tp$. The LD score vector $\cl$ has a similarly simple form that separates the block-related information ($\W$) and the node specific information ($\Theta$ and $Z$). 
Lemma \ref{lem:explicit3} provides a population (expectation) representation of PPR in the landing probability space. To facilitate its application in random graphs, the next lemma provides a control of the landing probabilities on a random block model graph.

\begin{lemma}[Concentration of landing probabilities]\label{lem:lp}
	Let $G=(V,E)$ be a graph of $N$ vertices generated from the DC-SBM with $K$ blocks and parameters $\{B, Z, \Theta\}$. Let $R^\cdot_s \in [0,1]^{N}$ be the landing probabilities of the $k$-th step, and $\cR^\cdot_s=\E (R)$ be its expectation. Then, for any $\epsilon>0$ and any vertex $u=1,2,...,N$,
	$$\bP\left(R^u_s\ge(1+\epsilon)\cR^u_s\right)\le (1+\epsilon)^{-\epsilon N r},$$
	$$\bP\left(R^u_s\ge(1-\epsilon)\cR^u_s\right)\le (1-\epsilon)^{\epsilon N r},$$
	where $r=\min_{v\in V} \theta_u\theta_v\tP_{z(u)z(v)}\W^{z(v)}_{s-1}$.
\end{lemma}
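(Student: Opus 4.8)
The plan is to exhibit $R^u_s$ as a sum of $N$ independent nonnegative random variables with mean $\cR^u_s$, and then read off both tail bounds directly from the multiplicative Chernoff bound of Lemma \ref{chernoff}. The random $s$-step landing vector is $R^\cdot_s=(P^s)\T\pi$, which satisfies the one-step recursion $R^u_s=\sum_{v\in V}P_{vu}R^v_{s-1}$ with $P_{vu}=A_{vu}/d_v$. The representation I would aim for is
\[
R^u_s=\sum_{v\in V}A_{vu}\,c_v,\qquad c_v=\frac{\W^{z(v)}_{s-1}}{\td_{z(u)}},
\]
so that each $X_v:=A_{vu}c_v$ equals $c_v$ with probability $p_v=\theta_u\theta_v\B_{z(u)z(v)}$ and $0$ otherwise, with the $X_v$ independent across $v$ because the edge indicators $A_{\cdot u}$ are independent Bernoullis. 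This is precisely the template of Lemma \ref{chernoff}.

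Next I would compute the mean and lower-bound it by $Nr$. Using $\B_{z(u)z(v)}/\td_{z(u)}=\tP_{z(u)z(v)}$ gives $\E[X_v]=\theta_u\theta_v\tP_{z(u)z(v)}\W^{z(v)}_{s-1}$, so that
\[
\E\Big[\sum_{v\in V}X_v\Big]=\sum_{v\in V}\theta_u\theta_v\tP_{z(u)z(v)}\W^{z(v)}_{s-1}.
\]
Grouping by block and invoking the identifiability constraint $\sum_{v:z(v)=j}\theta_v=1$ collapses this to $\theta_u\sum_{j}\tP_{z(u)j}\W^{j}_{s-1}=\theta_u\W^{z(u)}_s=\cR^u_s$, where the middle step is the block-landing recursion of Lemma \ref{lem:explicit3} and the last equality is the population landing identity $\cR^v_s=\theta_v\W^{z(v)}_s$. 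Since each of the $N$ summands is at least $r=\min_{v\in V}\theta_u\theta_v\tP_{z(u)z(v)}\W^{z(v)}_{s-1}$, I obtain the key inequality $\mu:=\cR^u_s\ge Nr$.

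Applying Lemma \ref{chernoff} with $\mu=\cR^u_s$ then yields $\bP(R^u_s\ge(1+\epsilon)\mu)\le(1+\epsilon)^{-\epsilon\mu}$ and the analogous lower-tail estimate $\bP(R^u_s\le(1-\epsilon)\mu)\le(1-\epsilon)^{\epsilon\mu}$. Because $(1+\epsilon)>1$ and $0<1-\epsilon<1$, both right-hand sides are monotone in the exponent, so substituting $\mu\ge Nr$ replaces $\mu$ by $Nr$ and produces exactly the two displayed inequalities (the second read as the lower tail in $R^u_s$).

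Finally, I expect the genuine obstacle to be justifying the clean representation $R^u_s=\sum_v A_{vu}c_v$ with deterministic weights and independent summands. In the sampled graph the transition $P_{vu}=A_{vu}/d_v$ couples each edge indicator to the random degree $d_v$, and $R^v_{s-1}$ is itself a random functional of the whole adjacency matrix, so the decoupling is only approximate. I would handle this by conditioning on the sigma-field generated by the walk through step $s-1$ (rendering the final-step edges into $u$ fresh independent Bernoullis) and by replacing the random weights $R^v_{s-1}/d_v$ with their population surrogates $\W^{z(v)}_{s-1}/\td_{z(u)}$, controlling the replacement error through the degree-concentration bound of Lemma \ref{degree} and the spectral concentration of $P$ toward $\cP$ from Section \ref{sxn:sp_transition}. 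The delicate part is ensuring these approximation errors remain lower-order relative to the $Nr$-scale of the mean, so that the Chernoff exponent is preserved.
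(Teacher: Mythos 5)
Your proposal takes essentially the same route as the paper's proof: represent $R^u_s$ as a sum over $v$ of independent scaled Bernoulli variables $\1_{\{A_{uv}=1\}}\,\W^{z(v)}_{s-1}/\td_{z(v)}$, check that the mean collapses block by block (via $\sum_{v:z(v)=k}\theta_v=1$ and the recursion of Lemma \ref{lem:explicit3}) to $\theta_u\W^{z(u)}_s=\cR^u_s$, apply the multiplicative Chernoff bound of Lemma \ref{chernoff}, and pass from $\mu=\cR^u_s$ to the lower bound $Nr$ by monotonicity of the exponent; the $\td_{z(u)}$ versus $\td_{z(v)}$ discrepancy in your weights is immaterial under the equal-block-degree convention of this section. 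The one substantive remark concerns your final paragraph: the decoupling issue you flag (the random weights $R^v_{s-1}/d_v$ versus the deterministic surrogates $\W^{z(v)}_{s-1}/\td_{z(v)}$) is genuine, but the paper does not resolve it either — its proof simply asserts the deterministic-weight representation with no conditioning step and no control of the replacement error, so your concluding caveat identifies a gap in the paper's own argument rather than an extra step you are missing relative to it.
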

\begin{proof}
	Note that $R^u_s=\sum_{v\in V}X_{uv}$, where $$X_{uv}=\frac{\W^{z(v)}_{s-1}}{\td_{z(v)}}\1_{\{A_{uv}=1\}}$$
	are independent random variables having probability $\theta_u\theta_v\B_{z(u)z(v)}$ of being equal to ${\W^{z(v)}_{s-1}/\tD_{z(v)z(v)}}$. 
	Then, 
	$$\E\left[R^u_s\right]=\sum_{v\in V}\frac{\W^{z(v)}_{s-1}}{\td_{z(v)}}\theta_u\theta_v\B_{z(u)z(v)}=\theta_u\sum_{k=1}^K \tP_{z(u)k}\W^k_{s-1}=\cR^u_s.$$
	We can apply Chernoff's bounds (Lemma \ref{chernoff}) on $R^u_s$ and obtain bounds for any fixed $u$,
	$$\bP\left(R^u_s\ge(1+\epsilon)\cR^u_s\right)\le (1+\epsilon)^{-\epsilon \cR^u_s},$$
	and
	$$\bP\left(R^u_s\le(1-\epsilon)\cR^u_s\right)\le (1-\epsilon)^{\epsilon \cR^u_s}.$$
	Recognizing that $\cR^u_s \ge Nr$ completes the proof.
\end{proof}

Lemma \ref{lem:lp} provides an entrywise concentration bound for landing probabilities. 
The next theorem equates PPR and LD vectors when blocks are equally distributed. 
Together, they asserts the asymptotically equivalence between PPR and LD vectors, in symmetric block model graphs.

\begin{theorem} [Equivalence between PPR and LD vectors] \label{asymPPRLD}
	\underdcsbm assume $B_{ii}=b_1$ for all $i$, and $\B_{ij}=b_2$ for $i\neq j$ $(b_1>b_2>0)$. Let $\lambda_2$ the second largest eigenvalue of $\cP$. Let $\p$ be the personalized PageRank vector, and let $\cl(k)$ be the linear discriminant score vector between block 1 and block $k$, $k=2,...,K$. If the teleportation constant $\alpha=1-\lambda_2$, then
	$$\p \propto \cl(k).$$
\end{theorem}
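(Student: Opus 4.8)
The plan is to collapse the claim to a two–dimensional identity at the block level and then exploit the degeneracy of the spectrum of $\tP$. By Theorem~\ref{explicit2} the population PPR vector factors as $\p=\Theta Z\tp$, and by Lemma~\ref{lem:explicit3} the discriminant score factors as $\cl(k)=\Theta Z\,\textbf{l}(k)$ with $\textbf{l}(k)=\W\W\T(e_1-e_k)$. Since both vectors are the image of the \emph{same} linear map $\Theta Z$, it suffices to establish the block-level proportionality $\tp\propto\textbf{l}(k)$ in $\R^K$; applying $\Theta Z$ then transports it to $\R^N$. So the whole problem reduces to comparing two vectors in $\R^K$.

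First I would record the spectral structure forced by the symmetric parameterization $\B=(b_1-b_2)I_K+b_2\,\mathbf 1\mathbf 1\T$. Every block has identical out-degree $\tdo_i=b_1+(K-1)b_2$, so $\tDo$ is scalar and $\tP=[\tDo]^{-1}\B$ is \emph{symmetric}, with eigenvalue $1$ on the span of $\mathbf 1$ and eigenvalue $\lambda_2=(b_1-b_2)/(b_1+(K-1)b_2)$ of multiplicity $K-1$ on $\mathbf 1^{\perp}$; by the eigen-decomposition lemma relating $\cP$ and $\tP$, this $\lambda_2$ is exactly the second largest eigenvalue of $\cP$ named in the statement. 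Writing $P_1=\tfrac1K\mathbf 1\mathbf 1\T$ and $P_2=I_K-P_1$ for the two spectral projectors, I get $\tP^s=P_1+\lambda_2^s P_2$, so the block landing vector is $\W^{\cdot}_s=(\tP\T)^s e_1=\tfrac1K\mathbf 1+\lambda_2^s\,w$, where $w:=e_1-\tfrac1K\mathbf 1$.

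With this decomposition both targets collapse onto the plane $\mathrm{span}\{\mathbf 1,w\}$. On one side, Proposition~\ref{ppr_sol} and the Neumann series give $\tp=\alpha\bigl(I_K-(1-\alpha)\tP\bigr)^{-1}e_1=\tfrac1K\mathbf 1+\tfrac{\alpha}{1-(1-\alpha)\lambda_2}\,w$. On the other side, using $\mathbf 1\T(e_1-e_k)=0$ and $w\T(e_1-e_k)=1$, expanding $\textbf{l}(k)=\sum_{s}\W^{\cdot}_s(\W^{\cdot}_s)\T(e_1-e_k)$ yields $\textbf{l}(k)=\bigl(\tfrac1K\sum_s\lambda_2^s\bigr)\mathbf 1+\bigl(\sum_s\lambda_2^{2s}\bigr)w$. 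Because $\mathbf 1$ and $w$ are independent, proportionality is equivalent to equality of the single ratio of the $w$- to $\mathbf 1$-coefficients, i.e. $\tfrac{\alpha}{1-(1-\alpha)\lambda_2}=\tfrac{\sum_s\lambda_2^{2s}}{\sum_s\lambda_2^{s}}$. Substituting $\alpha=1-\lambda_2$ turns the left side into $1/(1+\lambda_2)$, and summing the geometric series on the right gives $\tfrac{1/(1-\lambda_2^2)}{1/(1-\lambda_2)}=1/(1+\lambda_2)$, which matches; this is the asserted (asymptotic) equivalence.

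The main obstacle is reconciling the two truncation conventions: $\p$ is the full infinite landing-probability sum (Proposition~\ref{ppr_sol}), whereas $\cl(k)$ keeps only the first $S$ steps, so the two geometric series above agree only in the limit $S\to\infty$. The cleanest way to sidestep this is to work directly with the \emph{weights} in landing-probability space: both objects are $\cR$ times a weight vector, $\p=\cR\,\omega_{PPR}$ with $(\omega_{PPR})_s=\alpha(1-\alpha)^s$ and $\cl(k)=\cR\,\omega_{LD}$ with $\omega_{LD}=\cR\T Z(e_1-e_k)$. Using the identifiability constraint $Z\T\Theta Z=I_K$ to simplify $\cR\T Z=\W\T$, the discriminant weights become $(\omega_{LD})_s=\W^{1}_s-\W^{k}_s=\lambda_2^s$, so once $\alpha=1-\lambda_2$ we get $\omega_{PPR}=(1-\lambda_2)\,\omega_{LD}$ \emph{entrywise}. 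This exhibits the proportionality term-by-term for every $S$, making the passage to $S\to\infty$ harmless. I expect the only genuinely delicate algebra to be the reduction $\cR\T Z=\W\T$ via $Z\T\Theta Z=I_K$ and the two inner-product identities for $e_1-e_k$.
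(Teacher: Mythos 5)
Your proposal is correct and, in its final form, is essentially the paper's own argument: both establish that the LD weights in landing-probability space are $\omega_{LD}\propto(\lambda_2^s)_s$ (via the degenerate spectrum of $\tP$ under the symmetric $\B$) while $\omega_{PPR}=(\alpha(1-\alpha)^s)_s$, so $\alpha=1-\lambda_2$ makes them proportional entrywise. If anything, your spectral-projector computation of $\W^\cdot_s$ is cleaner than the paper's explicit (and not quite orthogonal) $U$ matrix, and your explicit handling of the finite-$S$ truncation is more careful than the paper's, which simply calls the equivalence asymptotic.
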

\begin{proof}
	From Section \ref{sxn:sp_transition} and Lemma \ref{lem:explicit3}\ref{transW}, the block landing probability is precisely  
	$$\W^k_s=\sum_{j=1}^K \lambda_j^s U_{kj}U_{1j},$$ 
	where $\lambda_k$ is the $k$-th eigenvalues of $\cP$ and $U$ is the orthogonal matrix used in Lemma \ref{sxn:sp_transition}. 
	
	Note that $\B$ has eigenvalues $\lambda_1=1$ and $\lambda_2=\frac{b_1-b_2}{b_1+b_2}$, with complexity indices 1 and $k-1$ respectively. 
	In addition, we know the orthogonal matrix above precisely as well,
	$$U=\begin{bmatrix} 
	\frac{1}{\sqrt{N}} & \frac{1}{\sqrt{2}} & \frac{1}{\sqrt{2}} &\cdots&\frac{1}{\sqrt{2}} \\
	\frac{1}{\sqrt{N}} & -\frac{1}{\sqrt{2}} & 0 &\cdots&0 \\
	\frac{1}{\sqrt{N}} & 0& -\frac{1}{\sqrt{2}} &\cdots&0 \\
	\vdots & \vdots& \vdots& &\vdots \\
	\frac{1}{\sqrt{N}} & 0& 0 &\cdots&-\frac{1}{\sqrt{2}} 
	\end{bmatrix}.$$ 
	Then it follows from Lemma \ref{lem:explicit3}\ref{explicit3b} that the LD weight vector is 
	\begin{eqnarray*}
		\omega_{LD}&=&\cR\T Z(e_1-e_k)\\
		&=&\W\T (e_1-e_k)\\
		&=&\W^1_\cdot-\W^k_\cdot\\
		&=&\sum_{j=1}^K \lambda_j^s (U_{1j}-U_{kj})U_{1j}\\
		&=&\frac{K}{2}\begin{bmatrix} 
			1\\\lambda_2\\\lambda_2^2\\\vdots\\\lambda_2^{S-1}
		\end{bmatrix}.
	\end{eqnarray*}
	
	On the other hand, the weight vector of PPR on landing probability space is $\omega_{PPR}=(\phi_0, \phi_1, ...)$, where $\phi_s=\alpha(1-\alpha)^s$. 
	Hence, setting the teleportation constant $\alpha=1-\lambda_2$ asymptotically equates approximate PPR and LD vectors, up to a scalar factor.
\end{proof}

%	The proofs can be found in the Supplement to the paper. 
%	Here, two remarks are in order.
\textsc{Remark.}
First, a positive factor that differentiates PPR and LD vectors does not change the relative ranking of the nodes, because the ranking via $\p$ or $c\p$ is equivalent.
Hence, Theorem \ref{asymPPRLD} shows that the PPR vector is equivalent to an optimal LD score vector under described population DC-SBM. 
Second, Theorem \ref{asymPPRLD} is an extension of \citet{kloumann2017block}. 
Combining Theorem \ref{asymPPRLD} and Lemma \ref{lem:lp} gives the asymptotic equivalence between PPR and LD vectors under the particular DC-SBM stated. 
%	\subsection{Proof of Lemma \ref{lem:explicit3}}
%	\subsection{Proof of Theorem \ref{asymPPRLD}}
%	\subsection{Random landing probability}

\section{Lists of top 200 handles} \label{sxn:top200}
In this section, we supply three lists of handles resulting from sampling using PPR, aPPR, and rPPR vectors with @NBCPolitics as the seed, as of December 2018. We conceal handles with followers count fewer than 200 for privacy considerations. The biographical descriptions are trimmed for unifying displays. In addition, we annotate handles with whether or not they are followed (``Followed'') by the seed node.

\newpage
\subsection{A PPR's sample of 200}
\scriptsize
\centering
\begin{longtable}{rllrl}
	\captionsetup{width=.9\textwidth}
	\caption{Top (selected) handles returned by PPR.} \label{tab:ppr200} \\
	\hline\noalign{\vskip .7mm}  
	& Name & Followed & Followers & Description \\ 
	\hline\noalign{\vskip .7mm}  
	\endfirsthead
	\multicolumn{4}{@{}l}{\ldots continued}\\
	\hline\noalign{\vskip .7mm}  
	& Name & Followed & Followers & Description \\ 
	\hline\noalign{\vskip .7mm}  
	\endhead
	\hline
	%	\multicolumn{4}{r@{}}{continued \ldots}\\
	\endfoot
	\hline
	\endlastfoot
	1 & Melania Trump & Yes & 11242283 & This account is run by the Office of First Lady Melania Trum... \\ 
	2 & The White House & Yes & 17625630 & Welcome to @WhiteHouse! Follow for the latest from President... \\ 
	3 & Chuck Todd & Yes & 2032038 & Moderator of @meetthepress and @nbcnews political director; ... \\ 
	4 & NBC News & Yes & 6280551 & The leading source of global news and info for more than 75 ... \\ 
	5 & NBC Nightly News & Yes & 962290 & Breaking news, in-depth reporting, context on news from arou... \\ 
	6 & Andrea Mitchell & Yes & 1737764 & NBC News Chief Foreign Affairs Correspondent/anchor, Andrea ... \\ 
	7 & Savannah Guthrie & Yes & 881669 & Mom to Vale \& Charley, TODAY Co-Anchor, Georgetown Law... \\ 
	8 & Joe Scarborough & Yes & 2521215 & With Malice Toward None \\ 
	9 & MSNBC & Yes & 2261911 & The place for in-depth analysis, political commentary and in... \\ 
	10 & Rachel Maddow MSNBC & Yes & 9498076 & I see political people...
	(Retweets do not imply endorsement... \\ 
	11 & Breaking News & Yes & 9223158 &  \\ 
	12 & NBC News First Read & Yes & 53847 & The first place for news and analysis from the @NBCNews Poli... \\ 
	13 & TODAY & Yes & 4276453 & America's favorite morning show $|$ Snapchat: todayshow \\ 
	14 & Meet the Press & Yes & 566713 & Meet the Press is the longest-running television show in his... \\ 
	15 & The Wall Street Journal & Yes & 16188842 & Breaking news and features from the WSJ. \\ 
	16 & Pete Williams & Yes & 70062 & NBC News Justice Correspondent. Covers US Supreme Court, ... \\ 
	17 & Mark Murray & Yes & 97571 & Mark Murray is the senior political editor for NBC News, as ... \\ 
	18 & POLITICO & Yes & 3695835 & Nobody knows politics like POLITICO. Got a news tip for us? ... \\ 
	19 & Katy Tur & Yes & 587474 & MSNBC anchor @2pm, NBC News correspondent, author of NYT ... \\ 
	20 & Bill Clinton & Yes & 10697521 & Founder, Clinton Foundation and 42nd President of the United ... \\ 
	21 & Kasie Hunt & Yes & 381704 & @NBCNews Capitol Hill Correspondent. Host, @KasieDC, Sundays... \\ 
	22 & TIME & Yes & 15584815 & Breaking news and current events from around the globe. Host... \\ 
	23 & Kelly O'Donnell & Yes & 195765 & White House Correspondent @NBCNews Veteran of Cap Hill ... \\ 
	24 & John McCain & Yes & 3181773 & Memorial account for U.S. Senator John McCain, 1936-2018. To... \\ 
	25 & Peter Alexander & Yes & 283522 & @NBCNews White House Correspondent / Weekend @TODAYshow ... \\ 
	26 & Hallie Jackson & Yes & 359099 & Chief White House Correspondent / @NBCNews / @MSNBC ... \\ 
	27 & Kristen Welker & Yes & 182244 & @NBCNews White House Correspondent. Links and retweets ... \\ 
	28 & Carrie Dann & Yes & 37119 & .@NBCNews / @NBCPolitics. RTs not endorsements. \\ 
	29 & Willie Geist & Yes & 807536 & Host @NBC \#SundayTODAY, Co-Host @Morning\_Joe, “Sunday ... \\ 
	30 & Morning Joe & Yes & 563650 & Live tweet during the show! Links to must-read op-eds and ... \\ 
	31 & Frank Thorp V & Yes & 58152 & Producer \& Off-Air Reporter covering Congress at @NBCNews ... \\ 
	32 & Mark Knoller & Yes & 318923 & CBS News White House Correspondent \\ 
	33 & Tom Brokaw & Yes & 308276 & Special correspondent, @NBCNews \\ 
	34 & Mika Brzezinski & Yes & 868124 & "Bipartisanship helps to avoid extremes and imbalances. It ... \\ 
	35 & Chris Jansing & Yes & 72375 & @msnbc Senior National Correspondent, intrepid traveler and ... \\ 
	36 & John Harwood & Yes & 251246 & a Dad who covers Washington, the economy and national politi... \\ 
	37 & Nicolle Wallace & Yes & 413153 & Author of 18 Acres series, mom, dog walker, wife, gardener. ... \\ 
	38 & NBC News Signal & Yes & 83715 & A new streaming news channel from @NBCNews. Catch us Thursda... \\ 
	39 & Sam Stein & Yes & 392003 & Daily Beast/MSNBC  
	newsletter: https://t.co/DVURxntWdL Emai... \\ 
	40 & Chris Matthews & Yes & 882434 & Host of @hardball M-F at 7PM ET on @MSNBC and author of "Bob... \\ 
	41 & Carol Lee & Yes & 51240 & Reporter for NBC News, former WSJ \& POLITICO, Hudson's mom, ... \\ 
	42 & Ali Vitali & Yes & 78839 & @NBCnews Political Reporter. Covered Trump campaign, WH + ... \\ 
	43 & Ken Dilanian & Yes & 124635 & Intelligence and national security reporter for the NBC News... \\ 
	44 & Jim Miklaszewski & Yes & 14196 & Jim Miklaszewski is Chief Pentagon Correspondent for NBC New... \\ 
	45 & John Heilemann & Yes & 247616 & @SHO\_TheCircus host/ep; NBCNews/@MSNBC natl affairs analyst;... \\ 
	46 & Stephanie Ruhle & Yes & 352895 & Mom, MSNBC LIVE Anchor 9AM M-F, VELSHI \& RUHLE 1 PM  ... \\ 
	47 & Nick Confessore & Yes & 172359 & Reporter for @NYTimes, writer-at-large for @NYTmag, MSNBC ... \\ 
	48 & Talking Points Memo & Yes & 275692 & Breaking news and analysis from the TPM team. 
	
	“I’ll leave ... \\ 
	49 & Tom Costello & Yes & 17268 & NBC News Correspondent covering Aviation, Transportation, Ec... \\ 
	50 & Post Politics & Yes & 384611 & The latest political news and analysis from The Washington P... \\ 
	51 & Alex Moe & Yes & 28245 & @NBCNews Capitol Hill Producer + Off-Air Reporter; '12 \& '16... \\ 
	52 & Benjy Sarlin & Yes & 100896 & Political reporter for @NBCNews. I cover elections and their... \\ 
	53 & Preet Bharara & Yes & 945030 & Patriotic American \& proud immigrant. Movie buff. @Springste... \\ 
	54 & Matthew Miller & Yes & 229867 & Partner at Vianovo. MSNBC Justice \& Security Analyst. Recove... \\ 
	55 & Leigh Ann Caldwell & Yes & 20714 & NBC Capitol Hill reporter. Formerly at CNN and public radio.... \\ 
	56 & Ken Strickland & Yes & 2693 & NBC News Washington Bureau Chief \\ 
	57 & Ron Fournier & Yes & 64356 & President: Truscott Rossman. Best-seller https://t.co/09CdTN... \\ 
	58 & Mike Memoli & Yes & 39693 & National Political Reporter @nbcnews; @latimes alum mike dot... \\ 
	59 & Miguel Almaguer & Yes & 14082 & Prolific coffee drinker. Chronic under sleeper. Raging road ... \\ 
	60 & Courtney Kube & Yes & 9494 & NBC News National Security \& Military Reporter. Links and ... \\ 
	61 & NBC News World & Yes & 279165 & A dynamic look at world events from @NBCNews. \\ 
	62 & Jonathan Martin & Yes & 241690 & Nat'l Political Correspondent, NY Times. Husband of the ... \\ 
	63 & Steve Schmidt & Yes & 498812 & "Patriotism means to stand by the country. It does not mean ... \\ 
	64 & Jenna Bush Hager & Yes & 207106 & Mama to M and P, NBC News correspondent, Editor-at-Large ... \\ 
	65 & Sean Spicer & Yes & 406957 & President of RigWil, Sr Advisor @AmericaFirstPAC  check out ... \\ 
	66 & Roll Call & Yes & 356374 & Breaking news, reporter tweets and analysis from the Source ... \\ 
	67 & POLITICO 45 & Yes & 88470 & A daily diary of the 45th president of the United States. \\ 
	68 & Scott Foster & Yes & 3464 & Senior Producer, Washington @NBCNEWS @TODAYshow \\ 
	69 & Domenico Montanaro & Yes & 83999 & "Congress shall make no law respecting an est. of religion, ... \\ 
	70 & Tom Winter & Yes & 40777 & NBC News Investigations reporter based in New York focusing ... \\ 
	71 & Kailani Koenig & Yes & 11416 & Producer with @MSNBC \& @NBCNews. Team @MeetThePress ... \\ 
	72 & Capital Journal & Yes & 131212 & WSJ’s home for politics, policy and national security news. ... \\ 
	73 & NBC News Videos & Yes & 7838 & The latest video from http://t.co/xPyvMOTEF6 \\ 
	74 & Diane Sawyer & Yes & 876906 & I like my news 24/7, my food spicy, my drinks caffeinated, ... \\ 
	75 & Jane C. Timm & Yes & 6478 & @nbcnews political reporter and fact checker. More fun than ... \\ 
	76 & Elyse PG & Yes & 2697 & White House producer @nbcnews $|$@USCAnnenberg alum $|$ LA kid ... \\ 
	77 & Libby Leist & Yes & 7946 & Executive Producer @todayshow \\ 
	78 & Mike Barnicle & Yes & 116588 & Mike Barnicle is an award-winning print and broadcast journa... \\ 
	79 & Reuters Politics & Yes & 259106 & U.S. political coverage, breaking news and special investiga... \\ 
	80 & Beth Fouhy & Yes & 13684 & Senior editor, politics, NBC News and MSNBC \\ 
	81 & HuffPost & Yes & 11401771 & Know what's real. \\ 
	82 & Joey Scarborough & Yes & 6277 & NBC News Social Media Editor. New York Daily News Alum. RTs ... \\ 
	83 & Marianna Sotomayor & Yes & 11965 & Running around Capitol Hill for @NBCNews. Covers politics ... \\ 
	84 & Shaquille Brewster & Yes & 5362 & @NBCNews Producer/Politics $|$ @HowardU Alum$|$ Journalist ... \\ 
	85 & Joyce Alene & Yes & 185116 & U of Alabama Law Professor$|$@MSNBC Contributor$|$Obama US ... \\ 
	86 & Garrett Haake & Yes & 40714 & Correspondent @msnbc • Taller than I look on TV • Long-suffe... \\ 
	87 & Andrew Rafferty & Yes & 16567 & Senior political editor for @newsy Before that @NBCNews ... \\ 
	88 & Jacob Soboroff & Yes & 144153 & @MSNBC correspondent. Instagram \& Snapchat: jacobsoboroff \\ 
	89 & Perry Bacon Jr. & Yes & 26853 & I write about government (mostly federal, often state, ... \\ 
	90 & Alex Witt & Yes & 28126 & Weekend host on @MSNBC (9am, noon \& 1pm). Tigger’s mom ... \\ 
	91 & Mark Halperin & Yes & 332564 & New York, New York \\ 
	92 & Heidi Przybyla & Yes & 66489 & NBC News, n'tl political reporter "Prezbella"    Heidi.Przyb... \\ 
	93 & Morgan Radford & Yes & 20967 & @NBCnews Correspondent: @TODAYShow/@NBCNightlyNews .  \\ 
	94 & Savannah Sellers & Yes & 4637 & News junkie. Host of NBC's "Stay Tuned" on Snapchat. Storyte... \\ 
	95 & Marist Poll & Yes & 16030 & Founded in 1978, MIPO is home to the Marist Poll and regular... \\ 
	96 & Jill Wine-Banks & Yes & 158753 & @NBCNews \& @MSNBC Contributor. Speaker. Watergate prosecutor... \\ 
	97 & NBC Field Notes & Yes & 1390 & NBC News correspondents and http://t.co/1eSopOQt8s reporters... \\ 
	98 & Olivia Nuzzi & Yes & 190919 & Washington Correspondent, New York Magazine \\ 
	99 & NBC News THINK & Yes & 12017 & THINK is NBC News' home for fresh opinion, sharp analysis ... \\ 
	100 & Making a Difference & Yes & 670 & @NBCNightlyNews' popular feature profiles ordinary people do... \\ 
	101 & adam nagourney & Yes & 25307 & LA Bureau Chief for The New York Times. Story ideas welcome ... \\ 
	102 & Phil McCausland & Yes & 2519 & @NBCNews Digital reporter focused on the rural-urban divide.... \\ 
	103 & Katie Couric & Yes & 1746116 & Journalist, podcaster, @SU2C founder, doc filmmaker of @FedU... \\ 
	104 & Monica Alba & Yes & 30034 & @NBCNews White House team. Covered Hillary Clinton on the ... \\ 
	105 & Vicente Fox Quesada & Yes & 1244017 & Presidente de México de 2000 a 2006 y ahora trabajando po... \\ 
	106 & Alex Johnson & Yes & 4371 & News, data and analysis for @NBCNews; data geek; non-celebri... \\ 
	108 & Alex Seitz-Wald & Yes & 50168 & Political reporter for @NBCNews covering Democrats $|$ Tips, ... \\ 
	109 & Anthony Terrell & Yes & 6827 & Emmy Award winning journalist. Political observer. Covered ... \\ 
	110 & Sam Petulla & Yes & 2588 & Editor @cnnpolitics • Usually looking for datasets. You can ... \\ 
	111 & Debra Messing & Yes & 532941 & Actor. Mama. Global Ambassador for HIV/AIDS for PSI. Activis... \\ 
	112 & Corky Siemaszko & Yes & 2538 & Senior Writer at NBC News Digital (former NY Daily News rewr... \\ 
	114 & Zach Haberman & Yes & 3693 & Lead Breaking News Editor, @NBCNews. Previously had other jobs ... \\ 
	115 & NBC Latino & Yes & 67920 & Elevating the conversation around Latino news in the United ... \\ 
	116 & Vivian Salama & Yes & 16020 & White House reporter for @WSJ. Formerly AP Baghdad bureau ... \\ 
	117 & Zeke Miller & Yes & 215054 & White House Reporter @AP. Email: zekejmiller@gmail.com Links... \\ 
	118 & Vaughn Hillyard & Yes & 31464 & On the Road, Meeting Good Folk $|$ NBC News $|$ Arizonan $|$ IG: @... \\ 
	119 & Jonathan Allen & Yes & 44477 & political reporter, @NBCNews Digital $|$ co-author, NYT bestse... \\ 
	121 & HuffPost Politics & Yes & 1428870 & The latest political news from HuffPost's politics team. \\ 
	122 & Nick Akerman & Yes & 14949 & Partner in the AmLaw 100 law firm of Dorsey \& Whitney, Water... \\ 
	123 & CSPAN & Yes & 1915821 & Capitol Hill. The White House. National Politics. \\ 
	124 & John McCormack & Yes & 30688 & Senior writer at The Weekly Standard. \\ 
	125 & Jo Ling Kent & Yes & 32957 & NBC News Correspondent @NBCNightlyNews, @TODAYshow ... \\ 
	126 & PolitiFact & Yes & 628659 & Home of the Truth-O-Meter and independent fact-checking ... \\ 
	127 & Bob Corker & Yes & 10042 & Serving Tennesseans in the U.S. Senate \\ 
	128 & Elise Jordan & Yes & 58884 & Co-host of @WMM\_podcast podcast. @MSNBC/@NBCNews political... \\ 
	129 & Greg Martin & Yes & 1161 & Political Booking Producer at @nbcnews @todayshow \\ 
	130 & Education Nation & Yes & 276468 & Hosted by @NBCNews. Creator of Parent Toolkit \& moderator of... \\ 
	131 & Micah Grimes & Yes & 25948 & Head of Social, @NBCNews \& @MSNBC -- Foreign and domestic ... \\ 
	132 & Jill Lawrence & Yes & 17282 & Commentary editor and columnist @USATODAY. Author of The Art... \\ 
	133 & McKay Coppins & Yes & 131623 & Staff writer at @TheAtlantic. Author of THE WILDERNESS. 'Sor... \\ 
	134 & Emmanuelle Saliba & Yes & 4004 & Head of Social Media Strategy @Euronews $|$ Launched \#THECUBE ... \\ 
	135 & Hasani Gittens & Yes & 3002 & Level 29 Mage. Senior News Ed. @NBCNews. Sheriff of Nattahna... \\ 
	136 & Rebecca Sinderbrand & Yes & 18691 & Now: @NBCNews Senior Washington Editor, visiting lecturer @Y... \\ 
	137 & BuzzFeed Politics & Yes & 121646 & News and updates from the politics team @BuzzFeedNews. \\ 
	138 & Adam Edelman & Yes & 2341 & Political reporter @nbcnews. Wisconsin native, Bestchester r... \\ 
	139 & Ethan Klapper & Yes & 18292 & Journalist (@YahooNews) and \#avgeek. \\ 
	140 & President Trump & No & 24593638 & 45th President of the United States of America, @realDonaldT... \\ 
	141 & Vice President Mike ... & No & 6795022 & Vice President Mike Pence. Husband, father, \& honored to ... \\ 
	142 & Donald J. Trump & No & 56050499 & 45th President of the United States of America \\ 
	143 & Karen Pence & No & 403315 & Educator, mom, wife of @VP Pence. Passionate about art thera... \\ 
	144 & Sarah Sanders & No & 3522219 & @WhiteHouse Press Secretary. Proudly representing @POTUS ... \\ 
	145 & Kellyanne Conway & No & 2506546 & Mom. Patriot. Catholic. Counselor. \\ 
	146 & DRUDGE REPORT & No & 1408129 & The DRUDGE REPORT is a U.S. based news aggregation website ... \\ 
	147 & White House History & No & 104010 & The White House Historical Association is a non-profit organ... \\ 
	148 & The New York Times & No & 42412491 & Where the conversation begins. Follow for breaking news, ... \\ 
	149 & White House Archived & No & 13379715 & This is an archive of an Obama Administration account mainta... \\ 
	150 & Dan Scavino Jr. & No & 324561 & Assistant to President @realDonaldTrump, Director of Social ... \\ 
	151 & Drudge Buzz & No & 104111 & Tracking the buzz made by Americas \#1 newsmaker Matt Drudge.... \\ 
	152 & David Gregory & No & 1749373 & CNN, Georgetown U \\ 
	153 & Hillary Clinton & No & 23643522 & 2016 Democratic Nominee, SecState, Senator, hair icon. Mom, ... \\ 
	154 & CNN Breaking News & No & 54476034 & Breaking news from CNN Digital. Now 54M strong. Check @cnn ... \\ 
	155 & The Cabinet & No & 123597 & The @WhiteHouse Office of Cabinet Affairs. Tweets may be arc... \\ 
	156 & Lester Holt & No & 501427 & Anchor @NBCNightlyNews and @datelinenbc, reporting on the to... \\ 
	157 & John Dickerson & No & 48122 & Co-host CBS This Morning. This account @johndickerson is mos... \\ 
	158 & CNN & No & 40854429 & It’s our job to \#GoThere \& tell the most difficult stories. ... \\ 
	159 & J Earnest (Archived) & No & 1182091 & WH Press Secretary. This is an archive of an Obama Administr... \\ 
	160 & The Washington Post & No & 13117609 & Breaking news, analysis, and opinion. Founded in 1877. Our ... \\ 
	161 & Adam Liptak & No & 61589 & Supreme Court reporter for The New York Times \\ 
	162 & NSC & No & 35905 & National Security Council $|$ Tweets may be archived ... \\ 
	163 & MSNBC video & No & 40669 & Favorite video highlights from @msnbc. \\ 
	164 & Gorsuch Facts & No & 39143 & Judge Gorsuch will be fair to all regardless of their backgr... \\ 
	165 & Greg Stohr & No & 11651 & Supreme Court reporter for Bloomberg News. Baseball dad ... \\ 
	166 & OMB Press & No & 11182 & Office of Management and Budget $|$ Tweets may be archived: ... \\ 
	167 & Richard Engel & No & 288066 & @NBCNews Chief Foreign Correspondent \\ 
	168 & Norah O'Donnell & No & 195549 & Wife, mother of 3, Co-Host @cbsthismorning, \#1 fan of @chefg... \\ 
	169 & Robert Barnes & No & 37361 & Robert Barnes covers the Supreme Court for The Washington Po... \\ 
	170 & Luke Russert & No & 253495 & Sometimes nothing can be a real cool hand. STA'04/BC'08 \\ 
	171 & Stephen Colbert & No & 18269222 & the guy on CBS \\ 
	172 & Mark Sherman & No & 6336 &  \\ 
	173 & U.S. Attorney EDVA & No & 5709 & Led by U.S. Attorney G. Zachary Terwilliger. 130+ attorneys ... \\ 
	174 & The Associated Press & No & 13051963 & News from The Associated Press, and a taste of the great jou... \\ 
	175 & Joe Palazzolo & No & 10938 & WSJ reporter covering legal issues.  joe.palazzolo@wsj.com. ... \\ 
	176 & Natalie Morales & No & 443991 & @TODAYshow Anchor and @AccessOnline Anchor, Author, mom ... \\ 
	177 & Brent Kendall & No & 5451 & WSJ legal affairs reporter in Washington. Native Tar Heel, ... \\ 
	178 & Joan Biskupic & No & 11021 & CNN legal analyst \& Supreme Court biographer; Chicago native... \\ 
	179 & Keith Olbermann & No & 1097676 & Dogs. And sports. And whales (Tom Jumbo-Grumbo on BoJack ... \\ 
	180 & Brian Williams & No & 230947 &  \\ 
	181 & Pope Francis & No & 17791867 & Welcome to the official Twitter page of His Holiness Pope Fr... \\ 
	182 & Ezra Klein & No & 2500383 & Founder and editor-at-large, https://t.co/5gESirESRH. Why ... \\ 
	183 & Anderson Cooper & No & 9967099 & tweets by Anderson Cooper. Anchor @AC360 and correspondent... \\ 
	184 & BBC News (World) & No & 24153838 & News, features and analysis from the World's newsroom. Break... \\ 
	185 & Reince Priebus & No & 935431 & President @MichaelBestLaw; Exclusive Speaker @WashSpeakers; ... \\ 
	186 & Joe Biden & No & 3111675 & Represented Delaware in the Senate for 36 years, 47th Vice P... \\ 
	187 & Department of State & No & 5149607 & Welcome to the official U.S. Department of State Twitter acc... \\ 
	188 & Jim Miklaszewski & No & 1956 & Chief Pentagon Correspondent for NBC News \\ 
	189 & Tony Mauro & No & 20310 & Supreme Court correspondent, https://t.co/571ZdQnzo2 and The... \\ 
	190 & David Axelrod & No & 1113850 & Director, UChicago Institute of Politics. Senior Political ... \\ 
	191 & Nate Silver & No & 3176243 & Editor-in-Chief, @FiveThirtyEight. Author, The Signal and ... \\ 
	192 & George Bush & No & 356042 & A tribute site to the 41st President of the United States of... \\ 
	193 & CBS News & No & 6537991 & Your source for original reporting and trusted news. \\ 
	194 & Jonathan Karl & No & 206986 & ABC News Chief White House Correspondent. insta @jonkarl ... \\ 
	195 & BBC Breaking News & No & 38539186 & Breaking news alerts and updates from the BBC. For news, ... \\ 
	196 & Mitt Romney & No & 1977201 & Senator-elect from Utah. \\ 
	197 & ABC News & No & 13985606 & All the news and information you need to see, curated by the... \\ 
	198 & Deborah Turness & No & 10389 & President of NBC News International \\ 
	199 & The Hill & No & 3162118 & The Hill is the premier source for policy and political news... \\ 
	200 & Ann Curry & No & 1536122 & Journalism is an act of faith in the future. \\ 
	\hline
\end{longtable}
\normalsize

\newpage
\subsection{An aPPR's sample of 200}
\scriptsize
\centering
\begin{longtable}{rllrl}
	\captionsetup{width=.9\textwidth}
	\caption{Top (selected) handles returned by aPPR. The handles with fewer than 200 followers are hidden for privacy considerations.} \label{tab:appr200} \\
	\hline\noalign{\vskip .7mm}  
	& Name & Followed & Followers & Description \\ 
	\hline\noalign{\vskip .7mm}  
	\endfirsthead
	\multicolumn{4}{@{}l}{\ldots continued}\\
	\hline\noalign{\vskip .7mm}  
	& Name & Followed & Followers & Description \\ 
	\hline\noalign{\vskip .7mm}  
	\endhead
	\hline
	%	\multicolumn{4}{r@{}}{continued \ldots}\\
	\endfoot
	\hline
	\endlastfoot
	1 &  & Yes & 198 & Enroll America National Regional Director
	http://t.co/X6jJIE... \\ 
	2 & Jennifer Sizemore & Yes & 386 &  \\ 
	3 & Alissa Swango & Yes & 441 & Director of Digital Programming at @natgeo. All things food.... \\ 
	4 & Making a Difference & Yes & 670 & @NBCNightlyNews' popular feature profiles ordinary people do... \\ 
	5 &  & No &   1 &  \\ 
	6 &  & No &   3 &  \\ 
	7 & Greg Martin & Yes & 1161 & Political Booking Producer at @nbcnews @todayshow \\ 
	8 &  & No &   1 & I am Area Man. I pwn your news feed. \\ 
	9 &  & No &   2 &  \\ 
	10 & NBC Field Notes & Yes & 1390 & NBC News correspondents and http://t.co/1eSopOQt8s reporters... \\ 
	11 &  & No &   2 &  \\ 
	12 &  & No &   2 &  \\ 
	13 &  & No &   1 &  \\ 
	14 &  & No &   1 &  \\ 
	15 &  & No &   1 &  \\ 
	16 &  & No &   1 &  \\ 
	17 &  & No &   3 & yet another activist twitter, fighting all those fun -isms ... \\ 
	18 &  & No &   4 &  \\ 
	19 &  & No &   7 & Dianne Kube is an Author with a passion, for family, holiday... \\ 
	20 &  & No &   7 &  \\ 
	21 & Adam Edelman & Yes & 2341 & Political reporter @nbcnews. Wisconsin native, Bestchester ... \\ 
	22 & Phil McCausland & Yes & 2519 & @NBCNews Digital reporter focused on the rural-urban divide.... \\ 
	23 & Corky Siemaszko & Yes & 2538 & Senior Writer at NBC News Digital (former NY Daily News rewr... \\ 
	24 & Sam Petulla & Yes & 2588 & Editor @cnnpolitics • Usually looking for datasets. You can ... \\ 
	25 & Ken Strickland & Yes & 2693 & NBC News Washington Bureau Chief \\ 
	26 &  & No &   7 &  \\ 
	27 & Elyse PG & Yes & 2697 & White House producer @nbcnews $|$@USCAnnenberg alum $|$ LA kid ... \\ 
	28 &  & No &   2 & Change your thoughts \& you change your world. -Norman Vincen... \\ 
	29 &  & No &   4 &  \\ 
	30 &  & No &  13 &  \\ 
	31 &  & No &   6 &  \\ 
	32 &  & No & 154 & We distribute new, never-worn clothing and merchandise to ... \\ 
	33 &  & No &  10 &  \\ 
	34 & Hasani Gittens & Yes & 3002 & Level 29 Mage. Senior News Ed. @NBCNews. Sheriff of Nattahna... \\ 
	35 &  & No &   1 &  \\ 
	36 & Scott Foster & Yes & 3464 & Senior Producer, Washington @NBCNEWS @TODAYshow \\ 
	37 &  & No &   2 &  \\ 
	38 &  & No &  13 &  \\ 
	39 &  & No &   5 &  \\ 
	40 & Zach Haberman & Yes & 3693 & Lead Breaking News Editor, @NBCNews. Previously had other jobs... \\ 
	41 &  & No &   3 & just like to stay in the know :) 
	
	just like to stay in the ... \\ 
	42 &  & No &   2 &  \\ 
	43 &  & No &   5 &  \\ 
	44 &  & No &   7 &  \\ 
	45 &  & No &   1 &  \\ 
	46 & Emmanuelle Saliba & Yes & 4004 & Head of Social Media Strategy @Euronews $|$ Launched \#THECUBE ... \\ 
	47 &  & No &   2 &  \\ 
	48 & Alex Johnson & Yes & 4371 & News, data and analysis for @NBCNews; data geek; non-celebri... \\ 
	49 &  & No &   8 &  \\ 
	50 & Savannah Sellers & Yes & 4637 & News junkie. Host of NBC's "Stay Tuned" on Snapchat. Storyte... \\ 
	51 &  & No &  21 &  \\ 
	52 &  & No &   6 & Anti-money laundering professional with federal law enforcem... \\ 
	53 &  & No &  15 &  \\ 
	54 & Shaquille Brewster & Yes & 5362 & @NBCNews Producer/Politics $|$ @HowardU Alum$|$ Journalist $|$ Pol... \\ 
	55 &  & No &   2 & Just another DIY, punk kid from the black land dirt of NEPA'... \\ 
	56 &  & No &  18 & Cdr Bob Mehal, Public Affairs Office, Office of the Secretar... \\ 
	57 &  & No &   5 &  \\ 
	58 &  & No &   4 &  \\ 
	59 &  & No &   8 &  \\ 
	60 &  & No &  10 &  \\ 
	61 &  & No &   2 &  \\ 
	62 & Joey Scarborough & Yes & 6277 & NBC News Social Media Editor. New York Daily News Alum. RTs ... \\ 
	63 &  & No &   5 &  \\ 
	64 &  & No &   1 &  \\ 
	65 & Voices United & No & 310 & Voices United is a non profit educational organization for ... \\ 
	66 & Jane C. Timm & Yes & 6478 & @nbcnews political reporter and fact checker. More fun than ... \\ 
	67 & Social Headlines & No & 344 & Daily roundup of top social media and networking stories. \\ 
	68 & James Miklaszewski & No & 337 & Writer, Photographer, Editor, Director, Producer, Newshound ... \\ 
	69 &  & No &  12 &  \\ 
	70 & Anthony Terrell & Yes & 6827 & Emmy Award winning journalist. Political observer. Covered ... \\ 
	71 &  & No &  10 &  \\ 
	72 &  & No &   8 &  \\ 
	73 &  & No &   8 & I'm the real Charlie Sheen. If you are a Winner, stick aroun... \\ 
	74 &  & No &   9 & Quotes from a nice jewish mom who's just tryna get some nice... \\ 
	75 &  & No &   2 &  \\ 
	76 &  & No &   4 &  \\ 
	77 &  & No &   6 & "Rawr!" \\ 
	78 & NBC News Videos & Yes & 7838 & The latest video from http://t.co/xPyvMOTEF6 \\ 
	79 &  & No &   9 &  \\ 
	80 &  & No &   4 &  \\ 
	81 & Libby Leist & Yes & 7946 & Executive Producer @todayshow \\ 
	82 &  & No &   8 &  \\ 
	83 &  & No &   2 & I'm running for President of the United States of America. \\ 
	84 &  & No &  35 &  \\ 
	85 &  & No &   8 &  \\ 
	86 &  & No &   2 &  \\ 
	87 &  & No &   2 &  \\ 
	88 &  & No &  16 &  \\ 
	89 &  & No &   4 &  \\ 
	90 &  & No &   5 & Happy princess \\ 
	91 &  & No &   1 &  \\ 
	92 &  & No &   4 &  \\ 
	93 & Courtney Kube & Yes & 9494 & NBC News National Security \& Military Reporter. Links and ... \\ 
	94 &  & No &   5 &  \\ 
	95 &  & No &   5 &  \\ 
	96 &  & No & 169 &  \\ 
	97 &  & No &   5 &  \\ 
	98 &  & No &   2 &  \\ 
	99 & Vets Helping Heroes & No & 449 & Raising funds to sponsor the training of assistance dogs for... \\ 
	100 &  & No &  12 &  \\ 
	101 &  & No &   4 &  \\ 
	102 &  & No &   8 &  \\ 
	103 & Bob Corker & Yes & 10042 & Serving Tennesseans in the U.S. Senate \\ 
	104 &  & No &   4 &  \\ 
	105 &  & No &   2 &  \\ 
	106 &  & No &  11 & Spécialiste développement produit et marketing des produits ... \\ 
	107 &  & No &   4 &  \\ 
	108 &  & No &   8 & Not your average Grandma \\ 
	109 &  & No &  29 &  \\ 
	110 &  & No &   2 &  \\ 
	111 &  & No &   6 &  \\ 
	112 & Kailani Koenig & Yes & 11416 & Producer with @MSNBC \& @NBCNews. Team @MeetThePress alum. 20... \\ 
	113 &  & No &  13 &  \\ 
	114 &  & No &  14 &  \\ 
	115 & Gloria Turkin & No & 204 & I am honest and straight to the point.  Retired Civilian Fed... \\ 
	116 &  & No &   7 &  \\ 
	117 &  & No &  28 & An unconventional appreciation account for @DeadlineWH host,... \\ 
	118 &  & No &   6 &  \\ 
	119 &  & No &  10 & Live like Bones \\ 
	120 &  & No &   2 &  \\ 
	121 & Marianna Sotomayor & Yes & 11965 & Running around Capitol Hill for @NBCNews. Covers politics ... \\ 
	122 & NBC News THINK & Yes & 12017 & THINK is NBC News' home for fresh opinion, sharp analysis ... \\ 
	123 &  & No &   1 &  \\ 
	124 &  & No &  15 &  \\ 
	125 &  & No &   2 &  \\ 
	126 &  & No &   3 & Photographer, artist, newsletter editor, designer, writer... \\ 
	127 &  & No &  18 &  \\ 
	128 &  & No &   5 &  \\ 
	129 &  & No &   5 & The Quest for the Denim Jacket \\ 
	130 &  & No &   9 &  \\ 
	131 &  & No &  15 &  \\ 
	132 &  & No &  16 & Author of A Traumatic History: A Unique Look at PTSD and ... \\ 
	133 &  & No &   5 &  \\ 
	134 &  & No &   7 &  \\ 
	135 &  & No &   5 &  \\ 
	136 &  & No &   7 &  \\ 
	137 &  & No &   7 &  \\ 
	138 & Beth Fouhy & Yes & 13684 & Senior editor, politics, NBC News and MSNBC \\ 
	139 & Jim Miklaszewski & Yes & 14196 & Jim Miklaszewski is Chief Pentagon Correspondent for NBC New... \\ 
	140 & Miguel Almaguer & Yes & 14082 & Prolific coffee drinker. Chronic under sleeper. Raging road ... \\ 
	141 &  & No &  16 &  \\ 
	142 &  & No &   4 &  \\ 
	143 &  & No &   3 &  \\ 
	144 &  & No &  19 & The Northeast Tennessee Victory program will create a grassr... \\ 
	145 &  & No &  17 &  \\ 
	146 &  & No &  14 & Just a dude with a crappy job. \\ 
	147 &  & No &   5 &  \\ 
	148 & Nick Akerman & Yes & 14949 & Partner in the AmLaw 100 law firm of Dorsey \& Whitney, Water... \\ 
	149 &  & No &   5 &  \\ 
	150 &  & No &  59 &  \\ 
	151 &  & No &   8 &  \\ 
	152 &  & No &   8 &  \\ 
	153 &  & No &   4 & Grad student at JHU \\ 
	154 &  & No &   6 &  \\ 
	155 & Marist Poll & Yes & 16030 & Founded in 1978, MIPO is home to the Marist Poll and regular... \\ 
	156 &  & No &  10 & Sharing the best news from the e-Discovery world. Tweets by ... \\ 
	157 &  & No &   7 &  \\ 
	158 &  & No &   4 &  \\ 
	159 &  & No &  11 & Workforce and Economic Development Consultant; Employment ... \\ 
	160 &  & No &   7 & We're the workers of the @villagevoice, trying to get a fair... \\ 
	161 & Vivian Salama & Yes & 16020 & White House reporter for @WSJ. Formerly AP Baghdad bureau ... \\ 
	162 &  & No &   8 &  \\ 
	163 &  & No &  24 &  \\ 
	164 &  & No &  19 & I should be the real trix rabbit \\ 
	165 &  & No &   4 &  \\ 
	166 &  & No &  24 & Curious food lover always looking for the best food everywhe... \\ 
	167 & Andrew Rafferty & Yes & 16567 & Senior political editor for @newsy Before that @NBCNews. And... \\ 
	168 &  & No &   5 &  \\ 
	169 &  & No &  36 &  \\ 
	170 & Tom Costello & Yes & 17268 & NBC News Correspondent covering Aviation, Transportation, ... \\ 
	171 &  & No &  68 & Wanderlust journalist  ... A man is but the product of ... \\ 
	172 &  & No &   6 & Bibliophile, Animal lover, Realtor, Volunteer, \\ 
	173 &  & No &  25 &  \\ 
	174 &  & No &  70 & Director or Product Marketing @ Microsoft. 
	My tweets. My li... \\ 
	175 &  & No &   3 & Experienced (and successful) grantwriter, author, wife, moth... \\ 
	176 &  & No &   5 &  \\ 
	177 & Jill Lawrence & Yes & 17282 & Commentary editor and columnist @USATODAY. Author of The Art... \\ 
	178 &  & No &   8 & Howard McKinnon is Town Manager of Havana, Florida. \\ 
	179 &  & No & 136 &  \\ 
	180 &  & No &  59 &  \\ 
	181 &  & No &   8 &  \\ 
	182 &  & No &  12 &  \\ 
	183 &  & No &   7 &  \\ 
	184 &  & No &   8 &  \\ 
	185 &  & No &   8 &  \\ 
	186 &  & No &  15 & Old and getting older. \\ 
	187 &  & No &  15 &  \\ 
	188 &  & No &  15 & Married \\ 
	189 &  & No &   4 &  \\ 
	190 &  & No &   2 & Director of the Essex, Connecticut Public Library aka "Your ... \\ 
	191 & Ethan Klapper & Yes & 18292 & Journalist (@YahooNews) and \#avgeek. \\ 
	192 &  & No &  38 &  \\ 
	193 &  & No &   5 &  \\ 
	194 & Rebecca Sinderbrand & Yes & 18691 & Now: @NBCNews Senior Washington Editor, visiting lecturer ... \\ 
	195 &  & No &   3 &  \\ 
	196 &  & No &  11 & Tireless trend researcher. \\ 
	197 &  & No &   5 &  \\ 
	198 &  & No &   5 &  \\ 
	199 &  & No &  11 &  \\ 
	200 &  & No &   3 &  \\ 
	\hline
\end{longtable}
\normalsize

\newpage
\subsection{An rPPR's sample of 200}
\scriptsize
\centering
\begin{longtable}{rllrl}
	\captionsetup{width=.9\textwidth}
	\caption{Top (selected) handles returned by rPPR. The handles with fewer than 200 followers are hidden for privacy considerations.} \label{tab:rppr200} \\
	\hline\noalign{\vskip .7mm}  
	& Name & Followed & Followers & Description \\ 
	\hline\noalign{\vskip .7mm}  
	\endfirsthead
	\multicolumn{4}{@{}l}{\ldots continued}\\
	\hline\noalign{\vskip .7mm}  
	& Name & Followed & Followers & Description \\ 
	\hline\noalign{\vskip .7mm}  
	\endhead
	\hline
	%	\multicolumn{4}{r@{}}{continued \ldots}\\
	\endfoot
	\hline
	\endlastfoot
	1 &  & Yes & 198 & Enroll America National Regional Director
	http://t.co/X6jJIE... \\ 
	2 & Jennifer Sizemore & Yes & 386 &  \\ 
	3 & Alissa Swango & Yes & 441 & Director of Digital Programming at @natgeo. All things food.... \\ 
	4 & Making a Difference & Yes & 670 & @NBCNightlyNews' popular feature profiles ordinary people do... \\ 
	5 & Greg Martin & Yes & 1161 & Political Booking Producer at @nbcnews @todayshow \\ 
	6 & NBC Field Notes & Yes & 1390 & NBC News correspondents and http://t.co/1eSopOQt8s reporters... \\ 
	7 & Adam Edelman & Yes & 2341 & Political reporter @nbcnews. Wisconsin native, Bestchester ... \\ 
	8 & Phil McCausland & Yes & 2519 & @NBCNews Digital reporter focused on the rural-urban divide.... \\ 
	9 & Corky Siemaszko & Yes & 2538 & Senior Writer at NBC News Digital (former NY Daily News ... \\ 
	10 & Sam Petulla & Yes & 2588 & Editor @cnnpolitics • Usually looking for datasets. You can ... \\ 
	11 & Ken Strickland & Yes & 2693 & NBC News Washington Bureau Chief \\ 
	12 & Elyse PG & Yes & 2697 & White House producer @nbcnews $|$@USCAnnenberg alum $|$ LA kid ... \\ 
	13 & Hasani Gittens & Yes & 3002 & Level 29 Mage. Senior News Ed. @NBCNews. Sheriff of Nattahna... \\ 
	14 & Scott Foster & Yes & 3464 & Senior Producer, Washington @NBCNEWS @TODAYshow \\ 
	15 & Zach Haberman & Yes & 3693 & Lead Breaking News Editor, @NBCNews. Previously had other jobs ... \\ 
	16 & Emmanuelle Saliba & Yes & 4004 & Head of Social Media Strategy @Euronews $|$ Launched \#THECUBE ... \\ 
	17 & Alex Johnson & Yes & 4371 & News, data and analysis for @NBCNews; data geek; non-celebri... \\ 
	18 & Savannah Sellers & Yes & 4637 & News junkie. Host of NBC's "Stay Tuned" on Snapchat. Storyte... \\ 
	19 &  & No & 154 & We distribute new, never-worn clothing and merchandise to ... \\ 
	20 & Shaquille Brewster & Yes & 5362 & @NBCNews Producer/Politics $|$ @HowardU Alum$|$ Journalist $|$ Pol... \\ 
	21 & Joey Scarborough & Yes & 6277 & NBC News Social Media Editor. New York Daily News Alum. RTs ... \\ 
	22 & Jane C. Timm & Yes & 6478 & @nbcnews political reporter and fact checker. More fun than ... \\ 
	23 & Anthony Terrell & Yes & 6827 & Emmy Award winning journalist. Political observer. Covered ... \\ 
	24 & NBC News Videos & Yes & 7838 & The latest video from http://t.co/xPyvMOTEF6 \\ 
	25 & Libby Leist & Yes & 7946 & Executive Producer @todayshow \\ 
	26 & Voices United & No & 310 & Voices United is a non profit educational organization for ... \\ 
	27 & Social Headlines & No & 344 & Daily roundup of top social media and networking stories. \\ 
	28 & James Miklaszewski & No & 337 & Writer, Photographer, Editor, Director, Producer, Newshound ... \\ 
	29 & Courtney Kube & Yes & 9494 & NBC News National Security \& Military Reporter. Links and ... \\ 
	30 & Bob Corker & Yes & 10042 & Serving Tennesseans in the U.S. Senate \\ 
	31 & Kailani Koenig & Yes & 11416 & Producer with @MSNBC \& @NBCNews. Team @MeetThePress alum... \\ 
	32 & Vets Helping Heroes & No & 449 & Raising funds to sponsor the training of assistance dogs for... \\ 
	33 & Marianna Sotomayor & Yes & 11965 & Running around Capitol Hill for @NBCNews. Covers politics ... \\ 
	34 & NBC News THINK & Yes & 12017 & THINK is NBC News' home for fresh opinion, sharp analysis ... \\ 
	35 & Beth Fouhy & Yes & 13684 & Senior editor, politics, NBC News and MSNBC \\ 
	36 & Jim Miklaszewski & Yes & 14196 & Jim Miklaszewski is Chief Pentagon Correspondent for NBC New... \\ 
	37 & Miguel Almaguer & Yes & 14082 & Prolific coffee drinker. Chronic under sleeper. Raging road ... \\ 
	38 &  & No & 169 &  \\ 
	39 & Nick Akerman & Yes & 14949 & Partner in the AmLaw 100 law firm of Dorsey \& Whitney, Water... \\ 
	40 & Marist Poll & Yes & 16030 & Founded in 1978, MIPO is home to the Marist Poll and regular... \\ 
	41 & Vivian Salama & Yes & 16020 & White House reporter for @WSJ. Formerly AP Baghdad bureau ... \\ 
	42 & Andrew Rafferty & Yes & 16567 & Senior political editor for @newsy Before that @NBCNews. And... \\ 
	43 & Tom Costello & Yes & 17268 & NBC News Correspondent covering Aviation, Transportation, ... \\ 
	44 & Gloria Turkin & No & 204 & I am honest and straight to the point.  Retired Civilian Fed... \\ 
	45 & Jill Lawrence & Yes & 17282 & Commentary editor and columnist @USATODAY. Author of The Art... \\ 
	46 & Ethan Klapper & Yes & 18292 & Journalist (@YahooNews) and \#avgeek. \\ 
	47 & Rebecca Sinderbrand & Yes & 18691 & Now: @NBCNews Senior Washington Editor, visiting lecturer ... \\ 
	48 & Leigh Ann Caldwell & Yes & 20714 & NBC Capitol Hill reporter. Formerly at CNN and public radio.... \\ 
	49 & Morgan Radford & Yes & 20967 & @NBCnews Correspondent: @TODAYShow/@NBCNightlyNews/... \\ 
	50 & GuardAnglSolPet & No & 927 & Supporting the Military, our Veterans and their Beloved Pets... \\ 
	51 & adam nagourney & Yes & 25307 & LA Bureau Chief for The New York Times. Story ideas welcome ... \\ 
	52 &  & No &  13 &  \\ 
	53 & Micah Grimes & Yes & 25948 & Head of Social, @NBCNews \& @MSNBC -- Foreign and domestic ... \\ 
	54 & Perry Bacon Jr. & Yes & 26853 & I write about government (mostly federal, often state, occas... \\ 
	55 &  & No &  21 &  \\ 
	56 & Alex Moe & Yes & 28245 & @NBCNews Capitol Hill Producer + Off-Air Reporter; '12 \& '16... \\ 
	57 & Ray Farmer & No & 603 & NBC News staff photographer. Colorado based \\ 
	58 & Alex Witt & Yes & 28126 & Weekend host on @MSNBC (9am, noon \& 1pm). Tigger’s mom + ... \\ 
	59 & Monica Alba & Yes & 30034 & @NBCNews White House team. Covered Hillary Clinton on the ... \\ 
	60 & Jim Miklaszewski & No & 1956 & Chief Pentagon Correspondent for NBC News \\ 
	61 &  & No &  13 &  \\ 
	62 & John McCormack & Yes & 30688 & Senior writer at The Weekly Standard. \\ 
	63 &  & No & 136 &  \\ 
	64 & Vaughn Hillyard & Yes & 31464 & On the Road, Meeting Good Folk $|$ NBC News $|$ Arizonan $|$ IG... \\ 
	65 &  & No &  35 &  \\ 
	66 & Madelyn Monteath & No & 257 & NFL Network, wife, mother.. not necessarily in that order. \\ 
	67 & Thomas DeFrank & No & 593 & Veteran White House correspondent (every prez since LBJ) and... \\ 
	68 & Jo Ling Kent & Yes & 32957 & NBC News Correspondent @NBCNightlyNews, @TODAYshow... \\ 
	69 &  & No &  10 &  \\ 
	70 & Carrie Dann & Yes & 37119 & .@NBCNews / @NBCPolitics. RTs not endorsements. \\ 
	71 &  & No &   3 &  \\ 
	72 &  & No &   7 & Dianne Kube is an Author with a passion, for family, holiday... \\ 
	73 &  & No &  18 & Cdr Bob Mehal, Public Affairs Office, Office of the Secretar... \\ 
	74 &  & No &   7 &  \\ 
	75 & Mike Memoli & Yes & 39693 & National Political Reporter @nbcnews; @latimes alum mike dot... \\ 
	76 & John Boxley & No & 1201 & NBC News Producer...Living life one day at a time. \\ 
	77 &  & No &  15 &  \\ 
	78 & Tom Winter & Yes & 40777 & NBC News Investigations reporter based in New York focusing ... \\ 
	79 &  & No &   7 &  \\ 
	80 & Garrett Haake & Yes & 40714 & Correspondent @msnbc • Taller than I look on TV • Long-suffe... \\ 
	81 &  & No &  59 &  \\ 
	82 &  & No &  70 & Director or Product Marketing @ Microsoft. 
	My tweets. My li... \\ 
	83 &  & No &  68 & Wanderlust journalist  ... A man is but the product of ... \\ 
	84 &  & No & 158 & Marketing nerd at Cornerstone OnDemand. \\ 
	85 & Jonathan Allen & Yes & 44477 & political reporter, @NBCNews Digital $|$ co-author, NYT bestse... \\ 
	86 & NBC News First Read & Yes & 53847 & The first place for news and analysis from the @NBCNews Poli... \\ 
	87 &  & No &  92 & Smokin Meat \& Raising Kids That Raise Hell. 
	Live Every Day ... \\ 
	88 & Sam Singal & No & 1016 & Executive Producer, @nbcnightlynews \\ 
	89 &  & No &  29 &  \\ 
	90 &  & No &  59 &  \\ 
	91 & Carol Lee & Yes & 51240 & Reporter for NBC News, former WSJ \& POLITICO, Hudson's mom, ... \\ 
	92 & Alex Seitz-Wald & Yes & 50168 & Political reporter for @NBCNews covering Democrats $|$ Tips, ... \\ 
	93 &  & No &  28 & An unconventional appreciation account for @DeadlineWH host,... \\ 
	94 &  & No & 188 & I am a Senior Video Producer at NBCNews.com, as well a few ... \\ 
	95 & HailYeah63 & No & 483 & \#RedskinsTweetTeam \#HTTR \\ 
	96 & Eva's Heroes & No & 2067 & To enrich the lives of individuals with intellectual special... \\ 
	97 &  & No &   6 &  \\ 
	98 & Chi Omega & No & 278 & Chi Omega Chapter at CU Boulder \\ 
	99 & Aarne Heikkila & No & 1210 & Coordinating Producer for @JacobSoboroff @MSNBC \& @NBCNews, ... \\ 
	100 & Dani & No & 447 & only here to talk shit \& complain \\ 
	101 & Frank Thorp V & Yes & 58152 & Producer \& Off-Air Reporter covering Congress at @NBCNews. ... \\ 
	102 & Youcef & No & 228 &  \\ 
	103 &  & No &  76 & Pentagon correspondent http://t.co/Qo0w3AnYOb \\ 
	104 & project c.u.r.e. & No & 2260 & delivering donated medical supplies and equipment to develop... \\ 
	105 &  & No & 117 &  \\ 
	106 &  & No &   4 &  \\ 
	107 & Elise Jordan & Yes & 58884 & Co-host of @WMM\_podcast podcast. @MSNBC/@NBCNews political ... \\ 
	108 & Patrick Burkey & No & 2313 & Executive Producer, @NBCNews, @MSNBC. Former EP, @NBCNightly... \\ 
	109 & bill hartnett & No & 2500 & Stripmining the internets for remarkable ephemera Social Mus... \\ 
	110 &  & No &   7 &  \\ 
	111 &  & No &   8 &  \\ 
	112 &  & No &  16 &  \\ 
	113 &  & No &  36 &  \\ 
	114 & Ron Fournier & Yes & 64356 & President: Truscott Rossman. Best-seller https://t.co/09CdTN... \\ 
	115 &  & No &  12 &  \\ 
	116 & Pete Williams & Yes & 70062 & NBC News Justice Correspondent. Covers US Supreme Court, ... \\ 
	117 &  & No &  65 & Wife, Mother. Litigation Specialist. Designer. Activist for ... \\ 
	118 &  & No &  10 &  \\ 
	119 & Heidi Przybyla & Yes & 66489 & NBC News, n'tl political reporter "Prezbella"    Heidi.Przyb... \\ 
	120 & NBC Latino & Yes & 67920 & Elevating the conversation around Latino news in the United ... \\ 
	121 &  & No & 189 &  \\ 
	122 &  & No &  38 &  \\ 
	123 & Chris Jansing & Yes & 72375 & @msnbc Senior National Correspondent, intrepid traveler and ... \\ 
	124 &  & No &   1 &  \\ 
	125 & Brent Kendall & No & 5451 & WSJ legal affairs reporter in Washington. Native Tar Heel, ... \\ 
	126 &  & No &   2 &  \\ 
	127 & U.S. Attorney EDVA & No & 5709 & Led by U.S. Attorney G. Zachary Terwilliger. 130+ attorneys ... \\ 
	128 &  & No &  74 & Life long learner Paralegal Arts \& Culture Black Community ... \\ 
	129 &  & No &   2 &  \\ 
	130 &  & No &   2 &  \\ 
	131 & Tammy Fine & No & 1584 & Corporate Communications by day. Teen Negotiator by night... \\ 
	132 & Bonnie Optekman & No & 2242 & Digital media strategist. Voice over artist. 
	News junkie, ... \\ 
	133 &  & No &   3 & yet another activist twitter, fighting all those fun -isms ... \\ 
	134 &  & No & 109 & Communicator through an eclectic lens of \#healthcare \#hospit... \\ 
	135 &  & No &  88 & Earth and Physical Science Teacher, Mom of 2, Self-declared ... \\ 
	136 & Amy Lynn-Cramer & No & 1590 & Mommy to 2 amazing kiddos, Wife to @tecramer AND Corporate ... \\ 
	137 &  & No &   5 &  \\ 
	138 & prodjay & No & 304 & NBC News producer \\ 
	139 &  & No & 109 &  \\ 
	140 &  & No &  10 &  \\ 
	141 & Meghann Ludemann & No & 216 & Stay Tuned Associate Producer @NBCNews on @Snapchat \\ 
	142 &  & No &   4 &  \\ 
	143 & Ali Vitali & Yes & 78839 & @NBCnews Political Reporter. Covered Trump campaign, WH + ... \\ 
	144 & Doug Adams & No & 1902 & NBC Sr. Political desk editor; Father; Baseball fan; Lover ... \\ 
	145 &  & No &  99 &  \\ 
	146 & Mark Sherman & No & 6336 &  \\ 
	147 & Robin Gradison & No & 272 & NBC News DC Deputy Bureau Chief, politics junkie, road run... \\ 
	148 & NBC News Signal & Yes & 83715 & A new streaming news channel from @NBCNews. Catch us Thursda... \\ 
	149 &  & No &  45 & Professor at Columbia Journaism School. \\ 
	150 &  & No &   8 &  \\ 
	151 &  & No &  18 &  \\ 
	152 & Stacey Klein & No & 914 & @NBCNews White House Producer, Born and raised in BalDimore ... \\ 
	153 &  & No &  97 &  \\ 
	154 & Rich Latour & No & 1883 & From Broadcast News to Digital Storytelling. Dad of 3 Boys ... \\ 
	155 & Domenico Montanaro & Yes & 83999 & "Congress shall make no law respecting an est. of religion, ... \\ 
	156 &  & No &   5 &  \\ 
	157 &  & No &   6 & Anti-money laundering professional with federal law enforcem... \\ 
	158 &  & No &  24 &  \\ 
	159 &  & No &  24 & Curious food lover always looking for the best food everywhe... \\ 
	160 &  & No &  25 &  \\ 
	161 &  & No & 161 & 1 of 12 U.S.-led PRTs. Improving Panjshir's stability, incre... \\ 
	162 & Anna Matthews & No & 230 &  \\ 
	163 &  & No &  46 &  \\ 
	164 & POLITICO 45 & Yes & 88470 & A daily diary of the 45th president of the United States. \\ 
	165 &  & No &   9 & Quotes from a nice jewish mom who's just tryna get some nice... \\ 
	166 &  & No &  19 & The Northeast Tennessee Victory program will create a grassr... \\ 
	167 &  & No & 130 & @NBCNews Producer in London, Links \& retweets aren't endorse... \\ 
	168 & samgo & No & 1161 & Executive Producer, @MSNBC Digital \\ 
	169 & Megan Stark & No & 263 & over served Coloradan \\ 
	170 &  & No &  70 &  \\ 
	171 & Katie Yu & No & 484 & NBC News Senior Producer / formerly @Nightline, @NBCNightlyN... \\ 
	172 & Mark Murray & Yes & 97571 & Mark Murray is the senior political editor for NBC News, as ... \\ 
	173 & Kevin Thurm & No & 1946 & Chief Executive Officer @ClintonFdn. Dad, sports fan \& trivi... \\ 
	174 &  & No & 122 & Mom, wife, grandma, Airedale Terrier lover \\ 
	175 &  & No & 173 & Providing conservatives with breaking news, opinion, blogs ... \\ 
	176 &  & No &  14 &  \\ 
	177 &  & No &  12 &  \\ 
	178 &  & No & 137 & Celebrate the simple loveliness of every day things, scarves... \\ 
	179 &  & No &  15 &  \\ 
	180 &  & No &   8 &  \\ 
	181 &  & No &  16 & Author of A Traumatic History: A Unique Look at PTSD and ... \\ 
	182 &  & No &   9 &  \\ 
	183 &  & No &   8 & I'm the real Charlie Sheen. If you are a Winner, stick aroun... \\ 
	184 & David Espo & No & 1308 & Dad, AP Special Correspondent, Dad, Red Sox fan, Dad. \\ 
	185 &  & No &  40 &  \\ 
	186 & matt toder & No & 253 & supervising producer, documentaries/verticals at NBC News ... \\ 
	187 &  & No &  13 &  \\ 
	188 & Benjy Sarlin & Yes & 100896 & Political reporter for @NBCNews. I cover elections and their... \\ 
	189 &  & No &  15 &  \\ 
	190 &  & No &  29 &  \\ 
	191 &  & No &  17 &  \\ 
	192 &  & No &  28 & Director of the Marist Poll, poll obsessed,  epistemophilic,... \\ 
	193 &  & No &  16 &  \\ 
	194 &  & No & 144 & Vice President, Standards @NBCNews \\ 
	195 &  & No & 108 & trey.daly@gmail.com \\ 
	196 & Daniella Mayer & No & 314 & DON'T forget the A. I think everything about North Korea is ... \\ 
	197 & Bill Hatfield & No & 635 & Washington news producer for NBC News TODAY; politics/histor... \\ 
	198 &  & No &  19 & I should be the real trix rabbit \\ 
	199 &  & No &  50 &  \\ 
	200 & Phil Griffin & No & 231 &  \\ 
\end{longtable}
\normalsize

\newpage
\bibliographystyle{plainnat}
\bibliography{pprsbm}

\end{document}